\newcommand{\onote}[1]{\footnote{{\bf \color{blue}Ryan}: {#1}}}
\newcommand{\rnote}[1]{\footnote{{\bf \color{red}Rocco}: {#1}}}
\newcommand{\channel}{\mathcal{C}}
\newcommand{\gap}{\epsilon}
\newcommand{\GM}{\mathrm{GM}}
\newtheorem*{rep@theorem}{\rep@title}
\newcommand{\newreptheorem}[2]{
\newenvironment{rep#1}[1]{
 \def\rep@title{#2 \ref{##1}}
 \begin{rep@theorem}\itshape}
 {\end{rep@theorem}}}
\theoremstyle{plain}
\def\colorful{1}
\newcommand{\violet}[1]{{\color{violet}{#1}}}
\newcommand{\blue}[1]{{{\color{blue}#1}}}
\newcommand{\red}[1]{{ {#1}}}
\newcommand{\gray}[1]{{\color{gray}{#1}}}
\newcommand{\violet}[1]{{{#1}}}
\newcommand{\blue}[1]{{{#1}}}
\newcommand{\red}[1]{{{#1}}}
\newcommand{\gray}[1]{{{#1}}}
\newtheorem*{theorem*}{Theorem}
\newtheorem*{noclaim*}{Claim}
\newcommand{\Del}{\mathrm{Del}}
\newcommand{\Avg}{\mathop{\mathrm{avg}}}
\newcommand{\Bin}{\mathrm{Bin}}
\newcommand{\pparagraph}[1]{{\medskip \noindent {\bf #1}}}
\newcommand{\chill}{\mathrm{frac}}
\newcommand{\ideal}{\mathrm{ideal}}
\newcommand{\Lit}{\mathrm{Littlewood}}
\newcommand{\bounded}{\mathrm{bounded}}
\begin{document}

\title{
Optimal mean-based algorithms for trace reconstruction
}
\author{
Anindya De\thanks{Supported by start-up grant from Northwestern University.}\\
Northwestern University\\
{\tt anindya@eecs.northwestern.edu}
\and
Ryan O'Donnell\thanks{Supported by NSF grant CCF-1618679.}\\
Carnegie Mellon University\\
{\tt odonnell@cs.cmu.edu}
\and
\and Rocco A.~Servedio\thanks{Supported by NSF grants CCF-1420349 and CCF-1563155.}\\
Columbia University \\
{\tt rocco@cs.columbia.edu}
}

\begin{titlepage}

\maketitle

  \begin{abstract}

In the \emph{(deletion-channel) trace reconstruction problem}, there is an unknown $n$-bit \emph{source string} $x$. An algorithm is given access to independent \emph{traces} of~$x$, where a trace is formed by deleting each bit of~$x$ independently with probability~$\delta$. 
The goal of the algorithm is to recover~$x$ exactly (with high probability), while minimizing samples (number of traces) and running time.

Previously, the best known algorithm for the trace reconstruction problem was due to Holenstein~et~al.~\cite{HMPW08}; it uses $\exp(\wt{O}(n^{1/2}))$ samples and running time for any fixed $0 < \delta < 1$.  It is also what we call a ``mean-based algorithm'', meaning that it only uses the empirical means of the individual bits of the traces.  Holenstein~et~al.~also gave a lower bound, showing that any mean-based algorithm must use at least $n^{\wt{\Omega}(\log n)}$ samples.

In this paper we improve both of these results, obtaining matching upper and lower bounds for mean-based trace reconstruction.  For any constant deletion rate $0 < \delta < 1$, we give a mean-based  algorithm that uses  $\exp(O(n^{1/3}))$ time and traces; we also prove that any mean-based  algorithm must use at least $\exp(\Omega(n^{1/3}))$ traces.
In fact, we obtain matching upper and lower bounds even for $\delta$ subconstant and $\rho \coloneqq 1-\delta$ subconstant: when $(\log^3 n)/n \ll \delta \leq 1/2$ the  bound is $\exp(-\Theta(\delta n)^{1/3})$, and when $1/\sqrt{n} \ll \rho \leq 1/2$ the bound is $\exp(-\Theta(n/\rho)^{1/3})$.

Our proofs involve estimates for the maxima of Littlewood polynomials on complex disks.  We show that these techniques can also be used to perform trace reconstruction with random insertions and bit-flips in addition to deletions. We also find a surprising result: for deletion probabilities $\delta > 1/2$, the presence of insertions can actually \emph{help} with trace reconstruction.
\end{abstract}

\thispagestyle{empty}

\end{titlepage}

\section{Introduction}

Consider a setting in which a string $x$ of length $n$ over an alphabet $\Sigma$ is passed through a \emph{deletion channel} that independently deletes each coordinate of $x$ with probability $\delta$.  The resulting string, of length somewhere between $0$ and $n$, is referred to as a \emph{trace} of $x$, or as a \emph{received string}; the original string $x$ is referred to as the \emph{source string}.  The \emph{trace reconstruction problem} is the task of reconstructing $x$ (with high probability) given access to independent traces of $x$.  This is a natural and well-studied problem, dating back to the early 2000's~\cite{Lev01a,Lev01b,BKKM04}, with some combinatorial variants dating even to the early 1970's~\cite{Kalashnik73}. 
However, perhaps surprisingly, much remains to be discovered both about the information-theoretic and algorithmic complexity of this problem.  Indeed, in a 2009 survey~\cite[Section~7]{Mit09}, Mitzenmacher wrote that ``the study of [trace reconstruction] is still in
its infancy''.


Before discussing previous work, we briefly explain why one can assume a binary alphabet without loss of generality. In case of a general $\Sigma$, drawing $O({\frac {\log n }{1-\delta}})$ traces will  with high probability reveal the entire alphabet $\Sigma' \subseteq \Sigma$ of symbols that are present in $x$.   For each symbol $\sigma \in \Sigma'$ we may consider the binary string $x|_\sigma$ whose $i$-th character is $1$ iff $x_i=\sigma$; a trace of $x$ is easily converted into a trace of $x|_\sigma$, so the trace reconstruction problem for $x$ can be solved by solving the binary trace reconstruction problem for each $x|_\sigma$ and combining the results in the obvious way.  
For this reason, our work (and most previous work) focuses on the case of a binary alphabet.

\subsection{Prior work}

As described in~\cite{Mit09}, the trace reconstruction problem can arise in several natural domains, including sensor networks and  biology.  However, the apparent difficulty of the problem means that there is not too much published work, at least on the problem of ``worst-case'' trace reconstruction problem (``worst-case'' in the sense that the source string may be any element of $\zo^n$).  Because of this, several prior authors have considered an ``average-case'' version of the problem in which the source string is assumed to be uniformly random over $\zo^n$ and the algorithm is required to succeed with high probability over the random draw of the traces and over the uniform random choice of $x$.  This average-case problem seems to have first been studied by Batu et al.~\cite{BKKM04}, who showed that a simple efficient algorithm which they call Bitwise Majority Alignment succeeds with high probability for sufficiently small deletion rates $\delta = O(1/\log n)$ using only $O(\log n)$ traces.  Subsequent work of Kannan and McGregor~\cite{KM05} gave an algorithm for random $x$ that can handle both deletions and insertions (both at rates $O(1/\log^2 n)$ as well as bit-flips (with constant probability bounded away from $1/2$) using $O(\log n)$ traces.  Viswanathan and Swaminathan~\cite{VS08} sharpened this result by improving the deletion and insertion rates that can be handled to $O(1/\log n)$.  Finally, \cite{HMPW08} gave a $\poly(n)$-time, $\poly(n)$-trace algorithm for random $x$ that succeeds with high probability for any deletion rate $\delta$ that is at most some sufficiently small absolute constant. 

Several researchers have considered, from an information-theoretic rather than algorithmic perspective, various reconstruction problems that are closely related to the (worst-case) trace reconstruction problem. Kalashnik~\cite{Kalashnik73}  showed that any $n$-bit string is uniquely specified by its \emph{$k$-deck}, which is the multiset of all its length-$k$ subsequences, when $k=\lfloor n/2 \rfloor$; this result was later reproved by Manvel et al.~\cite{MMSSS91}.  Scott~\cite{Scott97} subsequently showed that $k=(1+o(1))\sqrt{n \log n}$ suffices for reconstruction from the $k$-deck for any $x$, and simultaneously and independently Krasnikov and Roditty~\cite{KR97}  showed that $k = \lfloor {\frac {16} 7} \sqrt{n} \rfloor + 5$ suffices.  (McGregor et al.\ observed in~\cite{MPV14} that the result of~\cite{Scott97} yields an information-theoretic algorithm using $\exp(\tilde{O}(n^{1/2}))$ traces for any deletion rate $\delta \leq 1 - O(\sqrt{\log(n)/n})$, but did not discuss the running time of such an algorithm.)  On the other side, successively larger $\Omega(\log n)$ lower bounds on the value of $k$ that suffices for reconstruction of an arbitrary $x \in \zo^n$ from its $k$-deck were given by Manvel~et~al.~\cite{MMSSS91} and Choffrut and Karhum\"{a}ki~\cite{CK97}, culminating in a lower bound of $2^{\Omega(\sqrt{\log n})}$ due to Dud\'{i}k and Schulman~\cite{DS03}.

Surprisingly few algorithms have been given for the worst-case trace reconstruction problem as defined in the first paragraph of this paper.   Batu et al.~\cite{BKKM04} showed that a variation of their Bitwise Majority Alignment algorithm succeeds efficiently using $O(n \log n)$ traces if the deletion rate~$\delta$ is quite low, at most $O(1/n^{1/2+\vareps}).$  Holenstein et al.~\cite{HMPW08} gave a ``mean-based'' algorithm (we explain precisely what is meant by such an algorithm later) that runs in time $\exp(\tilde{O}(\sqrt{n}))$ and uses $\exp(\tilde{O}(\sqrt{n}))$ traces for any deletion rate $\delta$ that is bounded away from 1 by a constant; this is the prior work that is most relevant to our main positive result.  \cite{HMPW08} also gave a lower bound showing that for any $\delta$ bounded away from 0 by a constant, at least $n^{\Omega({\frac {\log n}{\log \log n}})}$ traces are required for any mean-based algorithm. Since the result of \cite{HMPW08}, several researchers (such as~\cite{Mossel:open}) have raised the question of finding (potentially inefficient) algorithms  which have a better sample complexity; however, no progress had been  made until this work.

One may also ask (as was done in the ``open questions'' of~\cite[Section~7]{Mit09}) for trace reconstruction for more general channels, such as those that allow deletions, insertions, and bit-flips. The only work we are aware of along these lines is that of Andoni~et~al.~\cite{ADHR12}, which gives results for trace reconstruction for \emph{average-case} words in the presence of insertions, deletions, and  substitutions on a tree.

\subsection{Our results}

\begin{theorem} [Deletion channel positive result] \label{thm:main-positive}
There is an algorithm for the trace reconstruction problem which, for any constant $0 < \delta < 1$, uses $\exp(O(n^{1/3}))$ traces and running time.
\end{theorem}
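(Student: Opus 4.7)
The plan is to reduce the algorithmic problem to a complex-analytic lower bound on the maximum modulus of nonzero Littlewood polynomials on a short arc approaching $w = 1$.

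For the generating-function setup, let $p_j(x)$ denote the probability that the $j$-th bit of a trace of $x \in \{0,1\}^n$ equals $1$ (taking this to be $0$ if the trace has fewer than $j$ bits). A short calculation gives $p_j(x) = (1-\delta)\sum_{i \ge j} x_i \binom{i-1}{j-1}(1-\delta)^{j-1}\delta^{i-j}$, and summing $p_j(x)\,z^{j-1}$ over $j$ and collapsing the inner sum via the binomial theorem yields the compact identity
\[
F_x(z) \;:=\; \sum_{j \ge 1} p_j(x)\, z^{j-1} \;=\; (1-\delta)\, P_x\bigl(\delta + (1-\delta) z\bigr),\qquad P_x(w) := \sum_{i=1}^n x_i w^{i-1},
\]
so that for any distinct $x, y \in \{0,1\}^n$ the difference $F_x - F_y$ equals $(1-\delta)$ times a nonzero Littlewood polynomial $Q := P_x - P_y$ (coefficients in $\{-1,0,1\}$, degree $< n$) composed with the affine substitution $z \mapsto \delta + (1-\delta)z$.

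Next, Hoeffding plus a union bound over $j$ shows that $T$ traces estimate every $p_j$ to additive accuracy $O(\sqrt{(\log n)/T})$ with high probability. Hence if I can show $\max_j |p_j(x) - p_j(y)| \ge \exp(-C n^{1/3})$ for every distinct $x,y$, then $T = \exp(O(n^{1/3}))$ traces combined with a nearest-mean-vector decoder (made efficient via a bit-by-bit search in the spirit of \cite{HMPW08}) identify $x$ in time $\exp(O(n^{1/3}))$. Applying the elementary bound $\max_j |c_j| \ge (1-r)\max_{|z|=r}|F(z)|$ for any polynomial $F(z) = \sum c_j z^j$ and $r \in (0,1)$ to $F = F_x - F_y$ with $r := 1 - \Theta(n^{-1/3})$, the task reduces to showing that for every nonzero Littlewood polynomial $Q$ of degree less than $n$,
\[
\max_{|w - \delta| = (1-\delta) r} |Q(w)| \;\ge\; \exp(-O(n^{1/3})).
\]
The contour here lies strictly inside the open unit disk, but its rightmost point sits within $\Theta(n^{-1/3})$ of $w = 1$.

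The heart of the proof, and the step I expect to be the main obstacle, is this Littlewood estimate. My plan is to attack it via a Borwein--Erd\'elyi / Chebyshev-extremal argument on the short arc near $w = 1$: build an auxiliary polynomial of controlled degree that is small on the arc but is forced to be large at some nearby point where $|Q|$ cannot be negligible---using that $Q$ has integer coefficients, so (for instance) $|Q(1)|$ or some low-order scaled derivative $Q^{(k)}(1)/k!$ must be a nonzero integer, combined with Jensen's formula or a Mahler-measure-type lower bound. The choice $1 - r \sim n^{-1/3}$ is exactly the optimum balancing a Jensen-type inward-push degradation scaling like $\exp(-\Theta(n(1-r)))$ against the $(1-r)$ factor in the coefficient-vs-contour inequality, and this sweet spot is precisely what produces the $n^{1/3}$ exponent in the final trace and time complexities.
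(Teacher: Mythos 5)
Your reduction to a complex-analytic estimate is the right idea and your generating-function identity matches the paper's $P_{\Del_\delta,x}(z) = \rho\sum_i x_i w^i$ with $w = \delta + (1-\delta)z$, but the specific contour you propose makes the key estimate false. With $r = 1 - \Theta(n^{-1/3})$, every point on the circle $\abs{w - \delta} = (1-\delta)r$ has $\abs{w} \le 1 - \Theta\bigl((1-\delta)n^{-1/3}\bigr)$. Taking the nonzero Littlewood polynomial $Q(w) = w^{n-1}$ (which arises from $x,y$ differing only in the last bit), you get
\[
\max_{\abs{w-\delta}=(1-\delta)r} \abs{Q(w)} \;\le\; \bigl(1-\Theta(n^{-1/3})\bigr)^{n-1} \;=\; \exp\bigl(-\Theta(n^{2/3})\bigr),
\]
which is far below the $\exp(-O(n^{1/3}))$ you need, so the inequality you reduce to simply does not hold. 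The balancing heuristic you give (``$\exp(-\Theta(n(1-r)))$ against the $(1-r)$ factor'') does not yield $1-r \sim n^{-1/3}$: the second factor is only polynomial in $n$, so the genuine constraint is $1-r \lesssim n^{-2/3}$ (or better, don't shrink at all and pay only $1/n$ via $\max_j\abs{c_j} \ge \max_{\abs{z}=1}\abs{F(z)}/n$). The $n^{1/3}$ exponent does not come from this inward shrink; it comes from a different trade-off inside the proof of the Littlewood estimate itself: you restrict to an \emph{arc} of angular width $\theta$ of the circle $\partial D_{1-\delta}(\delta)$ near $w=1$, on which $\abs{w} \ge 1 - \Theta(\theta^2)$ so the $\abs{w}^d$ factor costs only $\exp(-\Theta(d\theta^2))$, while a Borwein--Erd\'elyi-type bound on that arc gives $\exp(-\Theta(1/\theta))$; balancing $d\theta^2 \sim 1/\theta$ yields $\theta \sim d^{-1/3}$ and the $\exp(-\Theta(d^{1/3}))$ rate. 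This is precisely what the paper's Theorem~\ref{thm:Lit-chill-lower} does, citing \cite{BE97}.

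Two further points. First, appealing to integrality of $\abs{Q(1)}$ or of scaled derivatives at $1$ is not the right mechanism: $Q(1)$ can vanish (e.g.\ $Q(w) = w-1$), and the paper's argument uses only that the lowest-order nonzero coefficient has modulus~$1$ and the rest are bounded by~$1$ in modulus; Mahler-measure arguments (as in the paper's Section~\ref{sec:improvement}) run on exactly this normalization, with no integrality. Second, to get the running time bound, not just the sample complexity, you need the lower estimate to hold for the ``fractional relaxation'' $\gap^{\chill}_{\Del_\delta}(n)$ --- i.e.\ for polynomials whose lowest nonzero coefficient is $\pm 1$ and whose later coefficients lie in $[-1,+1]$, not only for Littlewood polynomials. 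Your proposal as written only considers Littlewood $Q$ and a vague ``bit-by-bit search''; the paper's linear-programming reduction explicitly requires the bounded-coefficient version of the estimate, which is why it is stated for $\kappa^{\chill}_{\bounded}$.
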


Theorem~\ref{thm:main-positive} significantly improves the running time and sample complexity of the~\cite{HMPW08} algorithm, which is $\exp(\tilde{O}(n^{1/2}))$ for fixed constant~$\delta$. Furthermore, we can actually extend Theorem~\ref{thm:main-positive} to the case of $\delta = o(1)$ or $\delta = 1-o(1)$; see Theorem~\ref{thm:subconstant} below.

The algorithm of Theorem~\ref{thm:main-positive} is a ``mean-based'' algorithm, meaning that it uses only the empirical mean of the trace vectors it receives.  We prove an essentially matching lower bound for such algorithms:

\begin{theorem} [Deletion channel negative result] \label{thm:main-negative}
For any constant $0 < \delta < 1$, every mean-based algorithm must use at least $\exp(\Omega(n^{1/3}))$ traces.
\end{theorem}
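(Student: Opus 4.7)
The plan is to exhibit two distinct source strings $x,y\in\{0,1\}^n$ whose trace-bit mean vectors $\mu^x,\mu^y$ are exponentially close, then invoke a standard hypothesis-testing lower bound. Conditioning on which source index lands in trace position $j$ gives the explicit formula $\mu^x_j = \rho^{j+1}\sum_{i>j} x_i\binom{i-1}{j}\delta^{i-1-j}$ where $\rho := 1-\delta$. Setting $w := x-y \in \{-1,0,1\}^n$ and defining the generating polynomial $A(z) := \sum_{i=1}^n w_i z^{i-1}$ (of degree $<n$, $\{-1,0,1\}$ coefficients, nonzero iff $x\neq y$), I would manipulate to obtain the identity
\[
\sum_{j=0}^{n-1}(\mu^x_j-\mu^y_j)\,t^j \;=\; \rho\cdot A(\rho t+\delta).
\]
Parseval on $|t|=1$ then yields $\|\mu^x - \mu^y\|_2 \leq \rho\cdot\max_{z\in D}|A(z)|$, where $D := \{z \in \mathbb{C}: |z - \delta| \leq \rho\}$ is the disk tangent internally to the unit circle at $z=1$.

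I would then turn to the technical heart: exhibiting a nonzero polynomial $A$ of degree $<n$ with $\{-1,0,1\}$ coefficients for which $\max_{z\in D}|A(z)| \leq \exp(-cn^{1/3})$ for some constant $c>0$. The $n^{1/3}$ exponent should reflect the quadratic tangency of $\partial D$ to the unit circle at $z=1$: parameterizing $\partial D$ by angle $\theta$, the boundary sits at depth $\Theta(\theta^2)$ inside the unit disk at arc length $\Theta(\theta)$ from the tangency point, so a polynomial that effectively vanishes on an order-$n^{1/3}$ cluster of points near $z=1$ will be exponentially small on $D$. My approach would be to invoke Borwein--Erd\'{e}lyi-type extremal estimates for Littlewood polynomials (perhaps combined with an explicit construction based on high-order vanishing at $z=1$ or on cyclotomic-style products), refined so as to control $|A|$ on the full disk $D$ rather than just in a small arc near $z=1$.

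Finally, from such an $A$ with coefficient vector $w$, set $x_i := \max(w_i,0)$ and $y_i := \max(-w_i,0)$; then $x\neq y$ in $\{0,1\}^n$ and $\|\mu^x-\mu^y\|_2 \leq \exp(-cn^{1/3})$. A mean-based algorithm using $T$ traces observes only the empirical mean $\hat\mu$, whose law has noise of order $1/\sqrt T$ per coordinate; a standard distinguishing lower bound (e.g.\ via KL divergence between the CLT Gaussian approximations of $\hat\mu$ under the two hypotheses) then forces $T \geq \Omega(\|\mu^x-\mu^y\|_2^{-2}) = \exp(\Omega(n^{1/3}))$. The main obstacle will be producing the tiny-on-$D$ Littlewood polynomial: the $\{-1,0,1\}$ coefficient constraint is severe (naive attempts such as $A(z)=(1-z)^kB(z)$ fail because $(1-z)^k$ expanded has unbounded binomial coefficients), and even the existence of $\{-1,0,1\}$-coefficient polynomials with high-order vanishing at $z=1$ is a nontrivial result; matching the sharp $n^{1/3}$ extremal rate requires carefully coupling the Littlewood structure to the quadratic tangency geometry of $D$.
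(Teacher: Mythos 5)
Your overall strategy matches the paper's: define the generating-function/$Z$-transform of the mean-trace difference, observe that it factors as $\rho\cdot A(\delta+\rho z)$ with $A$ a $\{-1,0,1\}$-coefficient polynomial, reduce via Parseval to bounding $\max_{D_\rho(\delta)}|A|$, construct an extremal Littlewood polynomial small on $D_\rho(\delta)$, and finish with a hypothesis-testing lower bound. The reduction is correct, and you correctly identify the $n^{1/3}$ rate as coming from the quadratic tangency of $\partial D_\rho(\delta)$ with the unit circle at $z=1$.

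However, the core technical step --- producing a nonzero Littlewood polynomial of degree $<n$ with $\max_{D_\rho(\delta)}|A|\le\exp(-\Omega(n^{1/3}))$ --- is left as an acknowledged obstacle, and the ingredient you are missing is precisely the one that makes the construction work on the \emph{whole} disk rather than only near $z=1$. The paper takes the Borwein--Erd\'elyi--K\'os polynomial $Q_k$ (a nonzero Littlewood polynomial of degree $\le k$ with $|Q_k|\le\exp(-c_0\sqrt{k})$ on all of $[0,1]$, which is \emph{not} built by forcing high-order vanishing at $1$, so the ``unbounded binomial coefficients'' problem you raise does not arise), lifts the smallness from $[0,1]$ to a small complex disk $D_{6a}(1)$ via the Hadamard three-circle theorem, and then --- crucially --- sets
\[
A(w) \;=\; w^{\lfloor n/2\rfloor}\cdot Q_k(w),\qquad k=\Theta(1/a^2)\le n/2.
\]
The factor $w^{\lfloor n/2\rfloor}$ is what kills $|A|$ on the part of $D_\rho(\delta)$ \emph{outside} the small disk $D_{6a}(1)$: there one has $|w|^2\le 1-\Theta(\tfrac{\delta}{\rho}a^2)$, so $|w|^{\lfloor n/2\rfloor}\le\exp(-\Theta(\tfrac{\delta}{\rho}a^2n))$, while $|Q_k(w)|$ is at most polynomial. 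Balancing $\exp(-\Theta(1/a))$ (from the BEK/three-circle bound inside $D_{6a}(1)$) against $\exp(-\Theta(\tfrac{\delta}{\rho}a^2n))$ (from $|w|^{\lfloor n/2\rfloor}$ outside) gives $a\asymp(\delta n)^{-1/3}$ and the claimed rate. Without the $w^{\lfloor n/2\rfloor}$ factor, $Q_k$ alone gives you no control away from the tangency point, and ``vanishing on an order-$n^{1/3}$ cluster near $z=1$'' does not by itself bound the polynomial on the far side of the disk. This is a genuine gap in the proposal, and filling it requires exactly this decomposition of $D_\rho(\delta)$ into a small disk around $1$ and its complement, handled by two different mechanisms.
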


As mentioned, we can also treat $\delta = o(1)$ and $\delta = 1-o(1)$:
\begin{theorem} [Deletion channel general matching bounds] \label{thm:subconstant}
    The matching bounds in Theorems~\ref{thm:main-positive} and~\ref{thm:main-negative} extend as follows:
	For $O(\log^3 n)/n \leq \delta \leq 1/2$, the matching bound is $\exp(\Theta(\delta n)^{1/3})$ (and for any smaller~$\delta$ we have a $\poly(n)$ upper bound).  Writing $\rho = 1-\delta$ for the ``retention'' probability, for $O(1/n^{1/2}) \leq \rho \leq 1/2$ the matching bound is $\exp(\Theta(n/\rho)^{1/3})$.
\end{theorem}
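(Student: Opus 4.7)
The plan is to adapt the proofs of Theorems~\ref{thm:main-positive} and~\ref{thm:main-negative} by carefully tracking how $\delta$ and $\rho = 1-\delta$ enter the key extremal estimates for Littlewood polynomials. The starting point, as in the constant-$\delta$ case, is the generating-function identity for the deletion channel: $\sum_j \mathbf{E}[\tilde{x}_j]\, w^j = \rho \cdot P_x(\delta + \rho w)$, where $P_x(z) = \sum_{i=0}^{n-1} x_i z^i$. Consequently, a mean-based algorithm can distinguish two source strings $x,y \in \{0,1\}^n$ from $T$ traces if and only if the Littlewood polynomial $A(z) = P_x(z) - P_y(z)$ attains magnitude at least $\Omega(1/(\rho\sqrt{T}))$ somewhere on the complex disk $D = D(\delta, \rho)$ of centre $\delta$ and radius $\rho$. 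The algorithmic and impossibility questions thus both reduce to the extremal problem: how small can a nonzero degree-less-than-$n$ Littlewood polynomial be on $D(\delta,\rho)$?

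For the upper bound, I would show that every nonzero Littlewood polynomial of degree less than $n$ satisfies $\|A\|_{D(\delta,\rho)} \geq \exp(-C(\delta n)^{1/3})$ for $\delta \leq 1/2$ and $\|A\|_{D(\delta,\rho)} \geq \exp(-C(n/\rho)^{1/3})$ for $\rho \leq 1/2$. The method is the same as in Theorem~\ref{thm:main-positive}: combine a Chebyshev/Tur\'an-type lemma, which forbids a degree-$k$ polynomial from being exponentially small throughout a set of a given size, with the fact that an integer-coefficient polynomial cannot be too small at a single close-to-rational point. The only change is that $D(\delta,\rho)$ now behaves as a neighbourhood of $z=1$ of ``width'' $\Theta(\delta)$ (when $\delta$ is small) or a neighbourhood of $z=0$ of radius $\Theta(\rho)$ (when $\rho$ is small). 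Rerunning the cube-root optimisation for the best-choice degree parameter with the rescaled arc produces the claimed exponent.

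For the lower bound, I would construct, for each admissible $\delta$, a nonzero Littlewood polynomial of degree at most $n$ with $\|A\|_{D(\delta,\rho)} \leq \exp(-c(\delta n)^{1/3})$ (resp.\ $\exp(-c(n/\rho)^{1/3})$). The polynomial from Theorem~\ref{thm:main-negative} achieves this for constant $\delta$; for subconstant $\delta$, I would rescale the construction via the affine change of variable $z \mapsto \delta + \rho z$ so that its ``hard region'' aligns with the now-smaller disk $D(\delta,\rho)$. The delicate point is that the rescaled polynomial must remain Littlewood and have degree at most $n$; one natural route is to apply the Theorem~\ref{thm:main-negative} construction at an effective degree $\Theta(\delta n)$ (resp.\ $\Theta(n/\rho)$) and then pad or dilate its support in the index axis up to degree $n$ while preserving the $\{-1,0,1\}$-coefficient constraint.

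Finally, for $\delta$ below $O(\log^3 n)/n$ the $\poly(n)$ upper bound is elementary: the expected number of deletions per trace is $o(\log n)$, so with high probability $O(\log n)$ traces each miss at most a handful of bits and a direct alignment recovers $x$ exactly. The main obstacle across the theorem is matching the absolute constants inside the ``$\Theta$'' on the upper and lower sides as the arc $D(\delta,\rho)$ shrinks --- in particular, preserving both the Littlewood constraint and the degree bound $n$ while performing the rescaling of the lower-bound construction is the most delicate step, as the straightforward affine change of variable does not obviously produce Littlewood coefficients.
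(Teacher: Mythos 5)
Your reduction to the extremal problem is correct, and you correctly observe that the disk $D_\rho(\delta)$ shrinks toward $1$ as $\delta\to 0$ and toward $0$ as $\rho\to 0$. However, both halves of your proposed argument rely on tools that do not match the actual structure of the problem, and each has a real gap. For the sample-complexity upper bound (lower-bounding $\max_{D_\rho(\delta)}|A|$), the relevant polynomials are \emph{not} integer-coefficient: to get an efficient algorithm one must lower-bound $\max_{D_\rho(\delta)}|P|$ over the ``chill'' class $P(w)=w^d(1+b_{1}w+\cdots)$ with $|b_i|\le 1$ arbitrary reals, so your ``integer polynomial at a close-to-rational point'' lemma is unavailable; and a bare Chebyshev/Tur\'an argument does not handle the $w^d$ prefactor, which makes $P$ genuinely tiny on the part of the disk closest to the origin. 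The step your plan is missing is an arc estimate of Borwein--Erd\'elyi (Theorem~\ref{thm:be}): on an arc of angular width $\theta$ about $1$, $|Q(w)|$ has sup-norm at least $\exp(-C_1/\theta)$. One then picks $\theta$ to trade this against $r_0^d$, where $r_0$ is the modulus of the point where the angle-$\theta$ ray meets $\partial D_\rho(\delta)$; the Law of Sines gives $r_0 \geq 1-(\delta/\rho)\theta^2$, and optimizing $\theta$ yields exactly the $(\delta d)^{1/3}$ and $(d/\rho)^{1/3}$ exponents.

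For the impossibility direction (exhibiting a Littlewood polynomial small on all of $D_\rho(\delta)$), you yourself identify the obstruction: the substitution $z\mapsto\delta+\rho z$ destroys the $\{-1,0,1\}$ coefficient constraint, and you offer no workaround. The paper avoids substitution entirely. It takes the Borwein--Erd\'elyi--K\'os Littlewood polynomial $Q_k$ (Theorem~\ref{thm:BEK}) with $k=\Theta(1/a^2)$, which is $\exp(-\Omega(1/a))$-small on the small disk $D_{6a}(1)$ (Theorem~\ref{thm:BE-ellipse}), and forms $P(w)=w^{\lfloor n/2\rfloor}Q_k(w)$ --- manifestly Littlewood of degree $<n$. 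On $D_\rho(\delta)\cap D_{6a}(1)$ the $Q_k$ factor is exponentially small; on $D_\rho(\delta)\setminus D_{6a}(1)$ a Cosine-Law computation gives $|w|^2\le 1-36(\delta/\rho)a^2$, so the $w^{\lfloor n/2\rfloor}$ factor is small. Optimizing $a$ matches the exponent. Finally, for $\delta\le O(\log^3 n)/n$ your ``direct alignment'' justification is unsound as stated: the expected number of deletions per trace can be $\Theta(\log^3 n)$, far from ``a handful,'' and Bitwise-Majority-Alignment-type arguments are only known to work for $\delta\le O(n^{-1/2-\varepsilon})$. The paper does not need any such argument --- it just uses monotonicity of $\kappa$ in $\delta$ (via the Maximum Modulus Principle) to pay the bound at $\delta=\Theta(\log^3 n)/n$, which is already $\poly(n)$.
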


For simplicity in the main portion of the paper we consider only the deletion channel and prove the above results.  In Appendix~\ref{app:general} we consider a more general channel that allows for deletions, insertions, and bit-flips, and prove the following result, which extends Theorem~\ref{thm:main-positive} to that more general channel and includes Theorem~\ref{thm:main-positive} as a special case.

\begin{theorem} [General channel positive result] \label{thm:main-positive-general}
Let $\channel$ be the general channel described in Section~\ref{sec:channel-definition} with  deletion probability $\delta = 1-\rho$, insertion probability $\sigma$, and bit-flip probability $\gamma/2$.  Define
\[
r \coloneqq {\frac {\rho + \delta \sigma}{1+\sigma}}.
\]
Then there is an algorithm for $\channel$-channel trace reconstruction using samples and running time bounded by
\[
	\poly(\tfrac{1}{1-\delta}, \tfrac{1}{1-\sigma},\tfrac{1}{1-\gamma}) \cdot \begin{cases}
    	\exp(O(n/r)^{1/3}) & \text{if $C/n^{1/2} \leq r \leq 1/2$,} \\
        \exp(O((1-r) n)^{1/3}) & \text{if $O(\log^3 n)/n \leq 1-r \leq 1/2$.}
    \end{cases}
\]
\end{theorem}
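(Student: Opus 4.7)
The plan is to extend the mean-based algorithm and analysis underlying Theorem~\ref{thm:main-positive} from the pure deletion channel to the general channel $\channel$. The two technical ingredients are (i) a generating-function identity expressing the expected trace-mean vector as a polynomial evaluation on a certain disk whose geometry is controlled by $r$, and (ii) the Littlewood max-modulus lower bound on that disk (already the key analytic input for Theorem~\ref{thm:main-positive}).

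\textbf{Step 1: generating function identity.} For each trace coordinate $j$ I would express $\mathbb{E}[Y_j]$ as an affine form $\sum_i \alpha_{i,j} x_i + \beta_j$, where $\alpha_{i,j}$ is obtained by summing over joint patterns of deletions, insertions, and bit-flips that land the $i$-th source character (or an insertion derived from it) at trace position~$j$. A direct combinatorial computation parallel to the one for the deletion channel should give the closed form
$$\sum_j \mathbb{E}[Y_j]\, w^j \;=\; A(w) \sum_i \bigl(x_i - \tfrac12\bigr)\phi(w)^i + B(w),$$
with $A,B$ explicit analytic functions of $w$ depending only on $\delta,\sigma,\gamma$ and $\phi$ an affine map of $w$. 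As $w$ ranges over the unit circle, $\phi(w)$ sweeps a disk $D$ whose radius and center are such that the distance from $D$ to the unit circle is exactly $r$ (in the deletion-only case, $D$ is the disk of radius $\rho$ centered at $\delta$ and $r=\rho$). The definition $r = (\rho + \delta\sigma)/(1+\sigma)$ should emerge as the precise combination that governs this geometry.

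\textbf{Step 2: Littlewood separation and concentration.} For distinct candidates $x \neq y$, the nonzero polynomial $L(z) := \sum_i (x_i-y_i)z^i$ has coefficients in $\{-1,0,+1\}$ and degree at most $n-1$, so the max-modulus bound underlying Theorem~\ref{thm:main-positive} gives
$$\max_{z \in D} |L(z)| \;\geq\; \exp\!\bigl(-O(n/r)^{1/3}\bigr)$$
when $r \leq 1/2$, and symmetrically $\exp(-O((1-r)n)^{1/3})$ in the complementary regime. Through Step~1 this yields a unit-modulus $w_0$ at which $\bigl|\sum_j (\mathbb{E}[Y_j^x] - \mathbb{E}[Y_j^y])\,w_0^j\bigr| \geq |A(w_0)|\cdot \exp(-O(n/r)^{1/3})$, where $|A(w_0)|$ is bounded below on the unit circle by $\poly(1-\delta,\,1-\sigma,\,1-\gamma)$. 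Drawing $M = \poly(\tfrac{1}{1-\delta}, \tfrac{1}{1-\sigma}, \tfrac{1}{1-\gamma}) \cdot \exp(O(n/r)^{1/3})$ traces and forming empirical means $\hat Y_j$, Hoeffding plus a union bound over $j \leq O(n)$ give $|\hat Y_j - \mathbb{E}[Y_j]|$ much smaller than the above separation, so that the decoding step from the proof of Theorem~\ref{thm:main-positive} recovers $x$ with high probability.

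\textbf{Main obstacle.} The heart of the argument is Step~1: one must carry out the bookkeeping for the joint action of the three error types so that $\mathbb{E}[Y_j]$ factors as a single polynomial evaluation, and verify that the resulting disk $D$ has exactly the radius/center determined by $r = (\rho+\delta\sigma)/(1+\sigma)$. Bit-flips are the easiest to absorb: they do not shift positions and should merely rescale the signal by a factor of $1-\gamma$ inside the prefactor $A(w)$. The genuinely new combinatorial fact is how $\delta$ and $\sigma$ fuse into the single effective parameter $r$, and capturing this in a clean generating-function identity is what the bulk of the work should go into. Once this reduction is in place, Step~2 and the decoding step in Step~3 follow essentially verbatim from the proof of Theorem~\ref{thm:main-positive}.
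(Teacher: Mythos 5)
Your high-level plan matches the paper's: compute a generating-function identity expressing the mean trace as a scaled evaluation of $\sum_i x_i\,w^i$ on a disk parameterized by $r$, then invoke the Littlewood/bounded-coefficient bound from the deletion-only analysis, and feed the resulting $\gap^{\chill}$ lower bound into the same LP decoder. But you explicitly flag Step~1 as the ``main obstacle'' and do not carry it out, and several of the intermediate claims in the proposal are incorrect or would mislead you if you tried.

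First, the map from $z$ to the evaluation point is \emph{not} affine. The paper works with $\pm 1$ bits (so uniformly random inserted or flipped bits have expectation $0$ and the affine term $B(w)$ you introduce vanishes identically, giving the clean identity $P^{\ideal}_{\channel,x}(z) = (1-\delta)(1-\gamma)\sum_i x_i\,\E[z^{\wt{\bJ}_i}]$). Conditioned on $x_i$ being transmitted, its position $\wt{\bJ}_i$ decomposes as $\bG_0 + \cdots + \bG_i + \bB_0 + \cdots + \bB_{i-1}$ with independent $\bG_k \sim \mathrm{Geometric}(1-\sigma)$ and $\bB_k \sim \mathrm{Bernoulli}(\rho)$; since $\E[z^{\bG}] = \tfrac{1-\sigma}{1-\sigma z}$ is rational, one gets $\E[z^{\wt{\bJ}_i}] = F_G(z)\cdot w^i$ with $w = \tfrac{(1-\sigma)(\delta+\rho z)}{1-\sigma z}$, a genuine M\"obius transformation. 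What saves the argument is that M\"obius maps still carry circles to circles, so $w(\partial D_1(0)) = \partial D_r(1-r)$ for $r = \tfrac{\rho+\delta\sigma}{1+\sigma}$. Your claim that $\phi$ is affine would not survive the actual calculation and is the precise point where ``how $\delta$ and $\sigma$ fuse into $r$'' is resolved. Relatedly, the geometric description of $r$ as ``the distance from $D$ to the unit circle'' is wrong: the image disk is $D_r(1-r)$, which is \emph{tangent} to the unit circle at $1$ (distance $0$), and $r$ is its \emph{radius}. The quantity that matters for the Littlewood bound is that radius, which is why the deletion-channel analysis of $\kappa_{\bounded}^{\chill}(r,\cdot)$ applies verbatim with $r$ replacing $\rho$.

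Second, with $\sigma > 0$ the received string has unbounded length, so ``union bound over $j \leq O(n)$'' is not available as stated. The paper truncates at an effective trace length bound $N = O\!\bigl(\tfrac{n + \log(1/(1-\sigma))}{1-\sigma}\bigr)$ and shows the tail past $N$ contributes at most $O(e^{-n})$ to $\|\mu^{\ideal}_{\channel}(x) - \mu_{\channel}(x)\|_1$; this truncation is what makes the mean-trace vector finite and the union bound legitimate, and it also contributes a $\poly(\tfrac{1}{1-\sigma})$ factor to the final bound. Finally, to apply the LP decoder you need the fractional gap $\gap^{\chill}$, i.e., the bound for polynomials with coefficients in $[-1,1]$ and leading coefficient of modulus $1$, not merely the Littlewood bound for $\{-1,0,+1\}$ coefficients that you state; the paper already separates these as $\kappa_{\bounded}^{\chill}$ versus $\kappa_{\Lit}$, and you should invoke the former. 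In summary, the architecture is right, but the heart of the proof is exactly the piece you defer, and the specific claims you make about it (affine, distance-to-circle) would need to be replaced by the M\"obius computation and the tangent-disk picture.
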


Since some slight technical and notational unwieldiness is incurred by dealing with the more general channel, we defer the proof of Theorem~\ref{thm:main-positive-general} to Appendix~\ref{app:general}; however, we note here that the main core of the proof is unchanged from the deletion-only case.  We additionally note that, as discussed in Appendix~\ref{app:general}, a curious aspect of the upper bound given by Theorem~\ref{thm:main-positive-general} is that having a constant insertion rate can make it possible to perform trace reconstruction in time $\exp(O(n^{1/3}))$ even when the deletion rate is much higher than Theorem~\ref{thm:subconstant} could handle in the absence of insertions.  A possible intuitive explanation for this is that having random insertions could serve to ``smooth out'' worst-case instances that are problematic for a deletion-only model.

\subsection{Independent and concurrent work}
At the time of writing, we have been informed \cite{NP16} that Fedor Nazarov and Yuval Peres have independently obtained results that are substantially similar to Theorems~\ref{thm:main-positive} and~\ref{thm:main-negative}. Also, Elchanan Mossel has informed us \cite{Mossel16} that around 2008, Mark Braverman, Avinatan Hassidim and Elchanan Mossel had independently proven (unpublished) superpolynomial lower bounds for mean-based algorithms.\ignore{ but as it was never written down, the precise bounds are not available to us. }

\subsection{Our techniques}
For simplicity of discussion, we restrict our focus in this section to the question of upper bounding the sample complexity of trace reconstruction for the deletion channel, where every bit gets deleted independently with probability~$\delta$.  (As discussed above, generalizing the results to channels which also allow
for insertions and flips is essentially a technical exercise that does not require substantially new ideas.) As we discuss in Section~\ref{sec:complexity}, an efficient \emph{algorithm} follows easily from a sample complexity upper bound via the observation that the minimization problem whose solution yields a sample complexity upper bound, extends to a slightly larger \emph{convex} set, and thus one can use convex (in fact, linear) programming to get an algorithmic result. Hence the technical meat of the argument lies in upper bounding the sample complexity.

The key enabling idea for our work is to take an analytic view on the combinatorial process defined by the deletion channel. More precisely, consider two distinct strings $x , x' \in \{-1,1\}^n$.  A necessary (and sufficient) condition to upper bound the sample complexity of trace reconstruction is to lower bound the statistical distance between the two distributions of traces of $x$ versus $x'$ (let us write  $\calC (x)$ and $\calC (x')$ to denote these two distributions). Since analyzing the statistical distance $\dtv{\calC ( x )}{ \calC (x') }$ between the distributions $\calC (x)$ and $\calC (x')$ turns out to be a difficult task, we approach it by considering a limited class of statistical tests.

In \cite{HMPW08} the authors consider ``mean-based'' algorithms; such algorithms correspond to statistical tests that only use $1$-bit marginals of the distribution of the received string. More precisely, for any $1 \le j \le n$, consider the quantities $\Pr_{\by \leftarrow \calC (x)} [\by_j=1]$ and $\Pr_{\by' \leftarrow \calC (x')} [\by'_j=1]$. The difference $ \abs{\Pr_{\by \leftarrow \calC (x)} [\by_j=1]- \Pr_{\by' \leftarrow \calC (x')} [\by'_j=1]}$ is a lower bound on $\dtv{\calC ( x )}{ \calC (x') }$.

Let us define the vector $\beta_{x,x'} = (\beta_{x,x'}(1),\dots,\beta_{x,x'}(n)) \in [-1,1]^n$ by
\[\beta_{x,x'}(j)  = \Pr_{\by \leftarrow \calC (x)} [\by'_j=1]- \Pr_{\by' \leftarrow \calC (x')} [\by_j=1].\] In this terminology, giving a sample complexity upper bound on mean-based algorithms correspond to showing a lower bound on\ignore{$\red{\min}_{x \neq x' \in \{-1,1\}^n} \Vert \beta_{x,x'} \Vert_\infty$; up to a factor of $n$, which is negligible in our context, we may consider} $\min_{x \neq x' \in \{-1,1\}^n} \Vert \beta_{x,x'} \Vert_1.$ A central idea in this paper is to analyze $\Vert \beta_{x,x'} \Vert_1$ by studying the $Z$-transform of the vector $\beta_{x,x'}$. More precisely, for $z \in \mathbb{C}$, we consider $\widehat{\beta}_{x,x'}(z) := \sum_{j=1}^n \beta_{x,x'}(j) \cdot z^{j-1}$.
Elementary complex analysis can be used to show that
$$
\sup_{|z|=1}| \widehat{\beta}_{x,x'}(z) | \leq \Vert \beta_{x,x'} \Vert_1 \le \sqrt{n} \cdot\sup_{|z|=1}| \widehat{\beta}_{x,x'}(z) |.
$$
Thus, for our purposes, it suffices to study $\sup_{|z|=1}| \widehat{\beta}_{x,x'}(z)|$. By analyzing the deletion channel and observing that $\widehat{\beta}_{x,x'}(z)$ is a polynomial in $z$, we are able to characterize this supremum as the supremum of a certain polynomial (induced by $x$ and $x'$) on a certain disk in the complex plane.  Thus giving a sample complexity upper bound amounts to lower bounding $\sup_{|z|=1}| \widehat{\beta}_{x,x'}(z)|$ across all polynomials $\widehat{\beta}_{x,x'}$ induced by distinct $x,x' \in \{-1,1\}^n$ (essentially, across a class of polynomials closely related to \emph{Littlewood polynomials: those polynomials with all coefficients in $\{-1,0,1\}$)}. The technical heart of our sample complexity upper bound is in establishing such a lower bound. Finally, similar ideas and arguments are used to lower bound the sample complexity of mean-based algorithms, by upper bounding $\sup_{|z|=1}| \widehat{\beta}_{x,x'}(z)|$ across all polynomials $\widehat{\beta}_{x,x'}$ induced by distinct $x,x' \in \{-1,1\}^n$.

\section{Preliminaries and terminology} \label{sec:prelim}

Throughout this paper we will use two slightly nonstandard notational conventions.  Bits will be written as $\bits$ rather than $\zo$, and strings will be indexed starting from~$0$ rather than~$1$.  Thus the \emph{source string} will be denoted $x = (x_0, x_1, \dots, x_{n-1}) \in \bn$; this is the unknown string that the reconstruction algorithm is trying to recover.

We will write $\channel$ for the \emph{channel} through which~$x$ is transmitted.  In the main body of the paper our main focus will be on the \emph{deletion channel} $\channel = \Del_\delta$, in which each bit of $x$ is independently $\delta$eleted with probability~$\delta < 1$.  We will also often consider $\rho = 1-\delta > 0$, the $\rho$etention probability of each coordinate.  In Appendix~\ref{app:general} we will see that a more general channel that also involves \emph{insertions} and \emph{bit-flips} can be handled in a similar way.  

We will use boldface to denote random variables.  We typically write $\by \leftarrow \channel(x)$ to denote that $\by = (\by_0, \by_1,  \dots, \by_{\boldn - 1})$ is a random \emph{trace} (or \emph{received string} or \emph{sample}),  obtained by passing $x$ through the channel~$\calC$.  Notice the slight inconvenience that the length of $\by$ is a random variable (for the deletion channel this length is always between 0 and $n$); we denote this length by~$\boldn$.

We define a \emph{trace reconstruction algorithm for channel $\channel$} to be an algorithm with the following property:  for any unknown source string $x \in \bn$, when given access to independent strings $\by^{(1)}, \by^{(2)}, \dots$ each distributed according to $\channel(x)$, it outputs $x$ with probability at least (say) $99\%$.  The \emph{sample complexity} of the trace reconstruction algorithm is the number of draws from $\channel(x)$ that it uses (in the worst case across all $x \in \bn$ and all draws from $\channel(x)$).  We are also  interested in the algorithm's (worst-case) running time.

As mentioned earlier we will use basic complex analysis.  The following notation will be useful:
\begin{notation}
We write $D_r(c)$ for the closed complex disk of radius $r$ centered at $c$; i.e., $\{z \in \C : |z - c| \leq r\}$.  We write $\bdry D_r(c)$ for the boundary of this disk; thus, e.g., $\bdry D_1(0) = \{z \in \C  : |z| = 1\}$ is the complex unit circle.
\end{notation}

\section{Mean traces} \label{sec:mean}

\ignore{\rnote{Everything before the Appendix is now just about the deletion channel, hence no big-$N$s; this is for as-simple-as-possible definitions/setup, and to quarantine all the messiness (idealized vs nonidealized mean trace, etc) to the appendix.  This leads to a little reduplication in the appendix, but I think it's okay.}}
We now come to a key definition, that of the \emph{mean trace}.  For now we restrict our focus to $\channel$ being the deletion channel $\Del_\delta$ (we consider a more general channel in Appendix~\ref{app:general}).

Although a random trace $\by \leftarrow \Del_\delta(x)$ does not have a fixed length, we can simply define the mean trace of a source string~$x \in \bn$ to be
\begin{equation} \label{eq:meantrace}
\mu_{\Del_\delta}(x) = \E_{\by \leftarrow \Del_\delta(x)}[\by'] \in [-1,1]^n,
\end{equation}
where $\by'$ is $\by$ padded with zeros so as to be of length exactly $n$.  Here ``$0$'' has a natural interpretation as a ``uniformly random bit'' (indeed, a trace reconstruction algorithm could always pad deletion-channel traces with random bits by itself, and this would not change the definition of the mean trace~$\mu_{\Del_\delta}(x)$).


The following is immediate:
\begin{proposition} \label{prop:its-linear}
Viewing the domain of $\mu_{\Del_\delta}$ as the real vector space $\R^n$, $\mu_{\Del_\delta}(x)$ is a \mbox{(real-)linear} function of~$x$; that is, each $\mu_{\Del_\delta}(x)_j$ can be written as $\sum_i a_{i,j} x_i$ for some constants $a_{i,j} \in \R$.\ignore{\onote{The main reason we want to ``know'' the parameters of the channel is to ``know'' these constants.}}
\end{proposition}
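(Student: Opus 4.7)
The plan is to unpack the definition of $\mu_{\Del_\delta}(x)_j = \E_{\by \leftarrow \Del_\delta(x)}[\by'_j]$ and explicitly exhibit the coefficients $a_{i,j}$, using the key structural fact that \emph{which} bits of $x$ are deleted depends only on the channel's internal randomness, not on the values of the bits themselves.

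First I would introduce, for each pair $(i,j)$ with $0 \le i,j \le n-1$, the indicator random variable $\bE_{i,j}$ that bit $i$ of $x$ survives the deletion process and ends up in position $j$ of the (unpadded) trace $\by$. For bit $i$ to land in position $j$, we need (a) bit $i$ itself to survive (probability $\rho = 1-\delta$) and (b) exactly $j$ of the preceding bits $x_0, \dots, x_{i-1}$ to survive (probability $\binom{i}{j} \rho^j \delta^{i-j}$ when $i \ge j$, and $0$ otherwise). Crucially, $\bE_{i,j}$ is a function of the deletion pattern only, so its distribution — and in particular its expectation — does not depend on $x$ at all. Setting
\[
a_{i,j} \coloneqq \Pr[\bE_{i,j} = 1] = \binom{i}{j} \rho^{j+1} \delta^{i-j} \cdot \mathbf{1}[i \ge j],
\]
we obtain constants $a_{i,j} \in \R$ depending only on $n$, $\delta$, $i$, and $j$.

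Next I would express $\by'_j$ pointwise. For any outcome of the channel, at most one bit of $x$ can occupy position $j$ of the trace, so $\sum_i \bE_{i,j} \in \{0,1\}$. If $\sum_i \bE_{i,j} = 1$, then $\by_j = x_i$ for the unique $i$ with $\bE_{i,j} = 1$; if $\sum_i \bE_{i,j} = 0$, then either position $j$ lies past the end of $\by$ (in which case the padded bit $\by'_j$ equals $0$), and in either case the identity
\[
\by'_j \;=\; \sum_{i=0}^{n-1} x_i \cdot \bE_{i,j}
\]
holds deterministically. (When $j \ge \boldn$ both sides are $0$; when $j < \boldn$ exactly one term on the right survives and matches $\by_j = \by'_j$.)

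Finally, applying linearity of expectation gives
\[
\mu_{\Del_\delta}(x)_j \;=\; \E\!\left[\sum_{i=0}^{n-1} x_i \cdot \bE_{i,j}\right] \;=\; \sum_{i=0}^{n-1} a_{i,j}\, x_i,
\]
which is the required real-linear expression. The only mildly delicate point is the bookkeeping around the zero-padding: one must verify that the pointwise identity for $\by'_j$ holds with the literal value $0$ used as the padding symbol (rather than, say, a random $\pm 1$ bit). This is immediate from the setup, so no real obstacle arises; the proof is essentially a definition-chase combined with linearity of expectation.
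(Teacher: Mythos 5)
Your proof is correct and is essentially the argument the paper implicitly relies on (the paper declares the proposition ``immediate'' and defers the underlying computation to Section~4, where the random variables $\bJ_i$ play precisely the role of your indicators $\bE_{i,j}$ via $\bE_{i,j} = \mathbf{1}[\bJ_i = j]$). Your explicit formula $a_{i,j} = \binom{i}{j}\rho^{j+1}\delta^{i-j}\,\mathbf{1}[i \geq j]$ matches $\Pr[\bJ_i = j]$ as computed there, and the pointwise identity $\by'_j = \sum_i x_i\,\bE_{i,j}$ together with the zero-padding convention is exactly the ``immediate'' fact being appealed to.
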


\subsection{The mean-based (deletion-channel) trace reconstruction model}

One of the most basic things that a trace reconstruction algorithm can do is calculate an empirical estimate of the mean trace.  A simple Chernoff/union bound shows that, with  \ignore{\poly(N/\eps) =}$\poly(n/\eps)$ samples and time, an algorithm can compute an estimator $\wh{\mu}_{\Del_\delta}(x) \in [-1,1]^n$ satisfying ${\|\wh{\mu}_{\Del_\delta}(x) - \mu_{\Del_\delta}(x)\|_1 \leq \eps}$
with very high probability.   The algorithm might then proceed to base its reconstruction solely on $\wh{\mu}_{\Del_\delta}(x)$, without relying on further traces.  We call such algorithms ``mean-based trace reconstruction algorithms''\ignore{, as they do not use estimates of pairwise $\E[\by_i \by_j]$ moments or higher moments.} (Holenstein et~al.~\cite{HMPW08} called them algorithms based on ``summary statistics'').  We give a formal definition:

\begin{definition} \label{def:meanbased}
    An algorithm in the \emph{mean-based (deletion-channel) trace reconstruction model} works as follows.  Given an unknown source string $x \in \{-1,1\}^n$, the algorithm first specifies a parameter~$T \in \N$.  The algorithm is then given an estimate $\wh{\mu}_{\Del_\delta}(x) \in [-1,+1]^n$ of the mean trace satisfying
    \begin{equation}    \label{eqn:mean-estimate2}
        \|\wh{\mu}_{\Del_\delta}(x) - \mu_{\Del_\delta}(x)\|_1 \leq 1/T.
    \end{equation}
    We define the ``cost'' of this portion of the algorithm to be~$T$.   Having been given~$\wh{\mu}_{\Del_\delta}(x)$, the algorithm has no further access to~$x$, but may do further ``postprocessing'' computation involving $\wh{\mu}_{\Del_\delta}(x)$.  The algorithm should end by outputting~$x$.\ignore{\onote{I was a bit cagey here about whether the algorithm has to be deterministic.  Of course: a) our algorithm is; b) our lower bound would apply even to an algorithm that was randomized after getting $\wh{\mu}_{\Del_\delta}(x)$.  I didn't feel like it was worth getting into these details.}}
\end{definition}
From the above discussion, we see that an algorithm in the mean-based trace reconstruction model with cost $T_1$ and postprocessing time $T_2$ may be converted into a normal trace reconstruction algorithm using $\poly(n,T_1)$ samples and $\poly(n,T_1) + T_2$ time.\ignore{  (As an aside, since we will be studying algorithms with cost $T \ll 2^n$, there is no real difference between getting an estimate of~$\mu_{\Del_\delta}(x)$ or of $\mu^{\ideal}_{\Del_\delta}(x)$.)}

\ignore{
A closely related notion is the \emph{$m$-sample empirical mean trace of $x$}, denoted $\widehat{\bmu}_{m,\delta}(x)$.  This is a random variable obtained by drawing $m$ independent traces from $\Del_\delta(x)$ and outputting the mean of those $m$ vectors.  When the deletion rate $\delta$ and the source string $x$ are clear from context we sometimes simply write $\mu$ for $\mu_\delta(x)$ and $\widehat{\bmu}_{m}$ for $\widehat{\bmu}_{m,\delta}(x).$

We define an \emph{$m$-sample mean-based algorithm} as an algorithm for the trace reconstruction problem which works in the following way:  for any source string $x$ it receives a draw from the $m$-sample empirical mean trace $\widehat{\bmu}_{m,\delta}(x)$, does some computation, and outputs a hypothesis string (in $\zo^n$) which must be equal to $x$ with probability at least (say) 99/100.\ignore{\rnote{Checking that we are okay with using this as the definition of a mean-based algorithm.  An alternative would be to go all 1-RFA on it and talk about algorithms that can only look at one bit, etc, but I'm not sure there is a point in doing that.}}\ignore{An alternate definition would be to define an $m$-sample mean-based algorithm as an algorithm which can draw independent traces $\by^{(1)},\by^{(2)},\dots \leftarrow \Del_\delta(x)$ but for each received trace $\by^{(i)}$ it can only specify a coordinate $j \in [0,n-1]$ and be shown the bit $\by^{(i)}_j$, after which $\by^{(i)}$ is discarded and rendered subsequently unavailable.  It is clear that given an $m$-sample mean-based algorithm of the first type, there is an equivalent $nm$-sample mean-based algorithm of the second type.}  It is clear that any $m$-sample mean-based  algorithm immediately gives a trace reconstruction algorithm with sample complexity $m$.
}

\ignore{





}
\ignore{
For a vector $v \in \R^n$ we write $\|v\|$ to denote the 2-norm $\sqrt{\sum_j v_j^2}$ and $\|v\|_1$ to denote the 1-norm $\sum_j |v_j|.$ We write $\Bin(k,\tau)$ to denote a Binomially distributed random variable with parameters $k \in \N$ and $0 \leq \tau \leq 1$, so $\Pr[\Bin(k,\tau)=j] = {k \choose j} \tau^j (1-\tau)^{k-j}.$  We write $D_r(z)$ to denote the disk of radius $r$ centered at $z$ in the complex plane and $\partial D_r(z)$ to denote the circle which is its boundary.
}

\subsection{The complexity of mean-based (deletion-channel) trace reconstruction} \label{sec:complexity}

As discussed in~\cite{HMPW08}, the sample complexity of mean-based  trace reconstruction is essentially determined by the minimum distance between the mean traces $\mu_{\Del_\delta}(x)$ and $\mu_{\Del_\delta}(x')$ of two distinct source strings $x, x' \in \bn$.  Furthermore, one can get an upper bound on the \emph{time} complexity of mean-based trace reconstruction if a certain ``fractional relaxation'' of this minimum mean trace distance is large.  We state these observations from~\cite{HMPW08} here, using slightly different notation.

\begin{definition}      \label{def:gap}
    Given $n$ and $0 \leq \delta < 1$, we define:
    \begin{align*}
        \gap_{\Del_\delta}(n) \coloneqq  \min_{\substack{\ x, x' \in \bn\ \\ x \neq x'}} \|\mu_{\Del_\delta}(x) - \mu_{\Del_\delta}(x')\|_1  &=2 \min_{\substack{b \in \{-1,0,+1\}^n \\ b \neq 0}} \|\mu_{\Del_\delta}(b)\|_1;\\
        \gap^{\chill}_{\Del_\delta}(n) \coloneqq \min_{\ 0 \leq i < n\ } \min_{\substack {x, x' \in [-1,+1]^n \\ x_j = x'_j \in \bits \forall j < i \\ x_i = -x'_i \in \bits}} \|\mu_{\Del_\delta}(x) - \mu_{\Del_\delta}(x')\|_1
          &= 2\min_{\ d \in [n]\ } \min_{b \in \{0\}^{d-1} \times \{1 \} \times [-1,+1]^{n-d}} \|\mu_{\Del_\delta}(b)\|_1.
    \end{align*}
    In both cases, the equality on the right uses Proposition~\ref{prop:its-linear}.
\end{definition}
It's easy to see that in the mean-based trace reconstruction model, it is information-theoretically possible for an algorithm to succeed if and only if its  cost~$T$ exceeds $2/\gap_{\Del_\delta}(n)$.  Thus characterizing the sample complexity of mean-based trace reconstruction essentially amounts to analyzing~$\gap_{\Del_\delta}(n)$.  For example, to establish our lower bound Theorem~\ref{thm:main-negative}, it suffices to prove that the $\gap_{\Del_\delta}(n) \leq \exp(-\Omega(n^{1/3}))$ for constant $0 < \delta < 1$.

Furthermore, as observed in~\cite{HMPW08}, given an $\gap^{\chill}_{\Del_\delta}(n)/4$-accurate estimate of $\mu_{\Del_\delta}(x)$, as well as the ability to compute the linear function $\mu_{\Del_\delta}(x')$ for any $x' \in [-1,+1]^n$ (or even estimate it to $\gap^{\chill}_{\Del_\delta}(n)/4$-accuracy), one can recover~$x$ exactly in $\poly(n,\log(1/\gap^{\chill}_{\Del_\delta}(n)))$ time by solving a sequence of~$n$ linear programs.\footnote{If the algorithm ``knows'' $\delta$ it can efficiently compute $\mu_{\Del_\delta}(x')$ exactly.  But even if it doesn't ``know'' $\delta$, it can estimate $\delta$ to sufficient accuracy so that $\mu_{\Del_\delta}(x')$ can be estimated to the necessary accuracy, with no significant algorithmic slowdown.} Thus to establish our Theorem~\ref{thm:main-positive}, it suffices to prove that $\gap^{\chill}_{\Del_\delta}(n) \geq \exp(-O(n^{1/3}))$ for constant $0 < \delta < 1$.

\subsection{Reduction to complex analysis} \label{sec:reduc}

Our next important definition is of a polynomial that encodes the components of $\mu_\channel(x)$ in its coefficients --- kind of a generating function for the channel.  We think of its parameter~$z$ as a complex number.
\begin{definition}      \label{def:P}
    Given $x \in \{-1,1\}^n$ and $0 \leq \delta < 1$, we define the \emph{deletion-channel polynomial}
    \[
            P_{\Del_\delta,x}(z) = \sum_{j < n} \mu_{\Del_\delta}(x)_j \cdot z^j,
    \]
    a polynomial of degree less than~$n$.  We extend this definition to $x \in [-1,+1]^n$ using the linearity of $\mu_{\Del_\delta}$.
\end{definition}

We now make the step to elementary complex analysis, by relating the size of a mean trace difference $\mu_{\Del_\delta}(b)$ to the maximum modulus of $P_{\Del_\delta,b}(z)$ on the unit complex circle (or equivalently, the unit complex disk, by the Maximum Modulus Principle):
\begin{proposition} \label{prop:cx}
For any $b \in [-1,1]^n$, we have
\[
         \max_{z \in \bdry D_1(0)} \abs*{P_{\Del_\delta,b}(z)} \leq \|\mu_{\Del_\delta}(b)\|_1 \leq \sqrt{n} \max_{z \in \bdry D_1(0)} \abs*{P_{\Del_\delta,b}(z)}.
\]
\end{proposition}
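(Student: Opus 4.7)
The plan is to prove each of the two inequalities in Proposition~\ref{prop:cx} separately, using only elementary facts about polynomials on the unit circle. Write $c_j := \mu_{\Del_\delta}(b)_j$ for $j = 0, 1, \dots, n-1$, so that $P_{\Del_\delta, b}(z) = \sum_{j<n} c_j z^j$ and $\|\mu_{\Del_\delta}(b)\|_1 = \sum_{j<n} |c_j|$.

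For the \emph{left} inequality, I would simply apply the triangle inequality: for any $z$ with $|z| = 1$,
\[
  |P_{\Del_\delta, b}(z)| = \Bigl|\sum_{j<n} c_j z^j\Bigr| \leq \sum_{j<n} |c_j|\,|z|^j = \sum_{j<n} |c_j| = \|\mu_{\Del_\delta}(b)\|_1.
\]
Taking the maximum over $z \in \bdry D_1(0)$ gives the desired bound.

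For the \emph{right} inequality, the key tool is Parseval's identity on the unit circle, which states that for a polynomial $P(z) = \sum_{j<n} c_j z^j$ with complex coefficients,
\[
  \frac{1}{2\pi}\int_0^{2\pi} |P(e^{i\theta})|^2\,d\theta = \sum_{j<n} |c_j|^2 = \|\mu_{\Del_\delta}(b)\|_2^2.
\]
Since the left-hand side is the average of $|P(e^{i\theta})|^2$ and hence at most its maximum value, we obtain $\|\mu_{\Del_\delta}(b)\|_2 \leq \max_{z \in \bdry D_1(0)} |P_{\Del_\delta,b}(z)|$. Then I would invoke the standard Cauchy--Schwarz bound $\|v\|_1 \leq \sqrt{n}\,\|v\|_2$ for vectors $v \in \R^n$, applied to the $n$-dimensional vector of coefficients $(c_0, \dots, c_{n-1})$, yielding
\[
  \|\mu_{\Del_\delta}(b)\|_1 \leq \sqrt{n}\,\|\mu_{\Del_\delta}(b)\|_2 \leq \sqrt{n}\,\max_{z \in \bdry D_1(0)} |P_{\Del_\delta,b}(z)|.
\]

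I do not anticipate any genuine obstacle: both inequalities are standard facts about trigonometric/algebraic polynomials on the unit circle, and the factor $\sqrt{n}$ arises exactly because the polynomial has (at most) $n$ coefficients. The only subtlety worth noting is that the use of Parseval on $\bdry D_1(0)$ is what naturally dictates the choice of the unit circle as the contour of interest in the proposition; this is precisely the reason the subsequent sections will focus on estimating $|P_{\Del_\delta,b}(z)|$ on (or near) the unit disk.
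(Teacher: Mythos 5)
Your proof is correct and matches the paper's argument exactly: triangle inequality for the left bound, and Cauchy--Schwarz plus Parseval (average of $|P|^2$ on $\bdry D_1(0)$ equals $\|\mu_{\Del_\delta}(b)\|_2^2$, which is at most the max) for the right bound. The only difference is cosmetic ordering of the steps.
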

\begin{proof}
    Recall that $\mu_{\Del_\delta}(b)$ is the length-$n$ vector of coefficients for the polynomial $P_{\Del_\delta,b}(z)$.  The lower bound above is immediate from the triangle inequality.  For the upper bound, we use
    \[
        \|\mu_{\Del_\delta}(b)\|_1^2 \leq n \|\mu_{\Del_\delta}(b)\|_2^2 = n  \Avg_{z \in \bdry D_1(0)} \abs*{P_{{\Del_\delta},b}(z)}^2 \leq n \parens*{\max_{z \in \bdry D_1(0)} \abs*{P_{{\Del_\delta},b}(z)}}^2.
    \]
    Here the first inequality is Cauchy--Schwarz, the equality is an elementary fact about complex polynomials (or Fourier series), and the final inequality is obvious.
\end{proof}
\ignore{
\begin{fact} \label{fact:length-modulus}
For any real vector $v=(v_0,\dots,v_{n-1})$, the squared Euclidean length of $v$ is equal to the average of the squared modulus of $G(v;Z)$ averaged around the unit circle of the complex plane.
\end{fact}
\begin{proof}
\rnote{can/should we cite something here instead of writing this proof?  I guess it's short...}The average squared modulus of $G(v;Z)$ is
\begin{align*}
\Avg_{|z|=1} \left| \sum_{j=0}^{n-1} v_j z^j \right|^2
&= {\frac 1 {2\pi}} \int_{0}^{2 \pi}  \left| \sum_{j=0}^{n-1} v_j e^{ijt} \right|^2 dt
= {\frac 1 {2\pi}} \int_{0}^{2 \pi}  \left| \sum_{j=0}^{n-1} v_j \cos jt + i \sum_{j=0}^{n-1} v_j \sin jt \right|^2 dt\\
&= {\frac 1 {2\pi}} \int_{0}^{2 \pi}  \left(\sum_{j=0}^{n-1} v_j \cos jt\right)^2 + \left(\sum_{j=0}^{n-1} v_j \sin jt \right)^2 dt\\
&= {\frac 1 {2\pi}} \int_{0}^{2 \pi}  \sum_{j=0}^{n-1} v_j^2 (\cos^2 jt + \sin^2 jt)
+ \sum_{j_1 \neq j_2} v_{j_1} v_{j_2} ( \cos j_1 t \cos j_2 t + \sin j_1 t \sin j_2 t)
 dt\\
 &= \sum_{j=0}^{n-1} v_j^2 = \|v\|^2.
\end{align*}
\end{proof}
}

Let us reconsider Definition~\ref{def:gap}. As a factor of $\sqrt{n}$ is negligible compared to the bounds we will prove (which are of the shape $\exp(-\Theta(n^{1/3}))$, we may as well analyze $\max_{z \in \bdry D_1(0)} \abs*{P_{{\Del_\delta},b}(z)}$ rather than $\|\mu_{\Del_\delta}(b)\|_1$ in the definition of $\gap_{\Del_\delta}(n)$ and $\gap^{\chill}_{\Del_\delta}(n)$.  We therefore take a closer look at the deletion-channel polynomial.

\ignore{
\section{Ryan's Oct.\ 6 note about polynomials}

OK, here's how I look at it.  Let me warm up with insertions only.  Model is: you proceed thru each transmission character.  On a given character, you repeatedly: ``transmit a random bit with probability $1-\tau$, transmit the $\tau$rue bit with probability $\tau$''.  You do that until you transmit the true bit.  Then you go on to the next one.  Equivalently: For each true bit, you first transmit Geometric$(\tau) - 1$ random bits, then you transmit the true bit.

Let me analyze this in slightly more generality than we need, so as to be able to talk deletions later.  I will use the notation $i \rightsquigarrow j$ to mean that true bit $x_i$ was eventually transmitted at ``time'' $j$ (so that it becomes $y_j$ in the received string).  In the insertion model this always makes sense.  In the deletion model, sometimes $x_i$ is straight-up deleted.  If you want, write $i \rightsquigarrow \bot$ for this outcome.

OK, sorry about this Anindya, but I think it's pretty helpful to denote bits as $\pm 1$, rather than $\{0,1\}$.  So let's do this.  The key reason is that if you have some $j$ such that $\not \exists i (i \rightsquigarrow j)$, then $\by_j$ has the property that it's uniformly random and hence $\E[\by_j] = 0$. So we have
\[
	\sum_{j} \E[\by_j] z^j = \sum_{j} \sum_{i} \Pr[i \rightsquigarrow j]x_i z^j,
\]
where crucially we used that the conditional expectation of $\by_j$ is $0$ if no $i$ squiggly-arrows to~$j$. Let me add that this convention is pretty convenient in the deletion-channel model too, for $j$'s that end up ``past the length of the source string''.  Like, we can imagine that once the deletion channel is finished transmitting it  just starts transmitting random bits, and we can have the sum over~$j$ extend to infinity (not just to~$n$) if we want.  Anyway, back to the last displayed equation, interchanging the sums it's
\[
	\sum_i x_i \sum_j \Pr[i \rightsquigarrow j] z^j = \sum_i x_i \text{``}\E\text{''}[z^{\bJ_i}].
\]
Here $\bJ_i$ is the random variable denoting the eventual position at which $x_i$ was transmitted. (Of course, $\bJ_1, \bJ_2, \dots$ are not independent random variables.)  I also put the expectation in quotation marks for the sake of the deletion channel, because if $i \rightsquigarrow \bot$ then $\bJ_i$ is undefined (and you should just drop this part in the summation).  However, it's easy to see, in analyzing the inner summation given~$i$, that if you just multiply by $\rho$ you get that it's equivalent to conditioning on $i$ not being deleted, and then $\bJ_i$ is a proper random variable and you can drop the quotation marks.

OK, but again, coming back to the insertion channel.  We now just have to compute $\E[z^{\bJ_i}]$.  In the insertion channel, $\bJ_i$ is distributed as the sum of $i$ independent Geometric$(\tau)$ random variables. So
\[
	\E[z^{\bJ_i}] = \E[z^{\bG_1 + \cdots + \bG_i}] = \E[z^{\bG}]^i,
\]
where $\bG, \bG_1, \bG_2, \dots$ denote iid Geom$(\tau)$'s.  I dunno if there's a super-cute way to compute the $z$-transform of a geometric, but anyway one can just calculate:
\[
	\E[z^{\bG}] = \sum_{r=1}^\infty (1-\tau)^{r-1} \tau z^r = \frac{\tau z}{1 - (1-\tau)z}.
\]
Let me write $\mathrm{Ins}_\tau(z)$ for that function.  So finally,
\[
	\sum_j \E[\by_j] z^j = \sum_i x_i \mathrm{Ins}_\tau(z)^i.
\]
And note that $\mathrm{Ins}_\tau(z)$ traces out the complex circle centered at $\frac{1-\tau}{2-\tau}$, passing through~$1$.  So it's always a nice circle containing the origin, even for $\tau$ very close to~$0$.

Going back up a bit, if you want to do deletion channel, you want to compute $\E[z^{\bJ_i}]$, where recall $\bJ_i$ denotes the location that $x_i$ ends up in, conditioned on it not being deleted.  Here $\bJ_i$ is distributed as Binomial$(i-1, \rho)+1$.  (I'm returning to traditional indexing $1, 2, \dots$ for strings.).  Here it's easy to compute the $z$-transform, b/c you can write the Binomial as $1 + \bB_1 + \cdots + \bB_{i-1}$, where $\bB_j$'s are iid Bernoulli$(\rho$)'s.  So\dots well, etc.  Let me, in the next section, write how I think we could write it properly/carefully.
}

\section{The deletion-channel polynomial} \label{sec:channel-polynomial-delete}

In this section we compute the deletion-channel polynomial.  When the deletion channel is applied to some source string~$x$, each bit $x_i$ is either deleted with probability~$\delta$ or else is transmitted at some position~$j \leq i$ in the received string~$\by$.  Let us introduce (non-independent) random variables $\bJ_0, \dots, \bJ_{n-1}$, where $\bJ_i = \bot$ if $x_i$ is deleted and otherwise $\bJ_i$ is the position in~$\by$ at which~$x_i$ is transmitted.
We thus have
\[
	P_{\Del_\delta,x}(z) = \sum_{j < n} \E_{\by \leftarrow \channel(x)}[\by_j] \cdot z^j = \sum_{j<n} z^j \cdot  \sum_{i<n} \Pr[\bJ_i = j] x_i = \sum_{i<n} x_i \cdot \sum_{j<n} \Pr[\bJ_i = j] z^j = \sum_{i<n} x_i  \cdot \text{``}\E\text{''}[z^{\bJ_i}].
\]
Here we put the expectation $\E$ in quotation marks because the expression should count~$0$ whenever $\bJ_i = \bot$.  Observing that $\Pr[\bJ_i \neq \bot]$ equals the retention probability $\rho = 1-\delta$, if we define the conditional random variable
\[
 \wt{\bJ}_i = (\bJ_i \mid \bJ_i \neq \bot)
\]
(so $\wt{\bJ}_i$ is an $\N$-valued random variable), then we have
\begin{equation}    \label{eqn:P-formula}
    P_{\Del_\delta,x}(z) = \rho  \sum_{i<n} x_i  \cdot \E[z^{\wt{\bJ}_i}].
\end{equation}
Observing that $\wt{\bJ}_i$ is distributed as $\text{Binomial}(i,\rho)$, and letting $\bB_1, \dots, \bB_i$ denote independent Bernoulli random variables with ``success'' probability~$\rho$, we easily compute
\[
    \E[z^{\wt{\bJ}_i}] = \E[z^{\bB_1 + \cdots + \bB_i}] = \E[z^{\bB_1}]^i = ((1-\rho) + \rho z)^i.
\]
Denoting
\[
    w = 1-\rho + \rho z,
\]
we conclude that
\[
    P_{\Del_\delta,x}(z) = \rho \sum_{i < n} x_i w^i.
\]

As $z$ ranges over the unit circle $\partial D_1(0),$ $w$ ranges over the radius-$\rho$ circle $\partial D_{\rho}(1-\rho).$  Recalling Definition~\ref{def:gap} and Proposition~\ref{prop:cx}, we are led to consider the following two quantities for $0 < \rho < 1$ (note that by the Maximum Modulus Principle, these quantities are unchanged whether the $\max$ is over $D_\rho(1-\rho)$ or $\partial D_\rho(1-\rho)$):
\begin{align*}
    \kappa_{\Lit}(\rho,n) &= \min \braces*{\max_{w \in D_\rho(1-\rho)} \abs*{P(w)} : P(w) = b_0 + b_1 w  + \cdots + b_{n-1} w^{n-1},\ b_i \in \{0, \pm 1\} \text{ not all~$0$}},\\
   \kappa_{\bounded}^{\chill}(\rho,d) &= \min \braces*{\max_{w \in D_\rho(1-\rho)} \abs*{P(w)} : P(w) = w^d + b_{d+1} w^{d+1} + \cdots + b_N w^N,\ N \geq d, \ b_i \in D_1(0)}.
\end{align*}
By the Maximum Modulus Principle, both $\kappa_\Lit(\rho,n)$ and $\kappa^{\chill}_{\bounded}(\rho,d)$ are nondecreasing functions of~$0<\rho<1$.  It's also easy to see that both are nonincreasing functions of their second argument for all $0<\rho<1$ (for $\kappa_{\bounded}^{\chill}(\rho,d)$, consider replacing $P(w)$ by $wP(w)$) and observe that $|wP(w)| \leq |P(w)|$ for all $w \in D_\rho(1-\rho)$). It thus follows that
\[
     \kappa_{\bounded}^{\chill}(\rho,d) \leq \kappa_{\Lit}(\rho,d).
\]

Our main technical theorems are the following:
\begin{theorem}                                    \label{thm:Lit-chill-lower}
There is a universal constant $C \geq 1$ such that:
    \begin{align*}
    \text{for $1/d \leq \delta \leq 1/2$,}& & \kappa_{\bounded}^{\chill}(1-\delta,d) &\geq \exp(-C (\delta d)^{1/3}); \\
        \text{for $1/d^{1/2} \leq \rho \leq 1/2$,}& & \quad \kappa_{\bounded}^{\chill}(\rho,d) &\geq \exp(-C(d/\rho)^{1/3}).
   	\end{align*}
\ignore{
$\displaystyle \kappa_{\bounded}^{\chill}(\rho,d) \geq \exp(-C(d/\rho)^{1/3})$ provided that $1/d^{1/2} \leq \rho < 1$.
}
\ignore{
    XXX OLD, STILL TRUE VERSION:
    Provided $1/d^{1/2} \leq r \leq .1$ it holds that $\displaystyle \kappa^{\chill}_{\Lit}(r,d) \geq (4040)^{-(d/r)^{1/3}}$.
}
\end{theorem}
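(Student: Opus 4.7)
The plan is to lower bound $\max_{w\in D_\rho(1-\rho)}|P(w)|$ by exhibiting a single good test point $w^\ast$. Writing $P(w) = w^d f(w)$ where $f(w) = 1+c_1w+\cdots$ with $f(0)=1$ and $|c_k|\le 1$, the two factors $|w|^d$ and $|f(w)|$ must be balanced: the first favors $w$ close to the unit circle, while the second could be adversarially made small if $w$ lies too close to the center of $D_\rho(1-\rho)$, where the free $c_k$ can engineer cancellation.

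For the geometry, the disk $D_\rho(1-\rho)$ is tangent to $\partial D_1(0)$ from inside at $w=1$. A direct computation shows that the circle $\{|z|=1-\eta\}$ meets $D_\rho(1-\rho)$ in an arc of angular half-length $\theta_0(\eta)\approx\min\bigl(\sqrt{\eta/(1-\rho)},\ \rho/(1-\rho)\bigr)$, and at every point of this arc $|w|^d\ge (1-\eta)^d \ge e^{-d\eta/(1-\eta)}$.

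The crux is a lower bound on $\max_{w\in\text{arc}}|f(w)|$. I would apply Jensen's formula to $\log|f|$ on the full circle $\{|z|=1-\eta\}$ centered at the origin: since $f(0)=1$, the angular average of $\log|f|$ on this circle is nonnegative. Splitting the average into the contribution from the sub-arc inside $D_\rho(1-\rho)$ and from its complement, and using the crude coefficient bound $|f(z)|\le 1/(1-|z|)=1/\eta$ on the complement, gives $\max_{\text{arc}}|f|\gtrsim \eta^{\,C/\theta_0(\eta)}$. Combining,
\[
\log|P(w^\ast)| \gtrsim -d\eta \;-\; \bigl(C/\theta_0(\eta)\bigr)\log(1/\eta),
\]
and optimization in $\eta$ splits into two regimes. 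When $\theta_0\sim\sqrt{\eta/(1-\rho)}$ (the regime of small $\eta$, large disk), the optimum $\eta\sim((1-\rho)/d^2)^{1/3}$ yields the first clause $\exp(-O((d(1-\rho))^{1/3})) = \exp(-O((\delta d)^{1/3}))$. When the disk is small so that the arc length saturates ($\theta_0\sim\rho$ for $\eta\gtrsim\rho^2$), re-optimizing $d\eta$ against $(1/\rho)\log(1/\eta)$ yields $\exp(-O((d/\rho)^{1/3}))$, the second clause.

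The main obstacle is the sharpness of the Jensen step: the naive argument above introduces an extraneous $\log(1/\eta)$ factor, which propagates to a $\log d$ in the final exponent and is incompatible with the precise $1/3$-exponent bound the theorem claims. Removing this $\log$ requires more than the single constraint $|f(0)|=1$; one must exploit the full bounded-coefficient hypothesis $|c_k|\le 1$, via a Nazarov- or Remez-type inequality for the arc or via a smoother subharmonic potential replacing the hard cutoff in Jensen's formula. This quantitative strengthening of the arc estimate for $|f|$ is the main technical content of the theorem, and is what distinguishes the sharp bound from what a direct subharmonic argument delivers.
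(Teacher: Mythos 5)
Your high-level plan matches the paper's: write $P(w)=w^d f(w)$, choose a short arc near $w=1$ inside $D_\rho(1-\rho)$, trade off $|w|^d$ against $|f(w)|$, and use subharmonicity to lower-bound $\max|f|$ on the arc. But the step you cannot make --- and correctly flag as the main obstacle --- is precisely the theorem's technical core, so what you have is a reduction, not a proof. As you observe, origin-centered Jensen only yields $\max_{\mathrm{arc}}|f|\gtrsim\eta^{\,O(1/\theta_0)}$, and the parasitic $\log(1/\eta)$ propagates to a $\log d$ in the final exponent; the resulting bound $\exp(-O((\delta d)^{1/3}\log d))$ is strictly weaker than claimed.

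The paper closes the gap by citing Borwein--Erd\'elyi's Corollary~3.2 (Theorem~\ref{thm:be}), which gives $\sup_A|Q|\ge\exp(-C_1/\theta)$ on an arc of angular half-width $\theta$ with no $\log$. It also supplies a self-contained substitute in Section~\ref{sec:improvement} (Theorem~\ref{thm:BE-like}), which makes your vague ``smoother subharmonic potential'' heuristic concrete: instead of the mean-value property on a circle centered at $0$, apply the Mahler-measure lower bound $\GM_{w\in\mathcal{O}}|Q(w)|\ge|Q(c)|$ (Fact~\ref{fact:mahler}) to a circle $\mathcal{O}$ centered at $c=1/3$. The full coefficient hypothesis gives $|Q(1/3)|\ge 1/2$, not just $|Q(0)|=1$; and on such a circle the envelope $1/(1-|w|)$ blows up only near $w=1$, where $\int\log\frac{1}{1-\cos t}\,dt$ converges, so its geometric mean along $\mathcal{O}$ is $O(1)$ and the $\log(1/\eta)$ vanishes. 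It is exactly the relocation of the Jensen/Mahler center from $0$ to $1/3$ that kills the log, and this device is absent from your sketch. A secondary issue: your approximation $\theta_0(\eta)\approx\sqrt{\eta/(1-\rho)}$ drops a factor $\sqrt\rho$; the paper's Law-of-Sines computation gives $1-|w_0|\le(\delta/\rho)\theta^2$, i.e.\ $\theta_0\approx\sqrt{\rho\eta/\delta}$. With the correct geometry the Case~II exponent $(d/\rho)^{1/3}$ comes from the same unsaturated regime via $\theta=(\rho/d)^{1/3}\le\rho$ (legal precisely when $\rho\ge d^{-1/2}$); the ``arc saturates at $\theta_0\sim\rho$'' picture you invoke is not what produces the second clause.
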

\begin{theorem}                                     \label{thm:Lit-upper}
	There is a universal constant $C \geq 1$ such that:
    \begin{align*}
    \text{for $C(\log^3 n)/n \leq \delta \leq 1/2$,}& & \kappa_{\Lit}(1-\delta,n) &\leq \exp(-\Omega(\delta n)^{1/3}); \\
        \text{for $C/n^{1/2} \leq \rho \leq 1/2$,}& & \quad \kappa_{\Lit}(\rho,n) &\leq \exp(-\Omega(n/\rho)^{1/3}).
   	\end{align*}
\end{theorem}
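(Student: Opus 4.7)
The plan is to prove Theorem~\ref{thm:Lit-upper} by explicit construction: for each $n$ and $\rho$ in the stated range, I would exhibit a nonzero Littlewood polynomial of degree less than $n$ whose maximum modulus on $D_\rho(1-\rho)$ meets the claimed bound. My construction is a product
\[
P(w) \;=\; w^d \cdot Q(w),
\]
where $Q$ is a Littlewood polynomial of relatively small degree $L$ that vanishes to high order $K$ at $w=1$, and $d = n - L - 1$ is a shift ensuring $\deg P \leq n-1$. Because multiplication by $w^d$ only translates the coefficients, $P$ inherits the $\{-1,0,+1\}$ property from~$Q$. The polynomial $Q$ itself would be furnished by a Prouhet--Tarry--Escott pair: disjoint sets $A, B \subseteq \{0,1,\ldots,L\}$ of equal size whose first $K$ power-sum moments agree. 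Setting $Q(w) = \sum_{i \in A} w^i - \sum_{i \in B} w^i$ then yields a Littlewood polynomial that automatically vanishes to order $K$ at $w = 1$, and classical PTE constructions (Chebyshev-based, or repeated symmetrization) achieve $L = O(K^2)$.

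To estimate $\max_{D_\rho(1-\rho)}|P|$, I parameterize $w = (1-\rho) + \rho e^{i\theta}$ and write $s = \sin(\theta/2)$, so that $|1-w| = 2\rho s$ and $|w|^2 = 1 - 4\rho(1-\rho)s^2$. The analysis splits on $s$. For small $s$ (near $w=1$), I expand $Q(1 - u) = \sum_{k \geq K} c_k u^k$ using vanishing of order $K$, with $|c_k| \leq \binom{L+1}{k+1}$; taking the leading term and checking that the remaining tail is geometrically controlled gives $|Q(w)| = O\bigl((e K \rho s)^K\bigr)$ provided $K \rho s \ll 1$. For larger $s$, I use the trivial bound $|Q(w)| \leq L$ together with the geometric decay $|w|^d \leq \exp(-c\, d \rho (1-\rho)\, s^2)$. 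Choosing the transition at $s \asymp 1/(K\rho)$ makes the two cases overlap cleanly.

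Setting $K = \Theta((n/\rho)^{1/3})$ with a sufficiently large implicit constant (and $d = n - L - 1 \sim n$, which is valid because $L = O(K^2) = O((n/\rho)^{2/3}) \leq n$ in the regime $\rho \geq C/n^{1/2}$), the algebra works out as follows: at the worst $s = s_\ast$ the small-$s$ bound gives $\log|P| \leq -\tfrac{K}{2} + O(K\log c_0) + O(K)$, and for large $s$ the decay $d\rho s^2 \gtrsim d/(K^2 \rho) \gtrsim (n/\rho)^{1/3}$ dominates the $O(\log L)$ trivial-bound term. A sufficiently large constant absorbs the $O(K)$ overhead, yielding $|P(w)| \leq \exp(-\Omega((n/\rho)^{1/3}))$ throughout the disk. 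The symmetric parameter choice $K = \Theta((\delta n)^{1/3})$ yields the dense-deletion regime $\exp(-\Omega((\delta n)^{1/3}))$ bound, using that the formulas for $|w|^2$ and $|1-w|$ depend symmetrically on $\rho$ and $\delta$.

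The main obstacle is the construction and control of $Q$: the target $L = O(K^2)$ is much better than the trivial Thue--Morse iteration $L = 2^{\Theta(K)}$ (which only gives $K = O(\log n)$ and hence a $\poly(n)$-type bound, far from $\exp(-n^{1/3})$). Obtaining $L = O(K^2)$ requires a genuinely sharper PTE construction, and one must verify both the Littlewood property (distinctness of the supports $A, B$) and that the Taylor coefficients $c_k$ of $Q$ around $w=1$ admit the needed bounds uniformly in $k$. A secondary technical concern is the uniform bound on $|Q(w)|$ in the ``near $w=1$'' regime; if the naive per-coefficient bound is lossy, one may need to invoke a Cauchy-type estimate on $Q/(1-w)^K$ on a slightly enlarged disk around $w=1$, which requires carefully staying inside the unit disk while doing so.
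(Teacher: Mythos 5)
Your construction has the right architecture and in fact mirrors the paper's proof quite closely: a base Littlewood polynomial $Q$ of degree $L = O(K^2)$ that is very small near $w = 1$, multiplied by $w^d$ with $d \approx n$, and then a two-case estimate (near $w=1$ vs.\ far from $w=1$, with the geometric decay of $|w|^d$ handling the far region). Your parameter bookkeeping is also correct: $K = \Theta((n/\rho)^{1/3})$ and $K = \Theta((\delta n)^{1/3})$ in the two regimes, with the constraint $L \le n$ checked. The paper's proof has exactly this skeleton, setting $P(w) = w^{\lfloor n/2\rfloor}Q_k(w)$ with $k = O(1/a^2)$ and $a \approx 1/K$, and handling $R \setminus A$ versus $R \cap A$ by the same triangle/Law-of-Cosines computation you carry out with $|w|^2 = 1 - 4\delta\rho\sin^2(\theta/2)$.

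The genuine gap is in the key ingredient. You want a Littlewood polynomial $Q$ of degree $L = O(K^2)$ that vanishes to exact order $K$ at $w=1$, and you assert that ``classical PTE constructions (Chebyshev-based, or repeated symmetrization) achieve $L = O(K^2)$.'' Neither of the constructions you name does this: repeated symmetrization is precisely the Thue--Morse/Prouhet iteration with $L = 2^K - 1$ (as you concede later in the proposal), and I am not aware of a ``Chebyshev-based'' construction with this property. More importantly, the minimal degree $N(K)$ of a $\{-1,0,1\}$-coefficient polynomial vanishing to order $K$ at $1$ is \emph{not} known to be $O(K^2)$; this is one of the longstanding open questions surrounding the Prouhet--Tarry--Escott problem. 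The lower bound $N(K) = \Omega(K^2)$ is known, but the best known upper bounds carry extra logarithmic factors (on the order of $K^2 \log K$), and with $L = K^2 \log K$ your optimization produces $K = \Theta\bigl((n\delta/\rho)^{1/3}/(\log n)^{2/3}\bigr)$, giving a final bound of $\exp\bigl(-\Omega\bigl((n/\rho)^{1/3}/(\log n)^{2/3}\bigr)\bigr)$ --- strictly weaker than the theorem's $\exp(-\Omega(n/\rho)^{1/3})$.

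The paper sidesteps exactly this difficulty. Rather than demand an order-$K$ zero at $w=1$, it invokes the weaker (and actually proven) Borwein--Erd\'elyi--K\'os result (Theorem~\ref{thm:BEK}): there is a nonzero Littlewood polynomial $Q_k$ of degree $\le k$ with $|Q_k(t)| \le \exp(-c_0\sqrt{k})$ uniformly on the \emph{real} interval $[0,1]$, with no vanishing requirement at all. The passage to a complex disk $D_{6a}(1)$ around $w=1$ is then handled by a Hadamard-three-circles argument (Theorem~\ref{thm:BE-ellipse}, quoted from Borwein--Erd\'elyi), which converts real-axis smallness into smallness on an elliptical neighborhood at the cost of a controllable $\exp(O(ka))$ factor. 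That combination delivers exactly the $\exp(-c/a)$-smallness on $D_{6a}(1)$ that your Taylor-coefficient estimate was meant to produce, but using only a known construction. So the conclusion: your route is correct in outline but rests on an unproven (and probably open) variant of the PTE problem; to close the gap you should replace the order-$K$-zero requirement with the BEK $[0,1]$-smallness result and push it into the complex plane via a three-circles or Chebyshev-comparison argument, as the paper does.
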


By Definition~\ref{def:gap}, Proposition~\ref{prop:cx}, and the discussion at the end of Section~\ref{sec:complexity}, we have that Theorem~\ref{thm:Lit-upper} implies both Theorem~\ref{thm:main-negative} and the more general sample complexity lower bound in Theorem~\ref{thm:subconstant}. Regarding the algorithmic upper bounds in Theorems~\ref{thm:main-positive} and~\ref{thm:subconstant}, again from Definition~\ref{def:gap} and Proposition~\ref{prop:cx} we get that
\begin{align*}
    \gap^{\chill}_{\Del_\delta}(n) &\geq 2{\rho}\cdot  \min_{0 \leq d < n} \braces*{\max_{w \in D_\rho(1-\rho)} \abs*{P(w)} : P(w) = w^d + b_{d+1} w^{d+1} + \cdots + b_{n-1} w^{n-1}, \ b_i \in [-1,+1]} \\
    &\geq 2{\rho} \cdot \min_{0 \leq d < n} \kappa_{\bounded}^{\chill}(\rho,d) \geq 2{\rho} \cdot \kappa_{\bounded}^{\chill}(\rho, n).
\end{align*}
Thus the upper bounds Theorems~\ref{thm:main-positive} and~\ref{thm:subconstant} likewise follow from Theorem~\ref{thm:Lit-chill-lower} and the discussion at the end of Section~\ref{sec:complexity}. (Note that if $\delta \leq O(\log^3 n)/n$, we can always pay the bound for the larger value $\delta = \Theta(\log^3 n)/n)$, which is $\poly(n)$.)

\section{Proof of Theorem~\ref{thm:Lit-chill-lower}}

We will need the following:
\begin{theorem}   \label{thm:be}
    (\cite{BE97}, Corollary~3.2, $M = 1$ case.)  Let $Q(w)$ be a polynomial with constant coefficient~$1$ and all other coefficients bounded by~$1$ in modulus.  Fix any $0 < \theta \leq \pi$, and let $A$ be the arc $\{e^{i t} : -\theta \leq t \leq \theta\}$.  Then $\sup_{w \in A} |Q(w)| \geq \exp(-C_1/\theta)$ for some universal constant~$C_1$.
\end{theorem}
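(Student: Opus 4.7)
Theorem~\ref{thm:be} is quoted verbatim from \cite{BE97} (Borwein--Erd\'elyi, Corollary~3.2, specialized to $M=1$). I sketch the potential-theoretic approach one would naturally pursue, and indicate the main obstacle.

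Two facts from the hypothesis drive the argument: (i)~$\log|Q(0)| = 0$, since the constant coefficient of $Q$ is $1$; and (ii)~summing the geometric series in the coefficient bound gives $|Q(w)| \leq 1/(1-|w|)$ for all $|w| < 1$, a bound that crucially does not depend on $\deg Q$. The plan is to combine these via harmonic majorization of the subharmonic function $\log|Q|$ using the two-constants theorem. A naive application on the full unit disk fails, because on $\partial D_1(0)\setminus A$ the only available bound is the crude triangle inequality $|Q| \leq \deg Q + 1$, yielding only the degree-dependent estimate $\sup_A |Q| \geq (\deg Q + 1)^{-\pi/\theta}$.

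The fix is to apply the two-constants theorem on a subdomain $\Omega \subsetneq D_1(0)$ whose boundary meets $\partial D_1(0)$ only in $A$, and whose remaining boundary lies in the shrunken disk $\{|w| \leq 1 - c\theta\}$, where the degree-free bound gives $|Q| \leq 1/(c\theta)$. A natural choice is $\Omega = D_{1-c\theta}(0) \cup S$, where $S$ is a ``fat angular sector'' of angular opening $2\theta$ and radial thickness $c\theta$ attached to the disk and extending outward to $A$. With this choice the two-constants theorem yields
\[
0 \;=\; \log|Q(0)| \;\leq\; \omega \cdot \log M \;+\; (1-\omega) \cdot O(\log(1/\theta)),
\]
where $M := \sup_{w\in A}|Q(w)|$ and $\omega := \omega_A(0,\Omega)$. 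With the sector having aspect ratio $\Theta(1)$, a Brownian-motion argument shows $\omega = \Theta(\theta)$: Brownian motion from $0$ enters $S$ with probability proportional to the angular fraction $2\theta/(2\pi)$, and conditional on entry reaches the far side $A$ with constant probability. Rearranging yields $\log M \geq -O(\log(1/\theta)/\theta)$.

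This is weaker than the target $\log M \geq -O(1/\theta)$ by a factor of $\log(1/\theta)$, and shaving off that logarithm is the main obstacle. The extra log comes from the bound $\log|Q| \leq \log(1/(c\theta))$ on the inner boundary of $\Omega$, which is essentially tight for the ``geometric-series Littlewood polynomial'' $Q(w) = 1 + w + w^2 + \cdots + w^{N-1}$ regardless of how small $M$ is. To match \cite{BE97}'s bound of $\exp(-C_1/\theta)$ one must exploit propagation of smallness: the fact that a polynomial with $\sup_A|Q|$ small must itself be small on an expanded neighborhood of $A$ extending into the disk, not merely bounded by the worst-case $1/(1-|w|)$. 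The route taken in \cite{BE97} bypasses this subtlety entirely by applying the Joukowski substitution $x = \tfrac12(w + w^{-1})$, which transports the arc $A$ to the real interval $[\cos\theta, 1]$ and reduces the problem to a Remez-type lower bound for polynomials of bounded coefficients on an interval---a classical (and much cleaner) one-variable estimate.
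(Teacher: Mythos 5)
You are correct that the paper itself does not prove Theorem~\ref{thm:be}; it is quoted from \cite{BE97} as a black box. So there is no ``paper's own proof'' of this exact statement to compare against. However, Section~\ref{sec:improvement} of the paper (Theorem~\ref{thm:BE-like} together with Fact~\ref{fact:mahler}) does give a self-contained substitute that yields the same $\exp(-O(1/\theta))$ conclusion, and the paper explicitly remarks that it could have been used in place of the black-box citation.

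Your sketch correctly diagnoses the obstacle --- harmonic majorization from the origin forces you to pay $\log|Q| \leq \log(1/(c\theta))$ on the inner boundary, which bleeds a $\log(1/\theta)$ factor into the exponent --- but you do not find the fix, and the fix is not the Joukowski substitution. The paper's trick is to abandon the origin-centered domain altogether: apply the harmonic-majorization / Mahler-measure inequality $\GM_{w \in \calO}(|Q(w)|) \geq |Q(c)|$ on a circle $\calO = \bdry D_r(1/3)$ centered at $c = 1/3$ rather than at $0$, with $r \leq 2/3$ so that $\calO$ still lies in the closed unit disk. Two things now work in your favor simultaneously. First, $|Q(1/3)| \geq 1 - \sum_{k\geq1} 3^{-k} = 1/2$, so the center value is still bounded below by a constant. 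Second, and crucially, the only place where $1/(1-|w|)$ blows up on $\calO$ is near $w = 1$, and there it blows up only like $1/(1-\cos t) \sim 2/t^2$; the logarithm of that is integrable, so $\GM_{w\in\calO}\bigl(\tfrac{1}{1-|w|}\bigr)$ is a universal constant (computed to be $9$ in~\eqref{eqn:combo2}) with no $\theta$-dependence at all. Splitting the geometric mean over $\calO$ into the arc $A$ (angular fraction $\theta/\pi$) and its complement $A'$ then gives
\[
\tfrac12 \;\leq\; \GM_{w\in A}(|Q(w)|)^{\theta/\pi}\cdot \GM_{w\in A'}(|Q(w)|)^{1-\theta/\pi} \;\leq\; \GM_{w\in A}(|Q(w)|)^{\theta/\pi}\cdot 9,
\]
and rearranging yields $\GM_{w\in A}(|Q(w)|) \geq 9/18^{\pi/\theta} = \exp(-O(1/\theta))$, with no logarithmic loss. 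In short: your two-constants framework is the right instinct, but you used the worst-case pointwise bound $\sup_{A'}\log(1/(1-|w|)) = \log(1/(c\theta))$ where the paper instead integrates $\log(1/(1-|w|))$ around an off-center circle and finds the integral is $O(1)$; that single change recovers the full strength of Borwein--Erd\'elyi's estimate. The Joukowski/Remez route you attribute to \cite{BE97} is a genuinely different argument and is not the one the paper develops.
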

We remark that for any $0 < r < 1$, Theorem~\ref{thm:be} holds for the arc $A = \{r e^{i t} : -\theta \leq t \leq \theta\}$ with no change in the constant~$C_1$. This is immediate by applying the theorem to $\wt{Q}(w) = Q(rw)$.

\begin{proof}[Proof of Theorem~\ref{thm:Lit-chill-lower}.]
Fix $d \geq 2$ (else the hypotheses are vacuous) and  $\delta + \rho = 1$.  We call Case~I when $1/d \leq \delta < 1/2$, and we call Case~II when $1/d^{1/2} \leq \rho \leq 1/2$.  Select
\[
	\theta = \begin{cases}
    			\frac{1}{2(\delta d)^{1/3}} & \text{in Case~I,} \\
                \left(\frac{\rho}{d}\right)^{1/3} \vphantom{x^{x^{x^{x^{x^{x}}}}}}& \text{in Case~II.}
    	      \end{cases}
\]
In Case~I we have $\theta \leq 1/2$, and in Case~II we have $\theta \leq \rho \leq 1/2$.

Let $P(w) = w^d \cdot Q(w)$, where $Q(w)$ is a polynomial with constant coefficient~$1$ and all other coefficients bounded by~$1$ in modulus.  We need to show
\begin{equation} \label{eqn:finisher}
	\max_{w \in D_\rho(\delta)} |P(w)| \geq \begin{cases}
    			\exp(-C(\delta d)^{1/3}) & \text{in Case~I,} \\
                \exp(-C(d/\rho)^{1/3}) & \text{in Case~II.}
   \end{cases}
\end{equation}
In Case~I, the ray $\{re^{i\theta} : r > 0\}$ intersects $\bdry D_\rho(\delta)$ at a unique point, call it~$w_0$.  In Case~II, the same ray intersects $D_\rho(\delta)$ twice (this uses $\theta \leq \rho$); call the point of larger modulus~$w_0$.  In either case, consider the triangle formed in the complex plane by the points $0$, $\delta$, and $w_0$; it has some acute angle~$\alpha$ at~$w_0$ and an angle of~$\theta$ at~$0$. By the Law of Sines,
\begin{multline*}
	\frac{\rho}{\sin \theta} = \frac{\delta}{\sin \alpha} = \frac{|w_0|}{\sin(\pi - \theta - \alpha)} = \frac{|w_0|}{\sin(\theta + \alpha)} = \frac{|w_0|}{\sin\theta\cos \alpha + \sin\alpha \cos\theta} \\
    \implies |w_0| = \delta \cos \theta + \rho \cos \alpha = \delta \cos \theta + \rho \sqrt{1-(\tfrac{\delta}{\rho})^2 \sin^2 \theta} \geq \delta (1-\theta^2) + \rho (1-(\tfrac{\delta}{\rho})^2 \theta^2) = 1 - \tfrac{\delta}{\rho}\theta^2.
\end{multline*}
(The last inequality used $\theta \leq \rho$ in Case~II.) Writing $r_0 = |w_0|$,  Theorem~\ref{thm:be} (and the subsequent remark) implies that
\begin{equation}    \label{eqn:proveth-me}
    \max_{w \in A} |Q(w)| \geq \exp(-C_1/\theta) \quad \text{for } A = \{r_0 e^{it} : -\theta \leq t \leq \theta\} \subset D_\rho(\delta).
\end{equation}
Thus
\[
    \max_{w \in D_\rho(\delta)} |P(w)| \geq \max_{w \in A} |P(w)| \geq r_0^d \cdot \exp(-C_1/\theta) \geq (1-(\delta/\rho)\theta^2)^d \cdot  \exp(-C_1/\theta) \geq \exp(-2(\delta/\rho)\theta^2d - C_1/\theta)
\]
(the last inequality again using $\theta \leq \rho$ in Case~II). Substituting in the value of~$\theta$ yields~\eqref{eqn:finisher}. \ignore{
Let $1/d^{1/2} \leq \rho < 1$ and let $P(w) = w^dQ(w)$, where $Q(w)$ is a polynomial with constant coefficient 1 and all other coefficients bounded by 1 in modulus.  We shall show, for some universal~$C$, that
\begin{equation}    \label{eqn:prove-me-1}
    \max_{w \in A} \abs*{P(w)} \geq \exp(-C(d/\rho)^{1/3}),
\end{equation}
where $A$ is an arc lying entirely in $D_\rho(1-\rho)$; this clearly gives the theorem.

Set $\theta = (\rho/d)^{1/3}$. Since $\rho \geq 1/d^{1/2}$ we have $\theta \leq \rho$; this inequality implies that the ray $\{Re^{i\theta} : R> 0\}$ intersects  $\bdry D_\rho(1-\rho)$ (once if $\rho \geq 1/2$, twice if $\rho < 1/2$).    Let $w_0$ be the intersection point of largest modulus,~$R_0$.  It is simple to see that $1 - R_0$ is proportional to~$\theta^2/\rho$; it follows that $R_0 \geq \exp(-C_2 \theta^2/\rho)$ for some universal~$C$.  Letting $A = \{R_0e^{it} : -\theta \leq t \leq \theta\} \subset D_\rho(1-\rho)$, we get
\begin{equation}    \label{eqn:combine1}
    w \in A \implies |w|^d = R_0^d \geq \exp(-C_2 \theta^2/\rho)^d\geq \exp(-C_2(d/\rho)^{1/3}).
\end{equation}
  At the same time, Theorem~\ref{thm:be} and our remark after it tell us that
\begin{equation}    \label{eqn:combine2}
    \max_{w \in A} |Q(w)| \geq \exp(-C_1/\theta) = \exp(-C_1 (d/\rho)^{1/3}).
\end{equation}
Combining~\eqref{eqn:combine1} and \eqref{eqn:combine2} establishes~\eqref{eqn:prove-me-1}.
%
}
\end{proof}

\subsection{An improved version}        \label{sec:improvement}
Although we don't need it for our application, we can actually provide a stronger version of the results in the previous section that is also self-contained --- i.e., it does not rely on Borwein and Erd\'elyi's Theorem~\ref{thm:be}.  We used that theorem to establish~\eqref{eqn:proveth-me}; but more strongly than~\eqref{eqn:proveth-me}, we can show there exists an arc $A \subset D_\rho(\delta)$ such that
\[
    \GM_{w \in A} |Q(w)| \geq \exp(-O(1/\theta)),
\]
where the left-hand side here denotes the \emph{geometric mean} of $|Q|$ along~$A$.  (Of course, this is at most the max of $|Q|$ along~$A$.)  To keep the parameters simpler, we will assume $\rho \leq 1/3$ (this is the more interesting parameter regime anyway, and it is sufficient to yield our Theorem~\ref{thm:main-positive}).  Our alternate arc $A$ will be
\[
    A = \{1/3 + r e^{it} : -\theta \leq t \leq \theta\},
\]
where $0 < r < 2/3$ is the larger real radius such that $1/3 + r e^{\pm i \theta} \in \bdry D_\rho(\delta)$.  We remark that still $A \subset D_\rho(\delta)$, by virtue of $\theta \leq \rho \leq 1/3$, and it is not hard to show that the the endpoint of~$A$, call it $w' = 1/3 + re^{i\theta} \in \bdry D_\rho(\delta)$, again satisfies $|w'| \geq 1 - \Omega(\tfrac{\delta}{\rho} \theta^2)$.  Thus instead of using Theorem~\ref{thm:be} as a black box, we could have completed our proof of Theorem~\ref{thm:Lit-chill-lower} using the following:
\begin{theorem} \label{thm:BE-like}
    Let $Q(w)$ be a polynomial with constant coefficient~$1$ and all other coefficients in~$D_1(0)$.  Fix any $0 < \theta \leq \pi$, $0 \leq r \leq 2/3$, and let $A$ be the arc $\{1/3 + re^{i t} : -\theta \leq t \leq \theta\}$.  Then $\GM_{w \in A}(|Q(w)|) \geq 9/18^{\pi/\theta}$.
\end{theorem}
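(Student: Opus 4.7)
My plan is to apply Jensen's formula to $Q$ on the disk $D_r(1/3)$ centered at $1/3$ --- the choice of $1/3$ being crucial, as explained below --- and then to bound $|Q|$ on the complementary arc of the circle using only the elementary pointwise estimate $|Q(w)| \leq 1/(1-|w|)$.

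First I would observe that since $Q(w) = 1 + \sum_{i \geq 1} b_i w^i$ with $|b_i| \leq 1$, the choice of $1/3$ as the arc's center ensures $|Q(1/3)| \geq 1 - \sum_{i \geq 1}(1/3)^i = 1/2$. Writing $w(t) := 1/3 + re^{it}$, Jensen's formula on $D_r(1/3)$ gives $\int_{-\pi}^{\pi} \log|Q(w(t))|\,dt \geq 2\pi\log|Q(1/3)| \geq -2\pi\log 2$. Splitting this integral into the arc $A$ contribution $I_A := \int_{-\theta}^{\theta} \log|Q(w(t))|\,dt$ and the complementary-arc contribution $I_{A^c} := \int_{\theta < |t| \leq \pi}\log|Q(w(t))|\,dt$, the desired bound $\log \GM_{w \in A}|Q(w)| = I_A/(2\theta) \geq \log 9 - (\pi/\theta)\log 18$ reduces to showing $I_{A^c} \leq 2(\pi-\theta)\log 9$. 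To bound $I_{A^c}$, the coefficient-based estimate $|Q(w)| \leq \sum_{i \geq 0}|w|^i = 1/(1-|w|)$ (valid for $|w| < 1$), combined with the direct identity $1 - |w(t)|^2 = (2/3 - r)(4/3 + r) + (4r/3)\sin^2(t/2) =: p + q\sin^2(t/2)$, yields $\log|Q(w(t))| \leq \log 2 - \log(p + q\sin^2(t/2))$ on $A^c$. After substituting $s = t/2$, the required bound becomes
\[
    \int_{\theta/2}^{\pi/2}\log(p + q\sin^2 s)\,ds \geq \frac{\pi-\theta}{2}\log\frac{2}{9}.
\]

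The main analytic step will then be to invoke the classical identity
\[
    \int_0^{\pi/2}\log(a^2\cos^2 s + b^2\sin^2 s)\,ds = \pi\log\frac{a+b}{2},
\]
applied with $a = \sqrt{p}$ and $b = \sqrt{p+q}$, which gives the full-interval integral $\pi\log((\sqrt{p}+\sqrt{p+q})/2)$. Since the integrand $\log(p + q\sin^2 s)$ is nondecreasing in $s$ on $[0,\pi/2]$ (as $q \geq 0$), the average on the shorter interval $[\theta/2, \pi/2]$ is at least the average on $[0, \pi/2]$, namely $2\log((\sqrt{p}+\sqrt{p+q})/2)$. Finally, the elementary identity $p + q = 1 - (r - 1/3)^2 \geq 8/9$ for $r \in [0, 2/3]$ gives $\sqrt{p} + \sqrt{p+q} \geq \sqrt{p+q} \geq 2\sqrt{2}/3$, so the shorter-interval average is at least $2\log(\sqrt{2}/3) = \log(2/9)$, completing the argument.

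I expect the main obstacle to be recognizing that the classical integral identity $\int_0^{\pi/2}\log(a^2\cos^2 s + b^2\sin^2 s)\,ds = \pi\log((a+b)/2)$ is exactly what is needed; once this is identified, all the remaining pieces (the easy $|Q(1/3)| \geq 1/2$ bound, Jensen's inequality, the pointwise $|Q(w)| \leq 1/(1-|w|)$ bound, monotonicity of the integrand in $s$, and the inequality $p + q \geq 8/9$) fit together without further difficulty. The identity itself can be derived, for instance, by writing the integrand as $2\log|a\cos s + ib\sin s|$ and reducing to the standard Mahler-measure-type fact $\int_0^{2\pi}\log|1 - ce^{-iu}|\,du = 2\pi\max(\log|c|, 0)$.
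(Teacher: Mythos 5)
Your proposal is correct and is essentially the paper's proof: both rely on the Jensen/Mahler-measure lower bound $\GM_{\bdry D_r(1/3)}|Q| \geq |Q(1/3)| \geq 1/2$, the pointwise estimate $|Q(w)| \leq 1/(1-|w|)$, and a monotonicity argument to control the complementary arc. The only small variation is in the final integral: the paper first pushes the radius out to $r = 2/3$ (which only increases $1/(1-|w|)$) so that it can evaluate the single explicit integral $\int_{-\pi}^\pi \ln(1-\cos t)\,dt$, whereas you keep $r$ general, apply the classical identity $\int_0^{\pi/2}\log(a^2\cos^2 s+b^2\sin^2 s)\,ds = \pi\log\tfrac{a+b}{2}$ directly, and then invoke $p+q = 1-(r-\tfrac13)^2 \geq \tfrac89$; these are equivalent routes to the same bound.
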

Our proof will require one standard fact from the theory of ``Mahler measures'':
\begin{fact}                                        \label{fact:mahler}
    Let $Q$ be a complex polynomial and let $\calO$ be a circle in the complex plane with center~$c$.  Then $\GM_{w \in \calO}(|Q(w)|) \geq |Q(c)|$.
\end{fact}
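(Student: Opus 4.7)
The plan is to apply Fact~\ref{fact:mahler} to the full circle $\bdry D_r(1/3)$ containing $A$, and then to separately bound the geometric mean of $|Q|$ over the complementary arc $A^c := \bdry D_r(1/3) \setminus A$ from above. Decomposing the log-integral over the circle into the parts over $A$ and $A^c$,
\[
2\pi \log \GM_{\bdry D_r(1/3)}(|Q|) = 2\theta \log \GM_A(|Q|) + (2\pi - 2\theta) \log \GM_{A^c}(|Q|),
\]
and solving for the middle term will give the desired bound. Fact~\ref{fact:mahler} immediately yields $\GM_{\bdry D_r(1/3)}(|Q|) \geq |Q(1/3)|$, and the triangle inequality combined with $|b_k| \leq 1$ yields $|Q(1/3)| \geq 1 - \sum_{k \geq 1}(1/3)^k = 1/2$.

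The core of the argument is the matching upper bound $\log \GM_{A^c}(|Q|) \leq \log 9$. For this I use the pointwise estimate $|Q(w)| \leq \sum_{k \geq 0} |w|^k = 1/(1 - |w|)$, valid for $|w| < 1$ (with an integrable singularity at $|w| = 1$). Parametrizing $w(t) = 1/3 + re^{it}$, one computes $|w(t)|^2 = 1/9 + (2r/3)\cos t + r^2$, which is monotone increasing in $r$; so the integrand $\log(1/(1-|w(t)|))$ is worst at $r = 2/3$, where $1 - |w(t)|^2 = (8/9)\sin^2(t/2)$ and hence $\log(1/(1-|w|)) = \log(1+|w|) + \log(9/8) - 2\log|\sin(t/2)|$. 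Applying the classical identity $\int_0^{\pi/2} \log \sin u\,du = -(\pi/2)\log 2$ together with $\log(1+|w|) \leq \log 2$, a direct calculation gives
\[
\int_{A^c} \log \tfrac{1}{1-|w(t)|}\,dt \leq 2(\pi-\theta)\log 9 + 8\Bigl[\int_0^{\theta/2}\log\sin u\,du + (\theta/2)\log 2\Bigr].
\]
It therefore suffices to prove the auxiliary inequality $f(\theta) := \int_0^{\theta/2}\log\sin u\,du + (\theta/2)\log 2 \leq 0$ for $\theta \in [0, \pi]$. One checks $f(0) = f(\pi) = 0$, and $f''(\theta) = \tfrac{1}{4}\cot(\theta/2) > 0$ on $(0,\pi)$, so $f$ is convex and therefore $\leq 0$ throughout.

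Combining the lower and upper bounds via the decomposition,
\[
2\theta \log \GM_A(|Q|) \geq 2\pi \log(1/2) - (2\pi - 2\theta) \log 9 = -2\pi \log 18 + 2\theta \log 9,
\]
which rearranges exactly to $\GM_A(|Q|) \geq 9 / 18^{\pi/\theta}$. The main technical obstacle is the sharpness of the bound $\log \GM_{A^c}(|Q|) \leq \log 9$: simply extending the integral from $A^c$ to the full circle gives the weaker conclusion $\GM_A \geq 18^{-\pi/\theta}$ without the leading factor of $9$, so the convexity argument for $f$ (or some equivalent endpoint refinement) is essential to match the stated constant.
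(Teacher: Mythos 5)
You have proved the wrong statement. The assignment was Fact~\ref{fact:mahler} itself --- the Mahler-measure-type inequality $\GM_{w\in\calO}(|Q(w)|)\geq |Q(c)|$ for a circle $\calO$ centered at $c$ --- but your proposal \emph{invokes} Fact~\ref{fact:mahler} in its very first sentence and then goes on to derive the bound $\GM_{w\in A}(|Q(w)|)\geq 9/18^{\pi/\theta}$, which is the conclusion of Theorem~\ref{thm:BE-like}, not of Fact~\ref{fact:mahler}. Nothing in your argument proves the geometric-mean inequality that was requested; you've treated it as a known tool.

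For the record, the fact itself has a short self-contained proof along classical lines: normalize so $\calO=\bdry D_1(0)$ and $c=0$, factor $Q(w)=a_0\prod_i(w-\alpha_i)$, use multiplicativity of $\GM_{\calO}(|\cdot|)$ together with the elementary identity $\GM_{w\in\calO}(|w-\alpha|)=\max\{|\alpha|,1\}$ (Jensen's formula) to get $\GM_{\calO}(|Q|)=|a_0|\prod_{|\alpha_i|\ge1}|\alpha_i|$, and then note this is at least $|a_0|\prod_i|\alpha_i|=|Q(0)|$. As a secondary comment, your route to Theorem~\ref{thm:BE-like} differs mildly from the paper's: the paper upper-bounds $\GM_{A'}(1/(1-|w|))$ by the full-circle $\GM$ at radius $2/3$ and evaluates a clean closed-form integral, whereas you integrate only over the complementary arc and close the gap with a convexity argument about $f(\theta)=\int_0^{\theta/2}\log\sin u\,du + (\theta/2)\log 2$; both recover the constant $9/18^{\pi/\theta}$. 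One small imprecision on your side: $|w(t)|^2 = 1/9 + (2r/3)\cos t + r^2$ is not monotone increasing in $r$ for all $t$ (it can dip when $\cos t<0$); what is true, and what you actually need, is that its maximum over $r\in[0,2/3]$ is always attained at $r=2/3$, which follows by comparing the two endpoints.
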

\begin{proof}
    By a linear transformation we may assume $\calO$ is the unit circle $\bdry D_1(0)$.  Express $Q(w) = a_0 \prod_{i}(w-\alpha_i)$, where the $\alpha_i$'s are the roots of~$Q$.  Then $\GM_{w \in \calO}(|Q(w)|)$ --- known as $Q$'s \emph{Mahler measure}, see e.g.~\cite{Smyth08} --- is exactly equal to $|a_0| \prod_{i \in I} |\alpha_i|$, where $I = \{i : |\alpha_i| \geq 1\}$.  (Since $\GM_{w \in \calO}(|\cdot|)$ is multiplicative, this statement follows immediately from the elementary fact that $\GM_{w \in \calO}(|w - \alpha|) = \max\{|\alpha|, 1\}$.) But clearly we have  $|a_0| \prod_{i \in I} |\alpha_i| \geq |a_0| \prod_i |\alpha_i| = |Q(0)|$.
\end{proof}

We can now establish Theorem~\ref{thm:BE-like}:
\begin{proof}[Proof of Theorem~\ref{thm:BE-like}]
    Using the bounds on $Q$'s coefficients we have:
    \begin{align}
        |Q(w)|   &\leq 1 + |w| + |w|^2 + \cdots = \frac{1}{1-|w|} \text{ for } w \in D_1(0); \label{eqn:bdry-bound0} \\
        |Q(1/3)| &\geq 1 - |1/3| - |1/3|^2 - \cdots = 1/2.
    \end{align}
    Let us apply Fact~\ref{fact:mahler} with $\calO = \bdry D_r(1/3) \supset A$, writing $A'$ for the complementary arc to~$A$ in~$\calO$.   We get
    \begin{equation}    \label{eqn:almost-done0}
       1/2 \leq \GM_{w \in \calO}(|Q(w)|) = \GM_{w \in A}(|Q(w)|)^{\theta/\pi} \cdot \GM_{w \in A'}(|Q(w)|)^{1-\theta/\pi}.
    \end{equation}
    And by~\eqref{eqn:bdry-bound0} we have
    \begin{equation} \label{eqn:combo1}
        \mathop{\GM}_{w \in A'}(|Q(w)|) \leq \mathop{\GM}_{w \in A'}(\tfrac{1}{1-|w|}) \leq \mathop{\GM}_{w \in \calO}(\tfrac{1}{1-|w|}) \leq \mathop{\GM}_{w \in \bdry D_{2/3}(1/3)}(\tfrac{1}{1-|w|}),
    \end{equation}
    where the second inequality is because the points $w \in A$ only have larger $\frac{1}{1-|w|}$ than the points in~$A'$, and the third inequality is because increasing the radius of $\calO$ from $r$ to $2/3$  only increases the value of $\frac{1}{1-|w|}$ for points on~$\calO$.  But now for $-\pi < t \leq \pi$, the point $w = 1/3 + (2/3)e^{it} \in D_{2/3}(1/3)$ has $|w|^2 = 1 - \frac49(1-\cos t)$ and hence
    \[
        \frac{1}{1-|w|} = \frac{1}{1-\sqrt{1 - \frac49(1-\cos t)}} \leq \frac{9}{2(1-\cos t)}.
    \]
    Thus
    \begin{equation}    \label{eqn:combo2}
        \mathop{\GM}_{w \in \bdry D_{2/3}(1/3)}(\tfrac{1}{1-|w|}) \leq \exp\parens*{\tfrac{1}{2\pi}\int_{-\pi}^{\pi}\ln\parens*{\tfrac{9}{2(1-\cos t)}}\,dt} = \frac92 \exp\parens*{-\tfrac{1}{2\pi}\int_{-\pi}^{\pi} \ln (1-\cos t)\,dt} = 9,
    \end{equation}
    the last integral being known.  (One can get a much easier integral, with a slightly worse constant, by lower-bounding $1-\cos t \geq (2/\pi^2)t^2$.)    Combining~\eqref{eqn:almost-done0}, \eqref{eqn:combo1}, \eqref{eqn:combo2} yields the theorem.
\end{proof}

\ignore{

START IGNORE

\subsection{XXXOlder version of the above; can probably junk it now XXX}

\begin{theorem} \label{thm:BE-like2}
    Let $Q(w)$ be a polynomial with constant coefficient~$1$ and all other coefficients in~$D_1(0)$.  Fix any $0 < \theta \leq \pi$, $0 \leq R \leq 2/3$, and let $A$ be the arc $\{1/3 + Re^{i t} : -\theta < t \leq \theta\}$.  Then the geometric mean of $|Q|$ on~$A$ satisfies $\GM_A(|Q|) \geq 9/18^{\pi/\theta}$.
\end{theorem}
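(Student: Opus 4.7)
The plan is to combine two opposing bounds: a pointwise estimate of $|Q|$ at the center of the disk (where, since the constant term is~$1$ dominates the geometric series of the other terms at $w = 1/3$, we have $|Q(1/3)| \geq 1/2$), and an upper bound of the form $|Q(w)| \leq 1/(1-|w|)$ valid everywhere inside the unit disk. The bridge between them is a Jensen/Mahler-measure style inequality: for any circle $\calO$ in the complex plane centered at $c$, the geometric mean of $|Q|$ over $\calO$ is at least $|Q(c)|$. First I would record this as a lemma (it follows from Jensen's formula, or from the explicit factorization $Q(w)=a_0\prod_i(w-\alpha_i)$ together with $\GM_{w\in\calO}|w-\alpha|=\max\{|\alpha-c|,\mathrm{rad}(\calO)\}$, which gives the Mahler measure expression).

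Next I would apply this with $\calO = \partial D_R(1/3)$, which contains $A$. Splitting the geometric mean over $\calO$ multiplicatively into the contribution from $A$ (a fraction $\theta/\pi$ of the circle) and its complementary arc $A'$ (a fraction $1-\theta/\pi$), I get
\[
    \tfrac{1}{2} \leq \GM_{w \in \calO}|Q(w)| = \GM_{w\in A}|Q(w)|^{\theta/\pi}\,\GM_{w\in A'}|Q(w)|^{1-\theta/\pi}.
\]
Solving for $\GM_{w\in A}|Q(w)|$ reduces the theorem to producing a universal upper bound on $\GM_{w\in A'}|Q(w)|$.

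For the upper bound on $\GM_{A'}$, I would use $|Q(w)| \leq 1/(1-|w|)$ and observe that since the points of~$A$ have larger modulus than those of~$A'$ (they are the ``far side'' of the circle from the origin), replacing $A'$ by all of $\calO$ only inflates the geometric mean of $1/(1-|w|)$. Moreover, enlarging $R$ up to its maximum allowed value $2/3$ only makes things worse, so it suffices to estimate $\GM_{w \in \partial D_{2/3}(1/3)}\bigl(1/(1-|w|)\bigr)$. Parametrizing $w = 1/3+(2/3)e^{it}$ gives $|w|^2 = 1 - \tfrac{4}{9}(1-\cos t)$, and hence $1/(1-|w|) \leq \tfrac{9}{2(1-\cos t)}$. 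The remaining integral $\tfrac{1}{2\pi}\int_{-\pi}^{\pi}\ln(1-\cos t)\,dt$ is a standard one equal to $-\ln 2$, yielding the clean bound $\GM_{A'}|Q| \leq 9$.

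Substituting $\GM_{A'}|Q| \leq 9$ and $\GM_\calO |Q|\geq 1/2$ into the split identity and taking logarithms gives $\ln \GM_A|Q| \geq (\pi/\theta)\ln(1/2) - (\pi/\theta - 1)\ln 9 = \ln 9 - (\pi/\theta)\ln 18$, which is exactly $\GM_A|Q| \geq 9/18^{\pi/\theta}$. The main conceptual step is recognizing that the Mahler-measure-type inequality is the right device to convert a single pointwise lower bound at $1/3$ into an averaged lower bound on a short arc; the main technical step (and the only nontrivial computation) is the evaluation of the logarithmic integral on $\partial D_{2/3}(1/3)$, though one can sidestep the exact value at the cost of a slightly worse absolute constant by using $1-\cos t \geq (2/\pi^2)t^2$.
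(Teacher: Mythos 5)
Your proposal is correct and follows the paper's own argument essentially step by step: the same Mahler-measure lemma $\GM_{\mathcal{O}}|Q| \geq |Q(c)|$, the same multiplicative split of the circle $\partial D_R(1/3)$ into $A$ and its complement, the same chain $\GM_{A'}|Q| \leq \GM_{\partial D_{2/3}(1/3)}\bigl(1/(1-|w|)\bigr) \leq 9$ using $|w|^2 = 1 - \tfrac{4}{9}(1-\cos t)$, and the same standard integral $\tfrac{1}{2\pi}\int_{-\pi}^{\pi}\ln(1-\cos t)\,dt = -\ln 2$. Even the remark about sidestepping the exact integral via $1-\cos t \geq (2/\pi^2)t^2$ appears in the paper.
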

\begin{proof}
    Using the bounds on $Q$'s coefficients we have:
    \begin{align}
        |Q(w)|   &\leq 1 + |w| + |w|^2 + \cdots = \frac{1}{1-|w|} \text{ for } w \in D_1(0); \label{eqn:bdry-bound} \\
        |Q(1/3)| &\geq 1 - |1/3| - |1/3|^2 - \cdots = 1/2. \label{eqn:center-bound}
    \end{align}
    We will also use the following elementary fact, which follows from the theory of Mahler measures:
    \begin{fact}                                        \label{fact:mahler}
        Let $Q$ be a complex polynomial and let $\calO$ be a complex circle with center~$c$.  Then the geometric mean of $|Q|$ around $\calO$ satisfies $\GM_{\calO}(|Q|) \geq |Q(c)|$.
    \end{fact}
    \begin{proof}
        By a linear transformation we may assume $\calO$ is the unit circle $\bdry D_1(0)$.  Express $Q(w) = a_0 \prod_{i=1}(w-\alpha_i)$, where the $\alpha_i$'s are the roots of~$Q$.  Then $\GM_{\calO}(|Q|)$ --- known as $Q$'s \emph{Mahler measure}, see~XXX, e.g.\ --- is exactly equal to $|a_0| \prod_{i \in I} |\alpha_i|$, where $I = \{i : |\alpha_i| \geq 1\}$.  (Since $\GM_{\calO}(|\cdot|)$ is multiplicative, this statement follows immediately from the elementary fact that $\GM_{\calO}(|w - \alpha|) = \max\{|\alpha|, 1\}$.) But evidently $|a_0| \prod_{i \in I} |\alpha_i| \geq |a_0| \prod_i |\alpha_i| = |Q(0)|$.
    \end{proof}
    Let us apply this fact with $\calO = \bdry D_R(1/3) \supset A$, writing $A'$ for the complementary arc to~$A$ in~$\calO$.   We get
    \begin{equation}    \label{eqn:almost-done1}
       1/2 \leq \GM_{\calO}(|Q|) = \GM_{A}(|Q|)^{\theta/\pi} \cdot \GM_{A'}(|Q|)^{1-\theta/\pi}.
    \end{equation}
    And by~\eqref{eqn:bdry-bound} we have
    \[
        \mathop{\GM}_{w \in A'}(|Q(w)|) \leq \mathop{\GM}_{w \in A'}(\tfrac{1}{1-|w|}) \leq \mathop{\GM}_{w \in \calO}(\tfrac{1}{1-|w|}) \leq \mathop{\GM}_{w \in \bdry D_{2/3}(1/3)}(\tfrac{1}{1-|w|}),
    \]
    where the second inequality is because the points $w \in A$ only have larger $\frac{1}{1-|w|}$ than the points in $A'$, and the third inequality is because increasing the radius $R$ only increases the value of $\frac{1}{1-|w|}$ for points on~$\calO$.  But now for $-\pi < t \leq \pi$, the point $w = 1/3 + (2/3)e^{it} \in D_{2/3}(1/3)$ has $|w| = \sqrt{1 - \frac89 \sin^2(t/2)} \leq 1-\frac49 \sin^2(t/2)$ and hence $\frac{1}{1-|w|} \leq \frac94 \csc^2 (t/2)$.  We conclude that
    \[
        \mathop{\GM}_{w \in \bdry D_{2/3}(1/3)}(\tfrac{1}{1-|w|}) \leq \exp\parens*{\tfrac{1}{2\pi}\int_{-\pi}^{\pi}\ln(\tfrac94 \csc^2 (t/2))\,dt} = \frac94 \exp\parens*{-\tfrac{1}{\pi}\int_{-\pi/2}^{\pi/2} \ln \sin^2 u\,du} = 9,
    \]
    the last integral being well known.  Putting these conclusions back into~\eqref{eqn:almost-done1} yields the lemma.
\end{proof}

\myfig{.5}{figure1c.png}{}{fig:1}

The following lemma is completely elementary:
\begin{lemma}                                       \label{lem:trig}
    Let $0 < r < 1/3$, and let  $0 < \theta \leq (3/2)r$.  Define $\Gamma = \frac{2/3 - r}{r} \geq 1$ and note that $\theta \leq (3/2)r \implies \sin\theta < 1/\Gamma$; this inequality is easily seen to be the precise condition needed so that the ray $\{1/3 + R e^{i\theta} : R > 0\}$ intersects $D_{r}(1-r)$ twice.  Let $w^*$ be the further of the two intersection points.  Then $|w^*|^2 \geq 1- \frac{8\theta^2}{9r}$.
\end{lemma}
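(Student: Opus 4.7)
The plan is to parametrize $w^*$ in two complementary ways and match them. Since $w^*$ lies on the circle $\partial D_r(1-r)$, write $w^* = (1-r) + r e^{i\phi}$ for some $\phi \in [0,\pi]$ (WLOG positive, since the ray is in the upper half-plane). A direct expansion gives
\[
    |w^*|^2 = (1-r)^2 + 2r(1-r)\cos\phi + r^2 = 1 - 2r(1-r)(1-\cos\phi),
\]
so it suffices to upper bound $1-\cos\phi$. On the other hand, $w^* = 1/3 + R_+ e^{i\theta}$, where the standard quadratic for circle--line intersections (with $a := 2/3 - r$) yields $R_+ = a\cos\theta + \sqrt{r^2 - a^2\sin^2\theta} \leq a + r = 2/3$. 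Matching imaginary parts of the two expressions for $w^*$ gives $r\sin\phi = R_+\sin\theta$, and hence $\sin\phi \leq \tfrac{2\sin\theta}{3r}$.

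Once I know that $\phi \leq \pi/2$, I can invoke the standard inequality $1 - \cos\phi \leq \sin^2\phi$ (valid because $\cos(\phi/2) \geq \sqrt{2}/2$) to conclude
\[
    1 - |w^*|^2 = 2r(1-r)(1-\cos\phi) \leq 2r(1-r)\sin^2\phi \leq \frac{8(1-r)\sin^2\theta}{9r} \leq \frac{8\theta^2}{9r},
\]
which is exactly the desired bound.

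The main obstacle is therefore verifying that $\phi \leq \pi/2$ under the stated hypotheses. Matching real parts of the two parametrizations of $w^*$ and simplifying shows that $\cos\phi \geq 0$ is equivalent to the inequality $\tan\theta \leq r/a = 1/\Gamma$. Substituting $s := (3/2)r \in (0, 1/2]$ and noting $r/a = s/(1-s)$, what I need to establish reduces to $\tan\theta \leq s/(1-s)$ whenever $\theta \leq s$. I plan to prove this via the elementary chain $\cos t \geq 1-t$ on $[0, \pi/2]$ (immediate from $\frac{d}{dt}(\cos t + t - 1) = 1 - \sin t \geq 0$ together with $\cos 0 + 0 - 1 = 0$), which gives $\sec^2 t \leq 1/(1-t)^2$; integrating from $0$ to $\theta$ yields $\tan\theta \leq \theta/(1-\theta)$, and then monotonicity of $t \mapsto t/(1-t)$ together with $\theta \leq s$ completes the bound. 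This shows that the precise constant $3/2$ in the hypothesis $\theta \leq (3/2)r$ is exactly what keeps $\tan\theta$ below the critical value $1/\Gamma$ at which the ray becomes tangent to $\partial D_r(1-r)$.
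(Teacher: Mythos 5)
Your proof is correct, and it takes a genuinely different and cleaner route than the paper's. The paper parametrizes $w^*$ via the Law of Sines, introducing the angle $\alpha$ at $w^*$ with $\sin\alpha = \Gamma\sin\theta$ and $\phi = \theta+\alpha$, and then expands $1-\cos(\theta+\alpha) = 1 - \sqrt{1-\sin^2\theta}\sqrt{1-\Gamma^2\sin^2\theta} + \Gamma\sin^2\theta$; to control the product of square roots it resorts to a somewhat fiddly estimate (``if only AM--GM were reversed, with a deficit bounded by $\tfrac12\Gamma^4\sin^4\theta$''). You instead bound $\sin\phi$ directly by matching imaginary parts (equivalently, $\sin\phi = \sin\theta(\cos\alpha + \Gamma\cos\theta) \le (1+\Gamma)\sin\theta = \tfrac{2\sin\theta}{3r}$, which is the same bound in the paper's notation), and then use the identity-level inequality $1-\cos\phi \le \sin^2\phi$, which holds exactly when $\cos\phi \ge 0$. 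This shifts the burden onto showing $\phi \le \pi/2$, i.e.\ $\tan\theta \le 1/\Gamma$, which you settle with the clean integral bound $\tan\theta = \int_0^\theta \sec^2 t\,dt \le \int_0^\theta (1-t)^{-2}\,dt = \tfrac{\theta}{1-\theta}$ followed by monotonicity of $t \mapsto t/(1-t)$. The net effect is to replace the paper's ad hoc ``reverse AM--GM'' estimate with a single elementary convexity argument; this is cleaner and makes the role of the hypothesis $\theta \le (3/2)r$ more transparent. One small inaccuracy in your prose: tangency of the ray with $\bdry D_r(1-r)$ occurs at $\sin\theta = 1/\Gamma$ (the lemma's stated condition), not at $\tan\theta = 1/\Gamma$; what $\tan\theta \le 1/\Gamma$ actually governs is whether $w^*$ lies in the closed right half-plane relative to the center $1-r$ (i.e.\ $\phi \le \pi/2$). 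This is just a mislabeling of the threshold in your closing sentence and does not affect the validity of the argument.
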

\begin{proof}
    Let $T$ be the triangle with corners $1/3$, $1-r$, and let $\alpha$ denote the angle of~$T$ at $w^*$. (See Figure~\ref{fig:1}.) By the Law of Sines applied to~$T$,
    \[
        \frac{\sin\alpha}{2/3 - r} = \frac{\sin \theta}{r} \implies \sin \alpha = \Gamma \sin \theta \implies \cos \alpha = \sqrt{1-\Gamma^2 \sin^2 \theta}.
    \]
    Furthermore, using the fact that $w_\theta \in D_{r}(1-r)$, some basic trigonometry yields
    \begin{align}
        1 - |w^*|^2 = 2r(1-r)(1-\cos(\theta + \alpha)) &\leq 2r(1 - \cos \theta \cos \alpha + \sin \theta \sin \alpha) \nonumber\\
        &= 2r\parens*{1 - \sqrt{1-\sin^2\theta} \sqrt{1-\Gamma^2 \sin^2 \theta} + \Gamma \sin^2 \theta}. \label{eqn:AM-GM-sux}
    \end{align}
    If the AM-GM inequality were somehow reversed, we would have $\sqrt{1-\sin^2\theta} \sqrt{1-\Gamma^2 \sin^2 \theta} \geq 1 - (\frac12 + \frac12 \Gamma^2) \sin^2 \theta$, which would give us a simple upper bound for~\eqref{eqn:AM-GM-sux}.  Unfortunately, the AM-GM inequality is not reversed, but one can  show in this case\onote{Don't ask me how, but I guarantee it\dots}  that the deficit is bounded by $\frac12 \Gamma^4 s^4$. That is,
    \[
        \sqrt{1-\sin^2\theta} \sqrt{1-\Gamma^2 \sin^2 \theta} \geq 1 - \parens*{\frac12 + \frac12 \Gamma^2} \sin^2 \theta - \frac12 \Gamma^4 s^4,
    \]
    and hence
    \[
        \eqref{eqn:AM-GM-sux} \leq r(1+\Gamma)^2 \sin^2 \theta + r \Gamma^4 s^4 = \frac{4}{9r}\sin^2 \theta + r \parens*{\frac{2/3-r}{r}}^4 \sin^4 \theta \leq \frac{4\theta^2}{9r}\parens*{1+ \frac{4\theta^2}{9r^2}} \leq \frac{8\theta^2}{9r},
    \]
    the last inequality using $\theta \leq (3/2)r$. The proof is complete.
\end{proof}

We can now prove Theorem~\ref{thm:Lit-chill-lower}.  Assume $1/d^{1/2} \leq r \leq .1$ and let $P(w) = w^dQ(w)$, where $Q(w)$ is a polynomial with constant coefficient~$1$ and all other coefficients in~$D_1(0)$. Our task is to show
\begin{equation}    \label{eqn:prove-me-1}
    \max_{w \in D_r(1-r)} \abs*{P(w)} \geq (4040)^{-(d/r)^{1/3}}.
\end{equation}
Select $\theta = \frac{3}{2} (r/d)^{1/3}$; this satisfies $\theta \leq (3/2)r$ because $r \geq 1/d^{1/2}$. Set $w^*$ as in Lemma~\ref{lem:trig} and let $A$ be the arc as in Theorem~\ref{thm:BE-like} with endpoints $w^*, \overline{w^*}$.  Note that $A \subseteq D_r(1-r)$ (using $r \leq .1 < 2/3$), and that $|w| \geq |w^*|$ for all $w \in A$.  From the latter fact and Lemma~\ref{lem:trig} we get
\begin{equation}    \label{eqn:combine1}
    w \in A \implies |w|^d \geq \parens*{1-\frac{8\theta^2}{9r}}^{d/2} \geq \exp(-\theta^2/r)^{d/2} = \exp(-d\theta^2/r) = \exp(-(9/4)(d/r)^{1/3});
\end{equation}
here, the second inequality used $r \leq .1$.  At the same time, Theorem~\ref{thm:BE-like} tells us
\begin{equation}    \label{eqn:combine2}
    \GM_A(|Q|) \geq 9/18^{\pi/\theta} \geq \exp(-\ln(18)\pi/\theta) = \exp(-(2/3)\ln(18)\pi (d/r)^{1/3}).
\end{equation}
Combining~\eqref{eqn:combine1},  \eqref{eqn:combine2}, and the fact that $9/4 + (2/3)\ln(18)\pi \leq \ln(4040)$, yields
\[
    \GM_A(|P|) \geq (4040)^{-(d/r)^{1/3}},
\]
which is stronger than~\eqref{eqn:prove-me-1} because $A \subseteq D_r(1-r)$.

END IGNORE

}

\section{Proof of Theorem~\ref{thm:Lit-upper}}
The key ingredient is the following theorem from \cite{BEK99}.  (Recall that a \emph{Littlewood polynomial} has all nonzero coefficients either $-1$ or 1.)
\begin{theorem}
[\cite{BEK99}, Theorem~3.3] \label{thm:BEK}
For all $k \geq 2$ there is a nonzero Littlewood polynomial $Q_k$ of degree at most~$k$ satisfying $|Q_k(t)| \leq \exp(-c_0 \sqrt{k})$ for all real $0 \leq t \leq 1$.  Here $c_0 > 0$ is a universal constant.
\end{theorem}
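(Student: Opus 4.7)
The plan is to prove this by explicit construction, following the Borwein--Erd\'elyi--K\'os approach. Since $|Q_k(0)|$ equals the modulus of the constant coefficient, which lies in $\{0,1\}$, the bound forces the constant coefficient to be~$0$; so we must construct $Q_k(t) = t^N \cdot R(t)$, where $N \geq 1$ and the whole expression has coefficients in $\{-1,0,+1\}$ (which is the broader sense of ``Littlewood'' compatible with the $Q_k(0)=0$ requirement). The strategy will be to have the monomial prefactor $t^N$ control $Q_k$ on the portion of $[0,1]$ that is bounded away from~$1$, and to have $R$ control $Q_k$ on the portion near~$1$ via a high-order zero of $R$ at $t=1$.

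The natural candidate for $R$ is a product of the form $R(t) = \prod_{j=0}^{m-1}(1 - t^{a_j})$. Each factor vanishes at $t=1$, so $R$ has a zero of order~$m$ there, and the factorization $1 - t^{a_j} = (1-t)(1 + t + \cdots + t^{a_j-1})$ lets us extract $(1-t)^m$ and bound the remaining factors by $\prod_j a_j$ on $[0,1]$. The delicate point is to ensure that $R$ expands to a polynomial with coefficients in $\{-1,0,+1\}$: this happens precisely when all the subset sums $\sum_{j \in S} a_j$ are distinct, so that no two monomials in the expansion collide. A canonical choice is $a_j = 2^j$ (whose subset sums are exactly the binary representations $0, 1, \dots, 2^m-1$), in which case $R$ has degree $2^m - 1$ and is a genuine Littlewood polynomial.

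With $Q_k(t) = t^N \cdot \prod_{j=0}^{m-1}(1-t^{a_j})$ and degree constraint $N + \sum_j a_j \le k$, the analysis would split $[0,1]$ at some $t = 1-\eps$: for $t \in [0, 1-\eps]$ bound $|Q_k(t)| \leq (1-\eps)^N \leq \exp(-N\eps)$, and for $t \in [1-\eps, 1]$ bound $|Q_k(t)| \leq \eps^m \cdot \prod_j a_j$ using the factored form. One then tunes $N$, $m$, the $a_j$'s, and $\eps$ to simultaneously make both bounds at most $\exp(-c_0 \sqrt{k})$ subject to the degree constraint.

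The hard part, and where the full power of [BEK99] enters, is the balancing: with the simple choice $a_j = 2^j$, the degree constraint forces $m \leq \log_2(k)$, giving only an $\exp(-O(\log^2 k))$ bound, which is far weaker than $\exp(-c_0 \sqrt{k})$. To achieve the stronger bound one needs $m \approx \sqrt{k}$ factors while keeping $\sum_j a_j \leq k$ \emph{and} preserving the ``distinct subset sums'' property (or an equivalent coefficient-control property). The technical heart of the proof is therefore an ingenious construction---either a carefully chosen sum-distinct set of exponents that grows more slowly on average, or an altogether different product/polynomial identity (for example, using Rudin--Shapiro-type recursions or auxiliary Chebyshev-like factors whose coefficients happen to simplify to $\{-1,0,+1\}$)---that realizes the factor count $m = \Theta(\sqrt{k})$ at the expense of only polynomial growth in degree. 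Given this construction, plugging in the bounds above and optimizing yields the claimed $\exp(-c_0 \sqrt{k})$ estimate.
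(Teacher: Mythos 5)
The paper quotes this result from~\cite{BEK99} (their Theorem~3.3) without proof, so there is no internal argument to compare against; the real question is whether your sketch could stand alone as a proof of the cited result. Your setup is sound and you diagnose the obstruction accurately: the constant term must be~$0$, the ansatz $Q_k(t)=t^N\prod_{j<m}(1-t^{a_j})$ is natural, the split of $[0,1]$ at $1-\eps$ is the right move, and --- crucially --- you correctly observe that any distinct-subset-sums choice of $\{a_j\}$ is doomed. Indeed, such a set of size $m$ contained in $\{1,\dots,A\}$ has $2^m\le Am+1$, so $\sum_j a_j \ge (2^m-1)/m$, and the degree budget then forces $m=O(\log k)$, hence only $\exp(-O(\log^2 k))$, exactly as you say.

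But this means the sentence ``the technical heart of the proof is therefore an ingenious construction'' is precisely where your argument stops, and neither of your two guesses for how to fill it works. A sum-distinct set that ``grows more slowly on average'' cannot exist, by the counting bound just quoted. Rudin--Shapiro-type recursions are designed to control $L^\infty$ on the unit circle, not smallness on $[0,1]$; and the Thue--Morse product $\prod_{j<m}(1-t^{2^j})$ has degree $2^m-1$, so the order of vanishing at $t=1$ it provides is again only logarithmic in the degree. The argument in~\cite{BEK99} does not proceed via an explicit product at all: the $\{-1,0,1\}$-coefficient polynomial that is very small near $t=1$ is obtained \emph{nonconstructively}, by a pigeonhole/counting argument --- one considers all $2^{n+1}$ vectors $a\in\{0,1\}^{n+1}$, discretizes suitable near-$t=1$ evaluation or Taylor data into fewer than $2^{n+1}$ cells, and takes the difference $c=a-a'\in\{-1,0,1\}^{n+1}$ of two colliding vectors. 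This yields an ``order of smallness'' near $t=1$ of $m\asymp\sqrt{n}$ rather than $\log n$; multiplying by $t^n$ and choosing $\eps\asymp m/n$ (as in your outline) then gives $\exp(-c_0\sqrt{k})$ with $k\asymp n$. So the skeleton of your argument is fine, but the existence step you deferred to ``an ingenious construction'' cannot be supplied by a product identity of the form you propose; as written, the proof has a genuine gap at its central step.
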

By a simple use of the Hadamard Three-Circle Theorem and Maximum Modulus Principle, Borwein and Erd\'elyi proved in \cite{BE97} that the polynomials in Theorem~\ref{thm:BEK} establish tightness of their Theorem~\ref{thm:be} (up to the constant~$C_1$).  We quote a result that appears within their proof:
\begin{theorem}                     [\cite{BE97}, in the first proof of Theorem 3.3 in the ``special case'', p. 11]                \label{thm:BE-ellipse}
 There are universal constants $c_1, c_2, c_3 > 0$ such that the following holds: For all $0 < a \leq c_1$ there exists an integer $2 \leq k \leq c_2/a^2$ such that $\max_{w \in D_{6a}(1)} |Q_k(w)| \leq \exp(-c_3/a)$, where $Q_k$ is the nonzero Littlewood polynomial from Theorem~\ref{thm:BEK}.
\end{theorem}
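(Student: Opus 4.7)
The plan is to construct, for each admissible pair $(\rho, n)$, a nonzero Littlewood polynomial of degree less than $n$ with small maximum modulus on $D_\rho(1-\rho)$, taking the specific form $P(w) = w^m \cdot Q_K(w)$. Here $Q_K$ is the Littlewood polynomial supplied by Theorem~\ref{thm:BE-ellipse} with a parameter $a > 0$ to be tuned, and $m \geq 0$ is an integer to be chosen. Since the nonzero coefficients of $P$ are exactly those of $Q_K$ shifted up by $m$ positions, $P$ is automatically a nonzero Littlewood polynomial of degree $K + m$, and we will force $K+m < n$.

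The key geometric fact, obtained in one line by expanding $|w-(1-\rho)|^2 \leq \rho^2$, is that every $w \in D_\rho(1-\rho)$ satisfies
\[
    1 - |w|^2 \;\geq\; \tfrac{\delta}{\rho}\,|w-1|^2,
\]
so points far from the tangent point $w=1$ have modulus bounded well below $1$. Consequently, for any $w \in D_\rho(1-\rho)$ with $|w-1|\geq 6a$, one obtains $|w|^m \leq \exp(-18 m \delta a^2/\rho)$. The estimate of $\max_{w\in D_\rho(1-\rho)}|P(w)|$ then splits into two regions: on $w \in D_{6a}(1)$ we have $|P(w)| \leq |Q_K(w)| \leq \exp(-c_3/a)$ directly from Theorem~\ref{thm:BE-ellipse} together with $|w|\leq 1$, while on $w \in D_\rho(1-\rho) \setminus D_{6a}(1)$ we use the trivial unit-disk bound $|Q_K(w)|\leq K+1$ multiplied by the bound on $|w|^m$.

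Setting $m = \lceil c_3\rho/(18\delta a^3)\rceil$ balances the two regional bounds at roughly $(K+1)\exp(-c_3/a)$, and the final choice $a = \Theta((\rho/(\delta n))^{1/3})$---taken just small enough that $K+m$, which is $O(1/a^2) + O(\rho/(\delta a^3))$, falls below $n$---yields $\max_{w \in D_\rho(1-\rho)}|P(w)| \leq \exp(-\Omega((\delta n/\rho)^{1/3}))$, the polynomial factor $K+1 \leq O(1/a^2)$ being easily swallowed by the exponential. This unified bound specializes to the two promised ones: $\exp(-\Omega(n/\rho)^{1/3})$ when $\delta \geq 1/2$ (regime~2), and $\exp(-\Omega(\delta n)^{1/3})$ when $\rho \geq 1/2$ (regime~1).

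The main bookkeeping obstacle is verifying, under the stated restrictions on $\rho$ and $\delta$, that the two-region decomposition is nontrivial and that the factor $K+1$ is indeed absorbed. Nontriviality requires $6a \leq 2\rho$, i.e.\ $(\rho/(\delta n))^{1/3} = O(\rho)$; in regime~2, where $\delta = \Theta(1)$, this is equivalent to $\rho = \Omega(1/\sqrt{n})$, which is exactly the hypothesis $\rho \geq C/\sqrt{n}$. In regime~1, where $\rho = \Theta(1)$, the geometric constraint is slack, but absorbing $K+1 = O(1/a^2)$ into $\exp(-c_3/a)$ now requires $1/a \gg \log(1/a)$; since $1/a = \Theta((\delta n)^{1/3})$ this unwraps to $\delta n \gg \log^3(\delta n)$---precisely what the hypothesis $\delta \geq C(\log^3 n)/n$ enforces.
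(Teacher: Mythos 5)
You have proved the wrong statement. Theorem~\ref{thm:BE-ellipse} is the assertion that the Littlewood polynomials $Q_k$ from Theorem~\ref{thm:BEK}, which are uniformly small on the \emph{real interval} $[0,1]$, can also be made small on the \emph{complex disk} $D_{6a}(1)$, with $k \leq c_2/a^2$. Your proposal takes this theorem as a given black box in its opening sentence (``$Q_K$ is the Littlewood polynomial supplied by Theorem~\ref{thm:BE-ellipse}'') and then deduces the bound $\kappa_{\Lit}(\rho,n) \leq \exp(-\Omega((\delta n/\rho)^{1/3}))$. That is the content of Theorem~\ref{thm:Lit-upper} (in the precise form of Theorem~\ref{thm:Lit-upper-new}), not of the theorem you were asked about. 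As a purported proof of Theorem~\ref{thm:BE-ellipse}, the argument is circular; nothing in it explains why the real-interval bound of Theorem~\ref{thm:BEK} propagates into the complex plane.

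What is actually needed is a complex-analytic upgrade of Theorem~\ref{thm:BEK}. One starts from the two available bounds --- $|Q_k(t)| \leq \exp(-c_0\sqrt{k})$ for $t$ in (a subinterval of) $[0,1]$, and the trivial $|Q_k(w)| \leq k+1$ on $D_1(0)$ --- and interpolates between them using the Hadamard Three-Circle Theorem in its confocal-ellipse form (equivalently, the two-constants/Bernstein--Walsh inequality after the Joukowski change of variable). Choosing $k = \Theta(1/a^2)$ balances the $\exp(-c_0\sqrt{k})$ gain against the $(k+1)$ loss and yields $|Q_k| \leq \exp(-c_3/a)$ on the ellipse $\mathcal{E}_a$ with foci $1-8a$ and $1$; since $D_{6a}(1) \subset \mathcal{E}_a$, the disk version follows. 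This is precisely what Borwein and Erd\'elyi do on p.~11 of \cite{BE97}, and the paper's remark after Theorem~\ref{thm:BE-ellipse} spells out the passage from their ellipse to $D_{6a}(1)$.

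For what it's worth, as a proof of Theorem~\ref{thm:Lit-upper} your argument is correct and tracks the paper's own proof closely. The two real differences are cosmetic: you choose the shift $m = \lceil c_3\rho/(18\delta a^3)\rceil$ to balance the two regional contributions, whereas the paper simply fixes $m = \lfloor n/2 \rfloor$ and lets the choice of $a$ do the balancing; and your one-line derivation of $1 - |w|^2 \geq \tfrac{\delta}{\rho}|w-1|^2$ by expanding $|w-(1-\rho)|^2 \leq \rho^2$ is a clean algebraic replacement for the paper's two-step Cosine Law computation of $|w_0|^2 = 1 - 36\tfrac{\delta}{\rho}a^2$. If the intended target had been Theorem~\ref{thm:Lit-upper}, this would be a fine proof.
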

\begin{remark}
    Actually, Borwein and Erd\'elyi proved this with an elliptical disk $\calE_a$ in place of $D_{6a}(1)$, where $\calE_a$ has foci at $1-8a$ and~$1$ and major axis $[1-14a, 1+6a]$. It is easy to see that $D_{6a}(1) \subset \calE_a \subset D_{14a}(1)$, so we wrote $D_{6a}(1)$ in Theorem~\ref{thm:BE-ellipse} for simplicity and because it loses almost nothing.
\end{remark}

\newcommand{\Exp}{\mathrm{Exp}}
We can now prove Theorem~\ref{thm:Lit-upper}.  We state here a slightly more precise version:
\begin{theorem}                                     \label{thm:Lit-upper-new}
    Using the notation $\delta = 1-\rho$, and the notation $\Exp(t) = \exp(c \cdot t)$ for an unspecified universal constant $c > 0$, we have
    \[
        \kappa_{\Lit}(\rho,n) \leq \begin{cases}
                                                        \Exp(-(\delta n)^{1/3}) & \text{in \textnormal{Case~I}:\phantom{II} $C(\log^3 n)/n \leq \delta \leq 1/2$,} \\
                                                        \Exp(-(n/\rho)^{1/3}) & \text{in \textnormal{Case~II}:\phantom{I} $C/n^{1/2} \leq \rho \leq 1/2$,}
                                                  \end{cases}
    \]
    provided $n \geq n_0$.  Here $n_0, C \geq 1$ are universal constants.
\end{theorem}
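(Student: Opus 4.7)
The plan is to construct a ``shifted BEK polynomial'' $P(w) = w^{d_0} Q_k(w)$, where $Q_k$ is the Littlewood polynomial from Theorem~\ref{thm:BE-ellipse} and the parameters $d_0, k, a$ will be chosen to balance two competing estimates. Since $Q_k$ has $\pm 1$ coefficients, $P$ has coefficients in $\{0, \pm 1\}$, is nonzero, and --- provided $d_0 + k < n$ --- is a valid competitor in the definition of $\kappa_{\Lit}(\rho, n)$. The prefactor $w^{d_0}$ plays the crucial role: it will let me use a small BE-ellipse parameter $a$ (yielding a strong $\exp(-c_3/a)$ bound on $|Q_k|$) even though the target disk $D_\rho(\delta)$ may be much larger than $D_{6a}(1)$. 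On the part of $D_\rho(\delta)$ lying inside $D_{6a}(1)$ I will use the BE-ellipse bound on $|Q_k|$, while on the part outside $D_{6a}(1)$ I will exploit the fact that $|w|^{d_0}$ decays away from the distinguished tangent point $w = 1$.

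By the Maximum Modulus Principle it suffices to bound $|P|$ on $\bdry D_\rho(\delta)$, which I would parametrize as $w = \delta + \rho e^{i\theta}$. A short computation gives $|w - 1| = 2\rho|\sin(\theta/2)|$ and $|w|^2 = 1 - 4\delta\rho\sin^2(\theta/2)$. Splitting into two arcs: (a) for $|\sin(\theta/2)| \leq 3a/\rho$ one has $w \in D_{6a}(1)$, so Theorem~\ref{thm:BE-ellipse} gives $|Q_k(w)| \leq \exp(-c_3/a)$, while $|w|^{d_0} \leq 1$, yielding $|P(w)| \leq \exp(-c_3/a)$; (b) for $|\sin(\theta/2)| > 3a/\rho$ one gets $|w|^2 \leq 1 - 36 a^2\delta/\rho$, hence $|w|^{d_0} \leq \exp(-18 a^2 \delta d_0/\rho)$, and combined with the trivial Littlewood bound $|Q_k(w)| \leq k+1$ this yields $|P(w)| \leq (k+1)\exp(-18 a^2 \delta d_0/\rho)$.

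The two exponents are balanced by setting $a^3 = \Theta(\rho/(\delta d_0))$, which makes both quantities $1/a$ and $a^2\delta d_0/\rho$ of order $(\delta d_0/\rho)^{1/3}$. I would then take $d_0 = n - k - 1$, using $k = O(1/a^2) = O((\delta n/\rho)^{2/3})$ from Theorem~\ref{thm:BE-ellipse}; the degree constraint $d_0 + k < n$ then reduces to $(\delta n/\rho)^{2/3} \ll n$, i.e., $\delta/\rho \leq O(n^{1/2})$, which holds throughout both regimes. The final bound will be $(k+1)\exp(-\Theta(\delta n/\rho)^{1/3})$, and since $\delta n/\rho = \Theta(\delta n)$ in Case~I (where $\rho \geq 1/2$) and $\Theta(n/\rho)$ in Case~II (where $\delta \geq 1/2$), this gives the desired $\Exp(-(\delta n)^{1/3})$ and $\Exp(-(n/\rho)^{1/3})$ respectively.

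The main technical obstacle will be absorbing the Littlewood prefactor $k + 1 = O((\delta n/\rho)^{2/3})$ into the exponential: this needs $\log(k+1) = O(\log n)$ to be dominated by $(\delta n/\rho)^{1/3}$, i.e., $\delta n/\rho \geq C' \log^3 n$ for a suitable constant $C'$. In Case~II ($\rho \leq 1/2$, so $\delta n/\rho \geq 2n$) this follows from taking $n \geq n_0$ sufficiently large, but in Case~I (where $\rho \approx 1$) it forces the hypothesis $\delta \geq C(\log^3 n)/n$ --- precisely the stated regime. The side-conditions needed to invoke Theorem~\ref{thm:BE-ellipse}, namely $a \leq c_1$ and $k \geq 2$, both follow easily from $\delta n/\rho$ being large, and can be absorbed into the constant $n_0$.
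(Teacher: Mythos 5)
Your proposal is correct and follows essentially the same strategy as the paper: take $P(w) = w^{d_0}\,Q_k(w)$ with $Q_k$ from Theorem~\ref{thm:BE-ellipse}, use the Borwein--Erd\'elyi bound on the part of $\bdry D_\rho(\delta)$ landing inside $D_{6a}(1)$, use the decay of $|w|^{d_0}$ plus the trivial Littlewood bound on the rest, and balance by $a = \Theta((\rho/(\delta n))^{1/3})$. The only substantive difference is how you bound $\max|w|$ off $D_{6a}(1)$: you parametrize $\bdry D_\rho(\delta)$ as $w = \delta + \rho e^{i\theta}$ and compute $|w-1| = 2\rho|\sin(\theta/2)|$, $|w|^2 = 1 - 4\delta\rho\sin^2(\theta/2)$ directly, whereas the paper runs a two-triangle Law-of-Cosines argument to evaluate $|w_0|^2$ at the intersection of $\bdry D_\rho(\delta)$ and $\bdry D_{6a}(1)$; both give the identical $1 - 36(\delta/\rho)a^2$, and your parametrization is arguably cleaner (it also handles the degenerate case $\rho \le 3a$ automatically, since then all of $D_\rho(\delta) \subset D_{6a}(1)$). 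Your choice $d_0 = n-k-1$ versus the paper's $\lfloor n/2 \rfloor$ is immaterial since $k \ll n$ either way; you should just be slightly careful to break the apparent circularity (pick $a$ based on $n$ rather than on $d_0$, then verify $k < n/2$), exactly as the paper does.
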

\begin{proof}[Proof of Theorem~\ref{thm:Lit-upper}.]
    With $C \geq 1$ to be specified later, select
    \[
        a = \begin{cases}
                    C_1/(\delta n)^{1/3} & \text{in Case~I:\phantom{II} $C(\log^3 n)/n < \delta \leq 1/2$,} \\
                    C_1(\rho/n)^{1/3} & \text{in Case~II:\phantom{I} $1/n^{1/2} < \rho < 1/2$,}
                \end{cases}
    \]
    where $C_1 \geq 1$ is a universal constant to be specified later. Assuming $n_0 = n_0(C_1)$ is sufficiently large we get that $a \leq c_1$, where $c_1$ is as in Theorem~\ref{thm:BE-ellipse}. Applying that theorem, we obtain
    \begin{equation}    \label{eqn:A}
        \max_{w \in A} |Q_k(w)| \leq \Exp(-1/a), \qquad \text{where }
        A \coloneqq D_{6a}(1), \quad k \leq c_2/a^2 < n/2.
    \end{equation}
    Here the inequality $c_2/a^2 < n/2$ holds in Case~I by assuming $n_0 = n_0(C_1,c_2)$ large enough, and in Case~II by taking $C_1 = C_1(c_2)$ large enough.  Now define
    \[
        P(w) = w^{\lfloor n/2 \rfloor} \cdot Q_k(w), \quad \text{a nonzero Littlewood polynomial of degree less than~$n$.}
    \]
    We wish to bound
    \[
        \max_{w \in R} |P(w)|, \qquad R \coloneqq D_{\rho}(\delta)
    \]
    by the expression in the theorem statement.  For the points $w \in R \cap A$, we are done by~\eqref{eqn:A} (and the fact that $|w^{\lfloor n/2 \rfloor}| \leq 1$).  For the points in $w \in R \setminus A$, we claim that
    \begin{equation}    \label{eqn:RminusA}
        |w|^2 \leq 1 - 36 \tfrac{\delta}{\rho}a^2  \leq \exp(-36 \tfrac{\delta}{\rho}a^2) \quad \forall w \in R \setminus A.
    \end{equation}
    Assuming~\eqref{eqn:RminusA}, we get
    \[
        \max_{w \in R \setminus A} |P(w)| \leq \max_{w \in R \setminus A} |w|^{\lfloor n/2 \rfloor} \cdot \max_{w \in R \setminus A} |Q_k(w)| \leq \exp(-18 \tfrac{\delta}{\rho}a^2)^{\lfloor n/2 \rfloor} \cdot (n/2+1) \leq \Exp(-n \tfrac{\delta}{\rho}a^2) \cdot (n/2+1),
    \]
    where the factor $n/2+1$ is an upper bound on $|Q_k(w)|$ over all of $D_1(0)$ (recall that $Q_k$ is a Littlewood polynomial of degree less than $n/2$).  By inspection, this is sufficient to complete the proof in both Case~I and Case~II (in Case~I we need to assume $C$ large enough to absorb the factor of $(n/2+1)$).

    It remains to establish~\eqref{eqn:RminusA}.  For this we first note that $\rho > 3a$ in both Case~I and Case~II (Case~I is easier to check; for Case~II we need to use that $C = C(C_1)$ is sufficiently large).  This in particular means that $R \setminus A \neq \emptyset$.  Writing $w_0$ for either of the intersection points of $\bdry R$ and $\bdry A$, we have $\max_{w \in R \setminus A} |w| \leq |w_0|$.  Thus it suffices to upper-bound $|w_0|^2$.

    In the complex plane, consider the triangle formed by $\delta$, $1$, and $w_0$.  Note that $w_0$ has distance~$\rho$ from $\delta$ and distance $6a$ from~$1$.  Let $\theta$ denote the triangle's angle at~$\delta$.  By the Cosine Law, $(6a)^2 = \rho^2 + \rho^2 - 2\rho^2 \cos \theta$ and hence $\cos\theta  = 1-18a^2/\rho^2$. Now consider the triangle formed by $\delta$, $0$, and $w_0$.  Its angle at~$\delta$ is $\pi - \theta$ and the adjacent sides have length $\delta$, $\rho$.  Thus by the Cosine Law,
    \[
        |w_0|^2 = \delta^2 + \rho^2 - 2\delta\rho\cos(\pi - \theta) = \delta^2 + \rho^2 + 2\delta\rho\cos \theta = (\delta + \rho)^2 - 36\delta \rho a^2/\rho^2 = 1 - 36 \tfrac{\delta}{\rho} a^2,
    \]
    as needed for~\eqref{eqn:RminusA}.
\end{proof}

\ignore{
\newpage
\section{From strings to polynomials}
\ignore{
%
%
}

\subsection{Characterizing the sample complexity of mean-based algorithms}

In this subsection we define a quantity $\vareps_\delta(n)$ and show that it characterizes, up to polynomial factors, the information-theoretic sample complexity of mean-based trace reconstruction algorithms.

We start with some simple analysis of the mean trace vector $\mu_\delta(x)$.  For $x \in \zo^n$ and $j \in \{0,\dots,n-1\}$, it is easy to verify that the $j$-th coordinate of $\mu_\delta(x)$ is
\begin{equation} \label{eq:mudeltaj}
\mu_\delta(x)_j = \Pr_{\by \leftarrow \Del_\delta(x)}[\by_j = 1] = \sum_{k = j}^{n-1} x_k \Pr[k \rightsquigarrow j] = \sum_{k = j}^{n-1}  \rho \Pr[\text{Bin}(k,\rho)=j]x_j,
\end{equation}
where ``$\Pr[k \rightsquigarrow j]$'' denotes the probability that coordinate $x_k$ of $x$ appears as coordinate $\by_j$ of $\by \leftarrow \Del_\delta(x).$ By (\ref{eq:mudeltaj}) we may write the mean trace vector $\mu=\mu_\delta(x)$ as $\mu = \rho A x$ where $A=A(\delta)$ is the $n$-by-$n$ matrix given by
\begin{equation} \label{eq:A}
    A = (a_{jk})_{j,k=0}^{n-1}, \quad a_{jk} = \Pr[\text{Bin}(k,\rho)=j].
\end{equation}
Note that for $0 < \delta < 1$ the matrix $A$ is upper diagonal and non-singular.

We define a crucial quantity:

\begin{definition} \label{def:eps-infty}
We define the quantity $\vareps_\delta(n)$ (or $\vareps$ for short when $\delta,n$ are clear from context) to be
\begin{equation} \label{eqn:eps-infty}
\vareps_\delta(n) := \min_{x\neq x' \in \zo^n} \|\mu_\delta(x) - \mu_\delta(x')\|^2, \quad \text{or equivalently,} \quad
\vareps_\delta(n) := \rho ^2 \cdot \min_{\substack{u \in \{-1,0,1\}^n \\ u \neq 0^n}} \|A u\|^2,
\end{equation}
the minimum possible squared Euclidean distance between any two distinct mean trace vectors.
\end{definition}

Note that since $A$ is non-singular we have that $\vareps_\delta(n)>0.$

It is straightforward to obtain an upper bound on the optimal sample complexity of mean-based trace reconstruction algorithms in terms of $\vareps_\delta(n):$

\begin{lemma} \label{lem:upper}
For any $0 < \delta < 1$, there is an $m$-sample mean-based trace reconstruction algorithm at deletion rate $\delta$ that has sample complexity $m=O({\frac {n \log n}{\vareps_\delta(n)}})$.
\end{lemma}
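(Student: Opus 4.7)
The plan is to show that, for the definition given, $m = O(n \log n / \varepsilon_\delta(n))$ traces are enough information-theoretically for a mean-based algorithm to recover $x$ with high probability. The algorithm is the natural ``nearest mean trace'' rule: draw $m$ traces, compute the empirical mean trace $\widehat{\bmu}_m := \widehat{\bmu}_{m,\delta}(x)$ (padding each trace with zeros on the right so that it has length exactly $n$), and then output the unique $x' \in \{0,1\}^n$ (if any) whose mean trace $\mu_\delta(x')$ satisfies $\|\widehat{\bmu}_m - \mu_\delta(x')\| \le \sqrt{\varepsilon_\delta(n)}/2$. If no such $x'$ exists, output an arbitrary string (this will only happen with small probability).

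The first step is a concentration argument. Each coordinate $(\widehat{\bmu}_m)_j$ is the empirical average of $m$ independent $\{0,1\}$-valued random variables whose mean is $\mu_\delta(x)_j$. By Hoeffding's inequality, $\Pr[|(\widehat{\bmu}_m)_j - \mu_\delta(x)_j| > \eta] \le 2\exp(-2m\eta^2)$. Setting $\eta = \sqrt{\varepsilon_\delta(n)/(4n)}$ and taking a union bound over $j \in \{0,\dots,n-1\}$, I get
\[
\Pr\bigl[\|\widehat{\bmu}_m - \mu_\delta(x)\|_\infty > \eta\bigr] \le 2n\exp(-m\varepsilon_\delta(n)/(2n)),
\]
which is at most $1/100$ provided $m \ge C\,n\log(n)/\varepsilon_\delta(n)$ for a suitable constant $C$. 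Converting from $\ell_\infty$ to $\ell_2$ costs a factor of $\sqrt{n}$, so we obtain $\|\widehat{\bmu}_m - \mu_\delta(x)\|^2 \le \varepsilon_\delta(n)/4$ with probability at least $99/100$.

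The second step is correctness of the decision rule, by the triangle inequality. For any $x' \neq x$ in $\{0,1\}^n$, Definition~\ref{def:eps-infty} gives $\|\mu_\delta(x) - \mu_\delta(x')\| \ge \sqrt{\varepsilon_\delta(n)}$, so on the good event
\[
\|\widehat{\bmu}_m - \mu_\delta(x')\| \;\ge\; \|\mu_\delta(x) - \mu_\delta(x')\| - \|\widehat{\bmu}_m - \mu_\delta(x)\| \;\ge\; \sqrt{\varepsilon_\delta(n)} - \tfrac{1}{2}\sqrt{\varepsilon_\delta(n)} \;=\; \tfrac{1}{2}\sqrt{\varepsilon_\delta(n)},
\]
while $\|\widehat{\bmu}_m - \mu_\delta(x)\| \le \tfrac{1}{2}\sqrt{\varepsilon_\delta(n)}$; hence $x$ is the unique minimizer and the algorithm returns it.

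There is no real obstacle here; the proof is essentially a Chernoff-plus-triangle-inequality exercise. The only minor subtlety is that the lemma only claims a \emph{sample} complexity bound, not a running time bound (the naive nearest-mean rule would require enumerating all $2^n$ candidates $x'$); the subsequent discussion in the paper separately extracts an efficient algorithm from the fractional relaxation $\varepsilon^{\chill}_{\Del_\delta}(n)$ via linear programming, which is not needed for this lemma.
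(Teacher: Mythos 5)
Your proof is correct and follows essentially the same route as the paper's: both use a coordinatewise Chernoff/Hoeffding bound plus a union bound to get $\|\widehat{\bmu}_{m,\delta}(x) - \mu_\delta(x)\|^2$ small (the paper uses threshold $\varepsilon_\delta(n)/3$, you use $\varepsilon_\delta(n)/4$), and then invoke the definition of $\varepsilon_\delta(n)$ together with the triangle inequality to conclude that exhaustive nearest-mean decoding over all $2^n$ candidates recovers $x$. Your explicit triangle-inequality step and the closing remark that this lemma only addresses sample complexity (with the time-efficient algorithm deferred to the $\varepsilon^{\chill}_\delta(n)$ relaxation) match the paper's presentation.
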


\begin{proof}
Let $x$ be the unknown source string.  By a simple Chernoff bound and union bound, for $m=O(n \log n /\vareps_\delta(n))$ it is the case that with probability $0.999$ an outcome $\widehat{\mu}_{m,\delta}(x)$ drawn from $\widehat{\bmu}_{m,\delta}(x)$ will differ from $\mu_\delta(x)$ by at most $\sqrt{\vareps_\delta(n)/(3n)}$ in each coordinate, and hence will satisfy
$\|\mu_\delta(x) - \widehat{\mu}_{m,\delta}(x)\|^2 \leq \vareps_\delta(n)/3.$  Given this vector $\widehat{\mu}_{m,\delta}(x)$, an exhaustive enumeration of all $2^n$ possible mean traces $\mu_\delta(x')$ as $x'$ varies across all of $\zo^n$ will yield a unique $x' \in \zo^n$ for which $\|\mu_\delta(x') - \widehat{\mu}_{m,\delta}(x)\|^2 \leq \vareps_\delta(n)/3,$ and this $x'$ must be the source string $x$.
\end{proof}

Next we establish a lower bound on the sample complexity of mean-based algorithms in terms of $\vareps_\delta(n)$:

\begin{lemma} \label{lem:lower}
Any $m$-sample mean-based trace reconstruction algorithm must have $m=\Omega(\sqrt{\frac {1}{n \cdot \vareps_\delta(n)}}).$
\end{lemma}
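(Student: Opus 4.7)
The plan is to apply the standard Le Cam two-point method. Let $x \neq x' \in \zo^n$ be two strings that achieve the minimum in Definition~\ref{def:eps-infty}, so $\|\mu_\delta(x) - \mu_\delta(x')\|_2^2 = \vareps_\delta(n)$. By Le Cam's two-point lemma, any $m$-sample mean-based algorithm that correctly reconstructs both $x$ and $x'$ with probability at least $99/100$ must satisfy
\[
    d_{TV}\bigl(\widehat{\bmu}_{m,\delta}(x),\ \widehat{\bmu}_{m,\delta}(x')\bigr) \geq \Omega(1),
\]
so the task reduces to upper-bounding this TV distance by a quantity that forces $m$ to be large.

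I would proceed via the data processing inequality together with subadditivity of TV over product distributions, namely
\[
    d_{TV}\bigl(\widehat{\bmu}_{m,\delta}(x),\ \widehat{\bmu}_{m,\delta}(x')\bigr) \leq m\cdot d_{TV}\bigl(\Del_\delta(x),\ \Del_\delta(x')\bigr).
\]
The key remaining step is to bound the single-trace TV distance by the $L^1$ distance between the mean vectors. This can be established by a coupling argument exploiting the per-bit deletion structure: sharing the same retention pattern $\xi_0, \ldots, \xi_{n-1}$ across both $x$ and $x'$ yields traces of the same length that differ only at positions where $x$ and $x'$ disagree \emph{and} the bit survives, which after a coordinate-by-coordinate analysis gives
\[
    d_{TV}\bigl(\Del_\delta(x),\ \Del_\delta(x')\bigr) \leq O\bigl(\|\mu_\delta(x) - \mu_\delta(x')\|_1\bigr).
\]

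Finally, Cauchy--Schwarz yields $\|\mu_\delta(x) - \mu_\delta(x')\|_1 \leq \sqrt{n}\cdot\|\mu_\delta(x) - \mu_\delta(x')\|_2 = \sqrt{n\vareps_\delta(n)}$, and combining everything gives $d_{TV}(\widehat{\bmu}_{m,\delta}(x), \widehat{\bmu}_{m,\delta}(x')) = O(m\sqrt{n\vareps_\delta(n)})$. Requiring this to exceed a fixed constant yields the stated $m = \Omega(1/\sqrt{n\vareps_\delta(n)})$. The main obstacle is justifying the $L^1$ bound on the single-trace TV distance: for general joint distributions on $\zo^n$ this bound fails (for instance, distributions with matching marginals but disjoint support have zero $L^1$ mean distance while having TV equal to one), so we must rely on the specific product-over-positions structure of the deletion process through the coupling above.
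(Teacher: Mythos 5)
Your approach diverges from the paper's at the step where it matters, and that step has a genuine gap. You pass from the empirical-mean observable to the full trace dataset via data processing, reducing to bounding the single-trace distance $d_{TV}\bigl(\Del_\delta(x),\Del_\delta(x')\bigr)$, and then assert $d_{TV}\bigl(\Del_\delta(x),\Del_\delta(x')\bigr) \le O\bigl(\|\mu_\delta(x)-\mu_\delta(x')\|_1\bigr)$. The coupling you sketch does not deliver this: sharing the retention pattern $\xi$ only shows that the coupled traces agree unless some disagreeing position survives, hence $d_{TV}\bigl(\Del_\delta(x),\Del_\delta(x')\bigr) \le 1-\delta^{k}$, where $k$ is the Hamming distance between $x$ and $x'$ --- a bound of order $\Omega(1)$ for fixed $\delta$ and any $x \neq x'$, with no relation to the $L^1$ gap of the mean traces. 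More decisively, if your asserted inequality held, applying it to the minimizing pair and combining with the paper's bound $\sqrt{n\,\vareps_\delta(n)} \le \exp(-\Omega(n^{1/3}))$ would yield an $\exp(\Omega(n^{1/3}))$ sample-complexity lower bound for \emph{every} trace reconstruction algorithm, not just mean-based ones. That is a far stronger statement than the lemma claims; no such lower bound is known, and the paper deliberately confines its negative result to mean-based algorithms.

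The step you are missing is exactly where ``mean-based'' earns its keep. The paper's proof never passes to $\Del_\delta(x)$. It analyzes the observable $\widehat{\bmu}_{m,\delta}(x)$ directly: coordinate $j$ is distributed as $\frac{1}{m}\Bin\bigl(m,\mu_\delta(x)_j\bigr)$, and the elementary coupling bound $d_{TV}\bigl(\Bin(m,p),\Bin(m,q)\bigr) \le m|p-q|$ controls each coordinate's TV distance by $m\,|\mu_\delta(x)_j-\mu_\delta(x')_j|$; summing over coordinates and applying Cauchy--Schwarz gives $d_{TV}\bigl(\widehat{\bmu}_{m,\delta}(x),\widehat{\bmu}_{m,\delta}(x')\bigr) \le m\sqrt{n\,\vareps_\delta(n)}$, after which Le Cam forces $m = \Omega\bigl(1/\sqrt{n\,\vareps_\delta(n)}\bigr)$. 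Restricting attention to coordinate-wise empirical means is what lets one control the distinguishing power by the mean-trace gap rather than by the potentially much larger single-trace TV distance; your route discards that restriction before it can be used.
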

\begin{proof}
Fix $x \neq x' \in \zo^n$ to be such that $\vareps_\delta(n) = \|\mu_\delta(x) - \mu_\delta(x')\|^2$.  For each $j \in \{0,\dots,n-1\}$ it must be the case that $|\mu_\delta(x)_j - \mu_\delta(x')_j| \leq \sqrt{\vareps_\delta(n)/n}.$  Since the random variable $\widehat{\bmu}_{m,\delta}(x)_j$ is distributed according to ${\frac 1 m} \cdot \Bin(m,\mu_\delta(x)_j)$, and the variation distance between two Binomial distributions $\Bin(m,p)$ and $\Bin(m,q)$ is at most $m(p-q)$, we have $\dtv{\widehat{\bmu}_{m,\delta}(x)_j}{\widehat{\bmu}_{m,\delta}(x')_j} \leq m\sqrt{\vareps_\delta(n)/n},$ and hence by the triangle inequality $\dtv{\widehat{\bmu}_{m,\delta}(x)}{\widehat{\bmu}_{m,\delta}(x')} \leq m\sqrt{n \cdot \vareps_\delta(n)}.$  But $\dtv{\widehat{\bmu}_{m,\delta}(x)}{\widehat{\bmu}_{m,\delta}(x')}$ must certainly be at least (say) $1/2$, since otherwise it is impossible for an $m$-sample mean-based trace reconstruction algorithm to output the correct answer with probability at least $99/100$ both when run on the source string $x$ and when run on $x'$.  The lemma follows.
\end{proof}

Lemmas~\ref{lem:upper} and~\ref{lem:lower} show that up to $\poly(n)$ factors (which are negligible given the bounds that we will ultimately establish on $1/\vareps_\delta(n)$), the information-theoretic sample complexity of mean-based algorithms for trace reconstruction algorithms is $1/(\vareps_\delta(n))^c$ for some $\half \leq c \leq 1.$\rnote{From Ryan's old notes:  ``It's not extremely clear to me, but we kind of feel that Mitz et al.'s Theorem~3.1 is suggesting
\[
    \exp(-\wt{O}(\sqrt{n}))\ \mathop{\leq}^{?}\ \vareps_\delta(n)\ \mathop{\leq}^{?}\ n^{-\wt{\Omega}(\log n)}.\text{''}
\]
Verify this and add some discussion of it here?}

 With these lemmas in hand several tasks remain for us.  First, of course we would like to give a concrete upper bound on $\vareps_\delta(n)$ and thereby via Lemma~\ref{lem:lower} obtain a lower bound on the sample complexity of mean-based trace reconstruction.  In Section~\ref{sec:string-to-poly} we give a characterization of $\vareps_\delta(n)$ in terms of polynomials over $\C$ which will later lead us to an upper bound on it.  Second, while Lemma \ref{lem:upper} gives a sample complexity upper bound in terms of $\vareps_\delta(n)$, the algorithm described in the proof of that lemma is highly inefficient because of enumerating over all candidate strings $x' \in \zo^n$;  we would like to have a computationally efficient (as well as sample-efficient) algorithm.  In Section~\ref{sec:beyond-eps} we define a relaxation $\vareps^{\chill}_\delta(n)$ that is closely related to $\vareps_\delta(n)$ and  show that there is an algorithm with sample complexity \emph{and running time} polynomial in $n$, $1/\vareps^{\chill}_\delta(n)$ and $1/\rho.$

Looking ahead, once the results of Sections~\ref{sec:string-to-poly} and~\ref{sec:beyond-eps} are in hand, the main technical contributions of our paper are in Sections~\ref{sec:A} and~\ref{sec:B}.  In Section~\ref{sec:A} we give an upper bound on $\vareps_\delta(n)$ and thereby establish Theorem~\ref{thm:main-positive}. In Section~\ref{sec:B} we give a lower bound on $\vareps^{\chill}_\delta(n)$ and thereby establish Theorem~\ref{thm:main-negative}.  Happily, the lower bound we obtain on the relaxation $\vareps^{\chill}_\delta(n)$ is within polynomial factors of our upper bound on $\vareps_\delta(n)$, allowing us to obtain essentially matching upper and lower bounds overall.

\subsection{A reformulation of $\vareps_\delta(n)$ in terms of polynomials over $\C$} \label{sec:string-to-poly}

Recalling (\ref{eq:A}), we would like to understand the vector $Au$ as $u$ ranges over $\{-1,0,1\}^n \setminus \{0^n\}$.  We take a generating function approach.  Writing $Z$ for the column vector $(1, z, z^2, \dots, z^{n-1})$, we consider the ordinary generating function $G(Au;Z),$
\[
G(Au;Z) := (Au)^\top Z = u^\top A^\top Z.
\]

Note that the $k$th row of $A^\top$ is just the pmf of the random variable $\Bin(k,\rho)$, so the $j$-th entry in row $k$ is
${k \choose j} \rho^j \delta^{k-j}.$  It follows from the binomial theorem that the $k$-th entry of $A^\top Z$ is $(\delta + \rho z)^k$, and hence the generating function  is
\begin{equation} \label{eq:gf}
G(Au,Z)= \sum_{j = 0}^{n-1} u_j (\delta + \rho z)^j =:  p_u(\delta + \rho z),
\end{equation}
where
\[
    p_u(w) := \sum_{j=0}^{n-1} u_j w^j
\]
is a nonzero ``Littlewood polynomial,'' i.e. a  polynomial all of whose coefficients belong to $\{-1,0,1\}.$

The polynomial $p_u(w)$ is where we make our segue to working with complex numbers.  We recall the following :

\begin{fact} \label{fact:length-modulus}
For any real vector $v=(v_0,\dots,v_{n-1})$, the squared Euclidean length of $v$ is equal to the average of the squared modulus of $G(v;Z)$ averaged around the unit circle of the complex plane.
\end{fact}
\begin{proof}
\rnote{can/should we cite something here instead of writing this proof?  I guess it's short...}The average squared modulus of $G(v;Z)$ is
\begin{align*}
\Avg_{|z|=1} \left| \sum_{j=0}^{n-1} v_j z^j \right|^2
&= {\frac 1 {2\pi}} \int_{0}^{2 \pi}  \left| \sum_{j=0}^{n-1} v_j e^{ijt} \right|^2 dt
= {\frac 1 {2\pi}} \int_{0}^{2 \pi}  \left| \sum_{j=0}^{n-1} v_j \cos jt + i \sum_{j=0}^{n-1} v_j \sin jt \right|^2 dt\\
&= {\frac 1 {2\pi}} \int_{0}^{2 \pi}  \left(\sum_{j=0}^{n-1} v_j \cos jt\right)^2 + \left(\sum_{j=0}^{n-1} v_j \sin jt \right)^2 dt\\
&= {\frac 1 {2\pi}} \int_{0}^{2 \pi}  \sum_{j=0}^{n-1} v_j^2 (\cos^2 jt + \sin^2 jt)
+ \sum_{j_1 \neq j_2} v_{j_1} v_{j_2} ( \cos j_1 t \cos j_2 t + \sin j_1 t \sin j_2 t)
 dt\\
 &= \sum_{j=0}^{n-1} v_j^2 = \|v\|^2.
\end{align*}
\end{proof}

As a direct consequence, from (\ref{def:eps-infty}) and (\ref{eq:gf}) we have
\begin{align}
\vareps_\delta(n) &= \rho ^2 \cdot \min_{\substack{u \in \{-1,0,1\}^n \\ u \neq 0^n}}
\Avg_{|z|=1} |G(Au;Z)|^2 \nonumber \\
&= \rho^2 \cdot \min\left\{ \Avg_{z \in \partial D_\rho(\delta)} |p(z)|^2  \ : \ p \text{~is a nonzero Littlewood polynomial of degree $< n$}\right\}. \label{eq:avg}
\end{align}

The following lemma shows that up to polynomial factors the average above can be replaced by the maximum:

\begin{lemma} \label{lem:avg-max}
For any nonzero Littlewood polynomial $p$ of degree at most $n$ and any $0 < \rho < 1$,
\[
{\frac {\sqrt{{\max_{z \in \partial D_{\rho}(\delta)} |p(z)|^2}}} {150\delta n^2}} \leq
\Avg_{z \in \partial D_{\rho}(\delta)} |p(z)|^2 \leq
\max_{z \in \partial D_{\rho}(\delta)} |p(z)|^2.
\]
\end{lemma}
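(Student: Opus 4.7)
The upper bound $\Avg(|p|^2) \leq \max(|p|^2)$ is immediate, since $|p(z)|^2$ is pointwise bounded by its maximum over the circle. For the lower bound, the plan is to combine Bernstein's inequality on the disk $D_{\rho}(\delta)$ with the integer-coefficient structure of $p$.

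Let $M := \max_{z \in \partial D_{\rho}(\delta)}|p(z)|$, attained at some $z^* \in \partial D_{\rho}(\delta)$. By the Maximum Modulus Principle, $|p| \leq M$ throughout the closed disk $D_{\rho}(\delta)$, and Bernstein's inequality for polynomials of degree at most $n$ on this disk yields $|p'(z)| \leq nM/\rho$ everywhere inside. Thus for any $z$ on the circle with $|z - z^*| \leq \rho/(2n)$, the mean value estimate gives $|p(z)| \geq M/2$. This ``plateau'' holds on an arc of angular length $\Omega(1/n)$ around $z^*$, which is an $\Omega(1/n)$ fraction of the full circle, so integrating yields a Nikolskii-type estimate $\Avg_{z \in \partial D_{\rho}(\delta)}|p(z)|^2 \geq c M^2/n$ for some absolute constant $c > 0$.

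To upgrade this to the stated form $\Avg \geq M/(150\delta n^2)$, I would use that $z = 1$ lies on $\partial D_{\rho}(\delta)$ and that $p(1) = \sum_j u_j \in \mathbb{Z}$ because $p$ is Littlewood. If $p(1) \neq 0$ then $M \geq 1$, and the Nikolskii estimate above directly implies the claim in the regime $\delta n \geq \Omega(1)$. If instead $p(1) = 0$, so that $(z-1)$ divides $p$, then I iterate: each derivative value $p^{(k)}(1)/k! = \sum_{j \geq k}\binom{j}{k} u_j$ is an integer, while iterated Bernstein gives $|p^{(k)}(1)| \leq \bigl(n(n-1)\cdots(n-k+1)\bigr) M/\rho^k$. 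Since $p$ is nonzero of degree at most $n$, some $k^* \leq n$ must satisfy $p^{(k^*)}(1) \neq 0$, which forces $M \geq k^*!\,\rho^{k^*}/\bigl(n(n-1)\cdots(n-k^*+1)\bigr)$, a polynomial lower bound on $M$ in terms of $\rho$ and $n$.

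The main obstacle I anticipate is the two-regime calibration. When $M$ is not too small, the Nikolskii bound $\Avg \geq cM^2/n$ alone dominates the target $M/(\delta n^2)$; when $M$ is very small, the Littlewood-derived lower bound on $M$ combines with $\Avg \geq cM^2/n$ to produce the required factor $1/(\delta n^2)$. Aligning the constants to yield the explicit factor $150\delta n^2$ requires care in the regime where $\delta$ is small, since there the point $z = 1$ lies very close to the rest of the circle and the Nikolskii estimate must be exploited more carefully, likely by replacing the crude arc-length argument with a sharper $L^\infty/L^2$ comparison for polynomials of degree at most $n$ on the circle.
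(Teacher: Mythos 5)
The target inequality is false, so no proof strategy can succeed. Take $p(z) = 1-z$, a nonzero Littlewood polynomial of degree $n = 1$. Parametrizing $z = (1-\rho) + \rho e^{i\theta}$ gives $p(z) = \rho(1 - e^{i\theta})$, so $\max_{z \in \partial D_\rho(\delta)} |p(z)| = 2\rho$ while $\Avg_{z \in \partial D_\rho(\delta)} |p(z)|^2 = 4\rho^2 \cdot \tfrac12 = 2\rho^2$. The claimed lower bound becomes $2\rho/(150\delta)$, and $2\rho^2 \geq 2\rho/(150\delta)$ rearranges to $\rho\delta \geq 1/150$, which fails whenever $\rho$ is close to $0$ or to $1$ (since $\rho\delta = \rho(1-\rho) \to 0$ at both ends). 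So the statement already breaks at degree one.

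Your worry about the small-$\delta$ calibration is exactly on target, but it is not a matter of sharpening constants: the two ingredients you propose cannot combine to yield the stated bound. The Nikolskii step gives $\Avg |p|^2 \geq c M^2/n$; converting this into $M/(150\delta n^2)$ requires $M \geq c'/(\delta n)$, while the integrality-plus-iterated-Bernstein argument at $z = 1$ only forces $M \geq \rho^{k}/{n \choose k}$ (with $k$ the vanishing order at $1$), which in the example above is merely $M \geq \rho$ --- and $\rho \geq 1/(\delta n)$ is false in general, since $\rho < 1/\delta$ always. For what it is worth, the paper's own proof of this lemma lives in a commented-out draft section of the source and has essentially the same hidden assumption: its claim that $|p(z)| \geq M/2$ on an arc of angular half-width $1/(10\delta n^2 M)$ tacitly needs $M^2$ to be of order at least $\rho/\delta$, which is precisely what fails for $p(z) = 1 - z$. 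The compiled paper discards this lemma altogether and instead uses Proposition~\ref{prop:cx}, which sandwiches $\|\mu_{\Del_\delta}(b)\|_1$ between $\max_{\partial D_1(0)} |P_{\Del_\delta,b}|$ and $\sqrt{n}\,\max_{\partial D_1(0)} |P_{\Del_\delta,b}|$ by Cauchy--Schwarz and Parseval, with a clean $\sqrt{n}$ loss and no $\delta$- or $\rho$-dependence at all.
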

\begin{proof}
The second inequality is obvious.  For the first, let $\gamma$ denote $\max_{z \in \partial D_{\rho}(\delta)} |p(z)|$ so
$\gamma^2 = \max_{z \in \partial D_{\rho}(\delta)} |p(z)|^2.$  Let $z^\ast \in \partial D_\rho(\delta)$ be such that $|p(z^\ast)|=\gamma.$  Since $p$ is a degree-$n$ Littlewood polynomial its derivative has magnitude at most $n^2$ at every point on $\partial D_\rho(\delta)$, and consequently $|p(z)| \geq \gamma/2$ for every point within angular distance $1/(10\delta n^2 \gamma)$ of $z^\ast$ on $\partial D_\rho(\delta).$  It follows that
\[
\Avg_{z \in \partial D_\rho(\delta)} |p(z)|^2 \geq {\frac 1 {10 \delta n^2 \gamma}} \cdot {\frac 1 \pi} \cdot {\frac {\gamma^2} 4},
\]
giving the first inequality.
\end{proof}

By the Maximum Modulus principle, any polynomial $p$ has
\begin{equation} \label{eq:maxmod}
\max_{z \in D_\rho(\delta)}|p(z)| =
\max_{z \in \partial D_\rho(\delta)}|p(z)|.
\end{equation}
Motivated by Lemma \ref{lem:avg-max} and (\ref{eq:maxmod}), we define
\[
    \vareps^\Lit_\delta(n) := \min \left\{\max_{z \in D_\rho(\delta)} |p(z)|  \ : \ p \text{~is a nonzero Littlewood polynomial of degree $< n$}\right\}.
\]

In Section \ref{sec:A} we prove the upper bound
\begin{equation} \label{eq:ub}
\vareps^\Lit_\delta(n)\leq \exp(-\Omega(n^{1/3}/\rho^{1/3})).
\end{equation}
Combining Lemma \ref{lem:lower},  (\ref{eq:avg}), Lemma \ref{lem:avg-max}, (\ref{eq:maxmod}) and (\ref{eq:ub}), we obtain Theorem~\ref{thm:main-negative}.

\subsection{A relaxation that leads to an efficient algorithm}\label{sec:beyond-eps}

We define a nonzero polynomial $p(x)=\sum_{j=0}^{n-1} a_j x^j$ with real coefficients to be a \emph{chill} polynomial if
\begin{enumerate}
\item the smallest $j$ such that $a_j \neq 0$ has $|a_j|=1$, and
\item every $|a_j| \leq 1.$
\end{enumerate}

Note that every nonzero Littlewood polynomial is a chill polynomial.

For our algorithmic result we consider the following relaxation $\vareps^{\chill}_\delta(n) \leq \vareps^\Lit_\delta(n)$:
\[
\vareps^{\chill}_\delta(n) := \min\left\{\max_{z \in D_\rho(\delta)} |p(z)|  \ :\ p \text{~is a chill polynomial of degree $< n$}\right\}.
\]

We show in Section~\ref{sec:B} that $\vareps^{\chill}_\delta(n)$ is not too small.  Together with the following theorem, which should be compared with Lemma~\ref{lem:upper}, this gives our main positive result, Theorem~\ref{thm:main-positive}:

\begin{theorem} \label{thm:alg}
For any $0 < \delta < 1$, there is an $m$-sample mean-based trace reconstruction algorithm at deletion rate $\delta$ that has sample complexity $m = \poly(n,1/\vareps^\chill_\delta(n),1/\rho)$ and runs in time $\poly(m,n).$
\end{theorem}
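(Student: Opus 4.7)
The plan is to instantiate the linear-programming paradigm already sketched for $\gap^\chill_{\Del_\delta}(n)$ in Section~\ref{sec:complexity}, using the chill-polynomial quantity $\vareps^\chill_\delta(n)$ of this section as the distinguishing margin. The algorithm has two phases. First, draw $m$ independent traces, pad each to length $n$ consistent with the definition of $\mu_\delta$, and average them to form an empirical mean $\wh{\mu} \in [-1,1]^n$. Then recover the bits of $x$ one at a time, left-to-right: having correctly identified $x_0,\dots,x_{i-1}$, decide $x_i$ by testing, for each $b \in \{-1,+1\}$, feasibility of the linear program
\[
\mathrm{LP}_b:\ \exists\,y_{i+1},\dots,y_{n-1}\in[-1,+1]\ \text{with}\ \|\mu_\delta(x_0,\dots,x_{i-1},b,y_{i+1},\dots,y_{n-1}) - \wh{\mu}\|_1 \leq \tfrac{1}{2}\rho\,\vareps^\chill_\delta(n).
\]
By Proposition~\ref{prop:its-linear}, $\mu_\delta$ is a real-linear function of its argument whose coefficients (the entries of the matrix $A$ in (\ref{eq:A})) can be computed exactly from $\delta$ in $\poly(n)$ time, so each $\mathrm{LP}_b$ is genuinely a polynomial-size linear program, solvable in $\poly(n)$ time; I will declare $x_i := b$ for the unique feasible value.

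A standard Chernoff plus union-bound argument will show that $m = O(n^2 \log n / (\rho^2 \vareps^\chill_\delta(n)^2)) = \poly(n, 1/\vareps^\chill_\delta(n), 1/\rho)$ samples suffice to guarantee, with probability at least $99/100$, that $\|\wh{\mu} - \mu_\delta(x)\|_\infty \leq \rho\,\vareps^\chill_\delta(n)/(4n)$, and hence $\|\wh{\mu} - \mu_\delta(x)\|_1 \leq \rho\,\vareps^\chill_\delta(n)/4$. Conditioning on this event, $\mathrm{LP}_{x_i}$ is automatically feasible (witnessed by $y_j := x_j$), so the substantive content lies in ruling out the wrong candidate.

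For this crux step I plan to argue by contradiction: if some $y \in [-1,1]^{n-i-1}$ made $\mathrm{LP}_{-x_i}$ feasible, form $x^{(1)} := (x_0,\dots,x_{i-1},-x_i,y)$ and $d := (x - x^{(1)})/2$. Then $d$ vanishes on coordinates $<i$, has $d_i = \pm 1$, and has all remaining entries in $[-1,+1]$, so $p_d$ is precisely a chill polynomial of degree less than $n$. Combining the trivial inequality $\|v\|_1 \geq \max_{|z|=1}|G(v;z)|$ with the identity $G(\mu_\delta(d);z) = \rho\,p_d(\delta+\rho z)$ from (\ref{eq:gf}) then gives
\[
\|\mu_\delta(x) - \mu_\delta(x^{(1)})\|_1 = 2\|\mu_\delta(d)\|_1 \geq 2\rho \max_{w \in D_\rho(\delta)}|p_d(w)| \geq 2\rho\,\vareps^\chill_\delta(n),
\]
whereas two triangle inequalities through $\wh{\mu}$ force this same distance to be at most $\tfrac{3}{4}\rho\,\vareps^\chill_\delta(n)$, a contradiction. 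The main subtlety, more than a true obstacle, is the choice to relax the unseen tail $y_{i+1},\dots,y_{n-1}$ to $[-1,+1]$ while pinning $x_i$ to $\{-1,+1\}$: this is exactly what keeps $\mathrm{LP}_b$ convex while preserving a lower bound on $\|\mu_\delta(x)-\mu_\delta(x^{(1)})\|_1$ through a chill-polynomial argument, and explains why the running time is governed by $\vareps^\chill_\delta(n)$ rather than by the strict Littlewood quantity $\vareps^\Lit_\delta(n)$.
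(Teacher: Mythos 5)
Your proposal is correct, and it matches the algorithm sketched at the end of Section~\ref{sec:complexity}: recover $x$ bit-by-bit via a sequence of $n$ linear programs, using the fact that for a feasible wrong candidate $x^{(1)}$, the difference $d = (x - x^{(1)})/2$ gives a chill polynomial $p_d$ whose max modulus on $D_\rho(\delta)$ must exceed $\vareps^\chill_\delta(n)$, forcing the contradiction $\tfrac{3}{4}\rho\,\vareps^\chill_\delta(n) \geq \|\mu_\delta(x)-\mu_\delta(x^{(1)})\|_1 \geq 2\rho\,\vareps^\chill_\delta(n)$.

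It is worth noting that the paper's detailed proof of this theorem takes a slightly different route. It does not work with the $\ell_1$ norm of the mean-trace vector directly; instead it places $s = \Theta(n^2/\vareps^\chill_\delta(n))$ equally-spaced points on the unit circle, evaluates an empirical channel polynomial $\widehat q_{x,\delta}(z_i)$ at each, and then for each $z_i$ solves a separate quadratic-objective convex minimization over the unknown tail coefficients, declaring $x_k = 1$ if any residual $\gamma_i$ exceeds a threshold. That version has to bound the derivative of a chill polynomial on $\partial D_\rho(\delta)$ (it is at most $n^2$) to argue that the discretization cannot miss the point where the polynomial is large. Your formulation packages the same chill-polynomial margin into a single $\ell_1$ feasibility LP per candidate bit, relying instead on the one-line inequality $\|v\|_1 \geq \max_{|z|=1}|G(v;z)|$ (the easy half of Proposition~\ref{prop:cx}) to transfer the polynomial lower bound to the $\ell_1$ distance. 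The payoff is that you need no discretization parameter, no derivative bound, and no separate minimization per grid point; you also avoid the paper's preliminary step of estimating $\|x\|_1$, since the whole coefficient vector is handled at once. Your feasibility-LP also implicitly assumes the algorithm knows a usable lower bound on $\vareps^\chill_\delta(n)$ to set the threshold; this is fine (and is implicit in the theorem's quantification over sample complexity), though you could avoid it entirely by instead minimizing $\|\mu_\delta(x_{<i},b,y)-\wh\mu\|_1$ over $y$ for each $b$ and choosing the $b$ with the smaller optimum.
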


The high-level idea of Theorem~\ref{thm:alg} is to solve a sequence of $n$ convex minimization problems, each of which lets us discover one bit of the source string $x$.
We establish some lemmas that will be used in the proof of Theorem~\ref{thm:alg}.  The first simple lemma shows that it is easy for a mean-based algorithm to determine $\|x\|_1$, the number of ones in the source string $x$:

\begin{lemma} \label{lem:numones}
For $m = O({\frac {n^2 \log n} {\rho^2}})$, with probability at least $1-1/n^3$ over a draw of $\widehat{\mu}_{m,\delta}(x)$ from $\widehat{\bmu}_{m,\delta}(x)$ the value
$\left \lfloor \|\widehat{\mu}_{m,\delta}(x)\|_1/\rho \right \rceil$ is equal to $\|x\|_1.$
\end{lemma}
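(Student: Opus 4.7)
The plan is to combine two simple observations: an \emph{exact} identity for the $\ell_1$ norm of the true mean trace $\mu_\delta(x)$, together with a standard concentration argument showing that the empirical mean trace $\widehat{\mu}_{m,\delta}(x)$ is coordinate-wise close to $\mu_\delta(x)$.

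First I would establish the identity $\|\mu_\delta(x)\|_1 = \rho\,\|x\|_1$. Since $x \in \{0,1\}^n$, formula~(\ref{eq:mudeltaj}) shows that each coordinate $\mu_\delta(x)_j = \Pr_{\by \leftarrow \Del_\delta(x)}[\by_j = 1] \in [0,1]$ is nonnegative, so
\begin{equation*}
\|\mu_\delta(x)\|_1 = \sum_{j} \mu_\delta(x)_j = \rho \sum_{k} x_k \sum_{j \leq k} \Pr[\Bin(k,\rho)=j] = \rho \sum_k x_k = \rho\,\|x\|_1,
\end{equation*}
using $\sum_j \Pr[\Bin(k,\rho)=j] = 1$. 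Thus $\|x\|_1 = \|\mu_\delta(x)\|_1/\rho$ exactly, and the task reduces to approximating $\|\mu_\delta(x)\|_1$ to within $\rho/2$ in $\ell_1$.

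Second, for concentration: for each fixed $j$, the value $\widehat{\mu}_{m,\delta}(x)_j$ is an empirical average of $m$ i.i.d.\ Bernoulli$(\mu_\delta(x)_j)$ random variables (zero-padded positions contributing~$0$). Hoeffding's inequality gives
\begin{equation*}
\Pr\!\bigl[\,|\widehat{\mu}_{m,\delta}(x)_j - \mu_\delta(x)_j| > \eta \,\bigr] \leq 2\exp(-2m\eta^2).
\end{equation*}
Setting $\eta = \rho/(4n)$ and $m = C\,n^2 \log n / \rho^2$ for a sufficiently large universal constant $C$ makes this probability at most $1/n^4$. A union bound over the $n$ coordinates shows that, with probability at least $1 - 1/n^3$, every coordinate approximation has error at most $\rho/(4n)$.

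Finally, on this good event, the triangle inequality gives
\begin{equation*}
\bigl|\,\|\widehat{\mu}_{m,\delta}(x)\|_1 - \|\mu_\delta(x)\|_1\,\bigr| \leq \sum_{j} \bigl|\widehat{\mu}_{m,\delta}(x)_j - \mu_\delta(x)_j\bigr| \leq n \cdot \frac{\rho}{4n} = \frac{\rho}{4}.
\end{equation*}
Dividing by $\rho$ yields $\bigl|\,\|\widehat{\mu}_{m,\delta}(x)\|_1/\rho - \|x\|_1\,\bigr| \leq 1/4 < 1/2$, and since $\|x\|_1$ is an integer, nearest-integer rounding recovers it exactly. The argument is essentially routine; the only mild subtlety is to choose $\eta$ strictly less than $\rho/(2n)$ (e.g., $\rho/(4n)$ as above) so that the rounding never lands on the ambiguous boundary.
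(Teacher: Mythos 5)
Your proof is correct and follows essentially the same route as the paper's: establish the exact identity $\|\mu_\delta(x)\|_1 = \rho\|x\|_1$, apply a Chernoff/Hoeffding bound per coordinate with a union bound, and then round. The only cosmetic difference is that the paper derives the identity directly from linearity of expectation (each one-coordinate of $x$ is retained with probability $\rho$), whereas you derive it by summing the explicit formula~(\ref{eq:mudeltaj}) and swapping the order of summation; both are fine.
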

\begin{proof} Since every one-coordinate of $x$ is independently retained in $\by \leftarrow \Del_\delta(x)$ with probability $\rho$, by linearity of expectation we have $\|\mu_{\delta}(x)\|_1 = \rho\|x\|_1.$  For each coordinate $j \in \{0,\dots,n-1\}$ a simple Chernoff bound gives that $|\widehat{\mu}_{m,\delta}(x)_j - \mu_\delta(x)_j| \leq {\frac \rho {10n}}$ except with failure probability at most $1/n^4,$ so with overall probability at least $1-1/n^3$ we have $| \| \widehat{\mu}_{m,\delta}(x)\|_1 - \|\mu_{\delta}(x)\|_1| \leq \rho/10$, which gives the lemma.
\end{proof}
Via Lemma \ref{lem:numones}, we can subsequently assume that a mean-based trace reconstruction algorithm is provided with the value $\|x\|_1.$

Our next lemma establishes an identity which is useful because it equates two polynomials, one of which ((\ref{eq:poly-equal2})) is defined in terms of the unknown bits $x_k$ of the source string and the other of which ((\ref{eq:poly-equal1})) is defined in terms of the mean trace vector, which can be easily estimated by a mean-based algorithm.  (Lemma~\ref{lem:estimate-poly} gives the details of performing such an estimation.)

\begin{lemma} \label{lem:poly}
For any source string $x \in \zo^n$ and any $0<\delta<1$, we have
\begin{align}
q_{x,\delta}(z) &:= n - \|\mu_\delta(x)\|_1 + \sum_{j=0}^{n-1} \mu_{\delta}(x)_j z^j  \label{eq:poly-equal1}\\
&= \rho \cdot\left( n - \|x\|_1 + \sum_{k=0}^{n-1} x_k (\delta + \rho z)^k \right).  \label{eq:poly-equal2}
\end{align}
\end{lemma}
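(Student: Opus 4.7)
My plan is a direct computation that splits both sides into a constant piece and a polynomial-in-$z$ piece, then verifies each piece separately. The polynomial pieces will be matched via a swap of summation followed by the binomial theorem, and the constants will be matched via the observation that the total expected Hamming weight of a trace equals $\rho\|x\|_1$.

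First I would compute the generating function $\sum_{j=0}^{n-1}\mu_\delta(x)_j z^j$. Substituting the explicit formula $\mu_\delta(x)_j = \rho\sum_{k \ge j}\binom{k}{j}\rho^j\delta^{k-j}x_k$ from~(\ref{eq:mudeltaj}), swapping the order of summation over $j$ and $k$, and then collapsing the inner sum via the binomial identity $\sum_{j=0}^{k}\binom{k}{j}(\rho z)^j\delta^{k-j} = (\delta + \rho z)^k$, yields
\[
    \sum_{j=0}^{n-1}\mu_\delta(x)_j\, z^j \;=\; \rho\sum_{k=0}^{n-1}x_k(\delta + \rho z)^k.
\]
This accounts for the polynomial-in-$z$ portion of the claimed identity.

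For the constant terms, specializing the above at $z = 1$ --- or equivalently invoking linearity of expectation, since each $1$-bit of $x$ is retained in the trace independently with probability $\rho$ --- gives $\sum_j \mu_\delta(x)_j = \rho\|x\|_1$. Because every coordinate $\mu_\delta(x)_j = \Pr[\by_j = 1]$ lies in $[0,1]$, this sum is exactly the $\ell_1$ norm, so $\|\mu_\delta(x)\|_1 = \rho\|x\|_1$. Substituting this into the constant part of the LHS and matching it against the constant part of the RHS (after pulling out the common factor of $\rho$) completes the identity.

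There is essentially no obstacle: the argument is one index-swap, one application of the binomial theorem, and the one-line computation of the expected Hamming weight of a trace. The lemma is worth stating in this form because the LHS is an object an algorithm can approximate from an empirical mean trace together with the estimate of $\|x\|_1$ provided by Lemma~\ref{lem:numones}, whereas the RHS displays the same object as a polynomial in $(\delta + \rho z)$ with coefficients $\rho x_k$ --- exactly the representation needed to recover the bits of $x$ by the convex-programming procedure in the proof of Theorem~\ref{thm:alg}.
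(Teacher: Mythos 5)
Your computation of the polynomial piece is fine: swapping the sums over $j$ and $k$ and collapsing via the binomial theorem correctly gives $\sum_{j}\mu_\delta(x)_j z^j = \rho\sum_k x_k(\delta+\rho z)^k$, and your observation that $\|\mu_\delta(x)\|_1 = \rho\|x\|_1$ (from $z=1$, or from linearity of expectation) is also correct. But the constants do \emph{not} then match: substituting, the left-hand side becomes $n - \rho\|x\|_1 + \rho\sum_k x_k(\delta+\rho z)^k$, whereas the right-hand side is $\rho n - \rho\|x\|_1 + \rho\sum_k x_k(\delta+\rho z)^k$. These differ by $(1-\rho)n = \delta n$, so the step ``matching the constant parts after pulling out the common factor of $\rho$'' simply fails --- there is no common factor of $\rho$ to pull out of $n - \rho\|x\|_1$. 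You can see the mismatch already at $n=1$, $x=(1)$: then $\mu_\delta(x)_0 = \rho$, the left-hand side is $1 - \rho + \rho = 1$, and the right-hand side is $\rho\cdot(1-1+1)=\rho$.

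In fact the lemma is false as stated, and your approach, carried out honestly to its conclusion, is exactly what exposes the error; the gap in your write-up is the unjustified final sentence asserting that the constants match. The paper's own proof of this lemma proceeds differently --- it writes $q_{x,\delta}(z) = \sum_{j<n}\E[z^{j\by_j}]$ and then evaluates that sum a second way --- but in that second evaluation it sets $\E[z^{j\by_j}] = \sum_{k\geq j}\Pr[k\rightsquigarrow j]\,z^{j x_k}$, silently dropping the event that no source bit lands at position~$j$ (i.e.\ the trace has length at most $j$, so $\by_j$ is a padded zero and contributes $z^{0}=1$ to the expectation). Summing this omitted probability over $j$ gives $\sum_{j<n}\Pr[\boldn\leq j] = n - \E[\boldn] = n - \rho n = \delta n$, which is exactly the missing amount. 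The identity that is actually true is
\[
n - \|\mu_\delta(x)\|_1 + \sum_{j=0}^{n-1}\mu_\delta(x)_j z^j \;=\; \delta n \;+\; \rho\Bigl(n - \|x\|_1 + \sum_{k=0}^{n-1}x_k(\delta+\rho z)^k\Bigr),
\]
and your two computations prove precisely this.
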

\begin{proof}
We consider the quantity
$
\sum_{j=0}^{n-1} \E_{\by \leftarrow \Del_\delta(x)} \left[z^{\by_j j}\right].
$
It is easy to see that this is equal to $\sum_{j=0}^{n-1} ((1-\mu_{\delta}(x)_j) + \mu_\delta(x)_j z^j),$ giving (\ref{eq:poly-equal1}).  On the other hand, recalling the notation ``$\Pr[k \rightsquigarrow j]$'' that denotes the probability that coordinate $x_k$ of $x$ appears as coordinate $\by_j$ of $\by \leftarrow \Del_\delta(x),$ we have that for $j \leq k \leq n-1$,
\[
\Pr[k \rightsquigarrow j] =  {{k}\choose{j}} \rho^{j+1} \delta^{k-j},
\]
and hence
\[
\E_{\by \leftarrow \Del_\delta(x)}[z^{\by_j j}] = \sum_{k=j}^{n-1} \Pr[k \rightsquigarrow j] ( \mathbf{1}[x_k=0] + \mathbf{1}[x_k=1] \cdot  z^j)
=\sum_{k=j}^{n-1} {{k}\choose{j}} \rho^{j+1} \delta^{k-j} z^{j x_k}.
\]
Consequently we also have
\begin{align*}
 \sum_{j=0}^{n-1} \E_{\by \leftarrow \Del_\delta(x)} \left[z^{\by_j j}\right] &=
 \sum_{j=0}^{n-1} \sum_{k=j}^{n-1} {{k}\choose{j}} \rho^{j+1} \delta^{k-j} z^{j x_k}\\
 &= \sum_{k=0}^{n-1} \sum_{j=0}^{k} {{k}\choose{j}} \rho^{j+1} \delta^{k-j} z^{j x_k}\\
 &= \sum_{k=0}^{n-1} \rho \delta^k \sum_{j=0}^k {k \choose j} \left({\frac \rho \delta} \cdot z^{x_k}\right)^j\\
 &= \sum_{k=0}^{n-1} \rho \delta^k \left(1 + {\frac \rho \delta} \cdot z^{x_k}\right)^k \tag{by the binomial theorem}\\
 &= \rho \sum_{k=0}^{n-1} \left(\delta + \rho z^{x_k}\right)^k.\end{align*}
Observing that when $x_k=0$ we have $\left(\delta + \rho z^{x_k}\right)^k=1$ and when $x_k=1$ we have $\left(\delta + \rho z^{x_k}\right)^k= (\delta + \rho z)^k = x_k (\delta + \rho z)^k$, we get
\[
\sum_{j=0}^{n-1} \E_{\by \leftarrow \Del_\delta(x)} \left[z^{\by_j j}\right] =
\rho \cdot \left( n - \|x\|_1 + \sum_{k=0}^{n-1} x_k (\delta + \rho z)^k \right),
\]
establishing the lemma.
\end{proof}

\begin{lemma} \label{lem:estimate-poly}
Fix $x \in \zo^n$, $0 < \delta < 1$, and $z \in \C, |z| \leq 1.$  Given a draw of $\widehat{\mu}_{m,\delta}(x) \leftarrow \widehat{\bmu}_{m,\delta}(x)$ where $m := m(\kappa,\tau) =  O({\frac {(\rho n)^2 \log(n/\tau)}{\kappa^2}})$, with probability at least $1-\tau$ the value
\begin{equation} \label{eq:qhat}
\widehat{q}_{x,\delta}(z) := n - \|\widehat{\mu}_{m,\delta}(x)\|_1 + \sum_{j=0}^{n-1} \widehat{\mu}_{m,\delta}(x)_j z^j
\end{equation}
satisfies
$
|\widehat{q}_{x,\delta}(z) - q_{x,\delta}(z)| \leq \kappa.
$
\end{lemma}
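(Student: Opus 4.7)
The plan is to package $\widehat{q}_{x,\delta}(z) - q_{x,\delta}(z)$ as the empirical mean (minus the true mean) of a single scalar random variable per trace, and then apply Bernstein's inequality to the i.i.d.\ sum. Let $\by'^{(1)}, \dots, \by'^{(m)}$ denote the $m$ independent traces, each padded to length $n$, that underlie $\widehat{\mu}_{m,\delta}(x)$, and set
\[
Y^{(i)} \coloneqq -\|\by'^{(i)}\|_1 + \sum_{j=0}^{n-1} \by'^{(i)}_j\, z^j.
\]
Since $\widehat{\mu}_{m,\delta}(x) = \tfrac{1}{m}\sum_i \by'^{(i)}$ and $\by'^{(i)}_j \in \{0,1\}$, we have $\widehat{q}_{x,\delta}(z) = n + \tfrac{1}{m}\sum_{i} Y^{(i)}$ and $q_{x,\delta}(z) = n + \E[Y^{(1)}]$, so $\widehat{q}_{x,\delta}(z) - q_{x,\delta}(z) = \tfrac{1}{m}\sum_{i}(Y^{(i)} - \E[Y^{(i)}])$.

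Next I would establish two bounds on $Y^{(i)}$. First, $|Y^{(i)}| \leq 2\|\by'^{(i)}\|_1 \leq 2n$ almost surely (using $|z| \leq 1$). Second, since the length of the unpadded trace is distributed as $\Bin(n,\rho)$ and only that many coordinates of the padded trace can be nonzero, $\|\by'^{(i)}\|_1$ is stochastically dominated by a $\Bin(n,\rho)$ random variable; hence
\[
\mathrm{Var}(Y^{(i)}) \leq \E[|Y^{(i)}|^2] \leq 4\,\E[\|\by'^{(i)}\|_1^2] = O((\rho n)^2),
\]
in the only regime of interest $\rho n \geq 1$.

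Finally I would apply Bernstein's inequality (separately to the real and imaginary parts of the i.i.d.\ sum) to conclude that with probability at least $1-\tau$,
\[
|\widehat{q}_{x,\delta}(z) - q_{x,\delta}(z)| = O\!\left(\sqrt{\mathrm{Var}(Y^{(1)})\log(1/\tau)/m} + n\log(1/\tau)/m\right) = O\!\left(\rho n\sqrt{\log(1/\tau)/m}\right),
\]
once the Bernstein variance term dominates. Setting the right-hand side at most $\kappa$ yields $m = O((\rho n)^2 \log(1/\tau)/\kappa^2)$, essentially matching the stated bound. The main subtlety is the variance estimate: a naive Hoeffding bound treating $Y^{(i)}$ as $[-2n,2n]$-valued would yield only $m = O(n^2\log(n/\tau)/\kappa^2)$, losing the $\rho^2$ factor; exploiting the concentration of trace lengths around $\rho n$ rather than $n$ is what gives the sharper dependence.
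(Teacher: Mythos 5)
Your approach is genuinely different from the paper's, and arguably sharper. The paper proves the lemma by a per-coordinate Hoeffding bound: it shows $|\widehat{\mu}_{m,\delta}(x)_j - \mu_\delta(x)_j| \leq \kappa/(2\rho n)$ for each $j$ with probability $1-\tau/n$, union bounds over the $n$ coordinates, and then sums the coordinate errors across the two terms of $\widehat q - q$. You instead collapse the entire quantity $\widehat q_{x,\delta}(z) - q_{x,\delta}(z)$ into an empirical average of one scalar random variable $Y^{(i)}$ per trace, and control it with a single application of Bernstein's inequality, exploiting that $\mathrm{Var}(Y^{(i)}) = O((\rho n)^2)$ because $\|\by'^{(i)}\|_1$ is stochastically dominated by $\Bin(n,\rho)$. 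Notably, your route actually gives a cleaner account of the $(\rho n)^2$ factor in $m$: the individual coordinates $\mu_\delta(x)_j$ can be as large as $\Theta(1)$ (e.g.\ for small $j$ when $x$ has many leading ones), so per-coordinate multiplicative Chernoff does not yield a uniform $\kappa/(2\rho n)$ deviation, whereas the \emph{aggregate} variance bound does see the factor of $\rho^2$. (Indeed, the paper's proof as written appears to accumulate a total error of $\kappa/\rho$ rather than $\kappa$.) The one caveat, which you correctly flag, is that your Bernstein argument only matches the claimed $m$ when the variance term dominates the range term, i.e.\ when $\kappa \lesssim \rho^2 n$; in the actual application $\kappa$ is exponentially small so this is harmless, but worth noting since the lemma is stated for general $\kappa$.
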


\begin{proof}
Similar to the proof of Lemma~\ref{lem:numones}, for each $j \in \{0,\dots,n-1\}$ we have
\[
|\widehat{\mu}_{m,\delta}(x)_j z^j - \mu_\delta(x)_j z^j| = |z^j| \cdot
|\widehat{\mu}_{m,\delta}(x)_j - \mu_\delta(x)_j| \leq
|\widehat{\mu}_{m,\delta}(x)_j - \mu_\delta(x)_j| \leq
{\frac \kappa {2\rho n}}
\]
with probability at least $1-{\frac \tau n}$, where the last inequality is a simple Chernoff bound.  Hence by a union bound we have
\[
\left|
\|\widehat{\mu}_{m,\delta}(x)\|_1  - \|\mu_\delta(x)_j\|_1
\right| \leq {\frac \kappa {2\rho}}
\quad \text{and} \quad
\left|
\sum_{j=0}^{n-1} \widehat{\mu}_{m,\delta}(x)_j z^j -
\sum_{j=0}^{n-1} \mu_{\delta}(x)_j z^j
\right|
\leq {\frac \kappa {2\rho}}
\]
with probability at least $1-\tau$, and another union bound gives the lemma.
\end{proof}

Fix
\begin{equation} \label{eq:s}
s=100n^2/\vareps^\chill_\delta(n).
\end{equation}
Let $z_1,\dots,z_s$ be $s$ points equally spaced around the unit circle in the complex plane, and let $z'_i = \delta + \rho z_i$, $i=1,\dots,s$ be the corresponding set of points equally spaced around the circle $\partial D_{\rho}(\delta).$

\begin{lemma} \label{lem:empirical-min-close}
Let $m = m(\rho\vareps^\chill_\delta(n)/8,1/(n^2 s))$ be as defined in Lemma \ref{lem:estimate-poly}.
For any source string $x \in \zo^n$ and any $0 < \delta < 1$, with probability at least $1-1/n^2$ (over the draw of $\widehat{\mu}_{m,\delta}(x) \leftarrow \widehat{\bmu}_{m,\delta}(x)$ that determines $\widehat{q}_{x,\delta}(\cdot)$ via (\ref{eq:qhat})) we have that every $i \in [s]$ satisfies
\begin{equation} \label{eq:qhat-close-to-q}
|\widehat{q}_{x,\delta}(z_i) - q_{x,\delta}(z_i)| \leq \rho\vareps^{\chill}_\delta(n)/8.
\end{equation}
\ignore{
}
\end{lemma}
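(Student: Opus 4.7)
The plan is to prove this by a direct union bound over the $s$ points $z_1,\dots,z_s$, invoking Lemma~\ref{lem:estimate-poly} at each point with the appropriate accuracy and failure probability parameters. The key observation is that the lemma statement already telegraphs this: the parameters passed to $m(\cdot,\cdot)$ in the hypothesis are exactly $\kappa = \rho\vareps^\chill_\delta(n)/8$ and $\tau = 1/(n^2 s)$.

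First I would fix an arbitrary $i \in [s]$ and note that $|z_i| = 1 \leq 1$, so Lemma~\ref{lem:estimate-poly} applies directly. Since the same empirical mean trace vector $\widehat{\mu}_{m,\delta}(x)$ is used to define $\widehat{q}_{x,\delta}(z)$ at every point $z$, the \emph{same} draw of $\widehat{\mu}_{m,\delta}(x) \leftarrow \widehat{\bmu}_{m,\delta}(x)$ simultaneously induces the value $\widehat{q}_{x,\delta}(z_i)$ for every $i$. Applying Lemma~\ref{lem:estimate-poly} with $\kappa = \rho\vareps^\chill_\delta(n)/8$ and $\tau = 1/(n^2 s)$, we conclude that for each fixed $i \in [s]$ the event
\[
    |\widehat{q}_{x,\delta}(z_i) - q_{x,\delta}(z_i)| > \rho\vareps^\chill_\delta(n)/8
\]
has probability at most $1/(n^2 s)$ over the draw of $\widehat{\mu}_{m,\delta}(x)$.

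Next I would apply a union bound over $i=1,\dots,s$: the probability that \emph{some} $i \in [s]$ fails \eqref{eq:qhat-close-to-q} is at most $s \cdot \frac{1}{n^2 s} = \frac{1}{n^2}$. Hence with probability at least $1-1/n^2$ the bound \eqref{eq:qhat-close-to-q} holds simultaneously for all $i \in [s]$, as required.

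There is no real technical obstacle here; the only thing to be careful about is that Lemma~\ref{lem:estimate-poly} gives a high-probability bound for a \emph{single} fixed complex point $z$ of modulus at most $1$, and we need the bound to hold simultaneously at $s$ points. This is exactly why the failure parameter $\tau$ in the invocation of $m(\kappa,\tau)$ was chosen to be $1/(n^2 s)$ rather than $1/n^2$ — the extra factor of $s$ is absorbed by the union bound. The $\log(n/\tau) = \log(n^3 s)$ factor inside Lemma~\ref{lem:estimate-poly}'s sample complexity $m$ only contributes a polylogarithmic overhead, consistent with the overall $\poly(n,1/\vareps^\chill_\delta(n),1/\rho)$ bound being driven toward in Theorem~\ref{thm:alg}.
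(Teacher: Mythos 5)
Your proof is correct and is essentially identical to the paper's own: both apply Lemma~\ref{lem:estimate-poly} at each of the $s$ points $z_i$ with accuracy $\kappa = \rho\vareps^\chill_\delta(n)/8$ and per-point failure probability $\tau = 1/(n^2 s)$, then union bound over $i \in [s]$ to get total failure probability at most $1/n^2$. No issues.
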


\begin{proof}
We say that a draw of $\widehat{\mu}_{m,\delta}(x) \leftarrow \widehat{\bmu}_{m,\delta}(x)$ that satisfies Lemma \ref{lem:estimate-poly} is \emph{good}, and we consider such a good draw.  Applying Lemma \ref{lem:estimate-poly} to each $z_i, i \in [s]$, via a union bound we get that with probability at least $1-1/n^2$ every $i \in [s]$ satisfies
(\ref{eq:qhat-close-to-q}) as required by the lemma.
\ignore{
}
\end{proof}

\medskip
\noindent \emph{Proof of Theorem~\ref{thm:alg}.}
The algorithm proceeds in $n$ stages $k=0,\dots,n-1$, where in stage $k$ it attempts to determine the bit $x_k$ of the source string $x$.  Suppose inductively that the algorithm has succeeded in stages $0,\dots,k-1$; we show below how the algorithm can correctly determine the bit $x_k$ with correctness probability $1-1/n^2$ in time $\poly(n,1/\vareps^\chill_\delta(n),1/\rho)$, which clearly establishes the theorem.

By assumption the algorithm has the correct values $x_0,\dots,x_{k-1} \in \zo$.  The idea of stage $k$ is that in it the algorithm guesses that $x_k=0$ and checks whether this guess is correct.  This is done as follows:

\begin{enumerate}

\item Draw an outcome $\widehat{\mu}_{m,\delta}(x) \in [0,1]^n$  from $\widehat{\bmu}_{m,\delta}(x)$ where $m=m(\rho\vareps^\chill_\delta(n)/8,1/(n^2 s))$ as in Lemma~\ref{lem:empirical-min-close}, and as in Lemma~\ref{lem:estimate-poly} let $\widehat{q}_{x,\delta}(z)$ denote the polynomial
$\widehat{q}_{x,\delta}(z) = n - \|\widehat{\mu}_{m,\delta}(x)\|_1 + \sum_{j=0}^{n-1} \widehat{\mu}_{m,\delta}(x)_j z^j.$

\item For each $i=1,\dots,s$ the algorithm computes\onote{Not to be a colossal dick, but I suppose we cannot compute it exactly.  Or maybe we sort of can, if we store complex numbers as $(r,\theta)$ pairs.} and records the value $\widehat{q}_{x,\delta}(z_i).$
\ignore{
}
\item Define the polynomial
\begin{equation}
\label{eq:p-in-alg}
p(z)= \rho \cdot \left( n - \|x\|_1 + \sum_{j=0}^{k-1} x_j (\delta + \rho z)^j + \sum_{i=k+1}^{n-1} \alpha_i (\delta + \rho z)^i\right)
\end{equation}
(the degree-$k$ coefficient being 0 corresponds to guessing that $x_k=0$), where $\alpha_{k+1},\dots,\alpha_{n-1}$ are indeterminates.

\item For $i \in [s],$ solve the following convex minimization problem over variables $\alpha_{k+1},\dots,\alpha_{n-1}$:
\begin{align}
&\text{minimize~~~~~~} |\widehat{q}_{x,\delta}(z_i) - p(z_i)|^2 \label{eq:min}\\
&\text{subject to~~~~~} 0 \leq \alpha_i \leq 1 \text{~for~}i=k+1,\dots,n-1, \label{eq:constraints}
\end{align}
to obtain a value $\gamma_i$ that is within $\rho\vareps^\chill_\delta(n)/8$ of the minimum value of
$|\widehat{q}_{x,\delta}(z_i) - p(z_i)|$.  (Note that $|\widehat{q}_{x,\delta}(z_i) - p(z_i)|^2$ can be expressed as $L_1(\alpha_{k+1},\dots,\alpha_{n-1})^2 +
L_2(\alpha_{k+1},\dots,\alpha_{n-1})^2$ for real linear forms $L_1,L_2$, which is indeed a convex function of the $\alpha_i$-variables.)

\item If any $\gamma_i$ is at least $\rho \vareps^\chill_\delta(n)/2$ then output $\widehat{x}_k=1$ (meaning that the algorithm has determined that the guess $x_k=0$ was incorrect); if every $\gamma_i$ is less than $\rho \vareps^\chill_\delta(n)/2$ then output $\widehat{x}_k=0$ (meaning that the algorithm has determined that $x_k=0$ was correct).

\end{enumerate}

The sample complexity of the above algorithm is $m=m(\rho\vareps^\chill_\delta(n)/8,1/(n^2 s))$ and its running time is $\poly(m,s,n,1/\vareps^\chill_\delta(n)) = \poly(n,1/\vareps^\chill_\delta(n))$, both as claimed by Theorem~\ref{thm:alg}.\rnote{Convex minimization is one of my anti-specialities so somebody please check that the things said regarding this are okay.}

It remains to establish correctness.  To analyze correctness, we henceforth assume that the draw of $\widehat{\mu}_{m,\delta}(x) \in [0,1]^n$  from $\widehat{\bmu}_{m,\delta}(x)$ is indeed good as described in the proof of Lemma \ref{lem:empirical-min-close}.

 Let us suppose first that $x_k=0.$   Taking $\alpha_i=x_i \in \{0,1\}$ for $i=k+1,\dots,n-1$, by Lemma~\ref{lem:poly} the polynomial $p(z)$ defined in (\ref{eq:p-in-alg}) is exactly equal to $q_{x,\delta}(z)$, so for each $i \in [s]$ the minimum value of (\ref{eq:min}) is at most $(\rho \vareps^\chill_\delta(n)/8)^2$ by (\ref{eq:qhat-close-to-q}) and hence the obtained value $\gamma_i$ is at most $\rho \vareps^\chill_\delta(n)/4.$  Thus in this case Step~5 correctly outputs $\widehat{x}_k=0.$

The remaining case is that $x_k=1.$  By Lemma~\ref{lem:poly}, any polynomial $p$ as defined in (\ref{eq:p-in-alg}) that is feasible for (\ref{eq:constraints}) has the property that
$
p\left({\frac {z-\delta} \rho}\right) - q\left({\frac {z-\delta} \rho}\right)
$
is $\rho$ times a chill polynomial, so
\[
\max_{z' \in D_\rho(\delta)}
\left|
p\left({\frac {z'-\delta} \rho}\right) - q\left({\frac {z'-\delta} \rho}\right)
\right| \geq \rho \vareps^\chill_\delta(n).
\]
Fix $z'^\ast \in D_\rho(\delta)$ to be such that $\left|
p\left({\frac {z'^\ast-\delta} \rho}\right) - q\left({\frac {z'^\ast-\delta} \rho}\right)
\right| \geq \rho \vareps^\chill_\delta(n),$ and let $i^\ast \in [s]$ be such that $z_{i^\ast}$ is the closest element of $\{z_i\}_{i \in [s]}$ to ${\frac {z'^\ast - \delta} \rho}$ (note that ${\frac {z'^\ast - \delta} \rho}$ lies on the unit circle).  Since $p\left({\frac {z-\delta} \rho}\right) - q\left({\frac {z-\delta} \rho}\right)$ is $\rho$ times a chill polynomial and any chill polynomial's derivative has magnitude at most $n^2$ on every $z' \in D_\rho(\delta)$, we have
\[
\left|
\left(
p\left({\frac {z'^\ast - \delta} \rho}\right) - q_{x,\delta}\left({\frac {z'^\ast - \delta} \rho}\right)
\right) -
\left(
p\left(z_{i^\ast}\right) - q_{x,\delta}\left(z_{i^\ast}\right)
\right)
\right| \leq \rho n^2 \cdot {\frac \pi s} \leq {\frac {\rho \vareps^\chill_\delta(n)} 8},
\]
and hence
\[
\left|
p\left(z_{i^\ast}\right) - q_{x,\delta}\left(z_{i^\ast}\right)
\right| \geq
 {\frac {7\rho \vareps^\chill_\delta(n)} 8}.
\]
By (\ref{eq:qhat-close-to-q}) and the triangle inequality we get that
\[
\left|
p\left(z_{i^\ast}\right) - \widehat{q}_{x,\delta}\left(z_{i^\ast}\right)
\right| \geq
 {\frac {3\rho \vareps^\chill_\delta(n)} 4},
\]
and so the value of $\gamma_{i^\ast}$ must be at least $5\rho \vareps^\chill_\delta(n)/ 8,$ so in Step~5 the algorithm correctly outputs $\widehat{x}_k=1.$
This concludes the proof of Theorem~\ref{thm:alg}.
\qed

}

\ignore{
\newpage

\violet{
\section{Finishing the proof of Theorem~\ref{thm:main-positive}: Lower bound on $\vareps^\chill_\delta(n)$} \label{sec:B}

Notation:  we write $C_R$ to denote the radius-$R$ circle centered at 0 in the complex plane and $C_R(v)$ to denote the radius-$R$ circle centered at $v \in \mathbb{C}$.

For the lower bound on $\vareps^\chill_\delta(n)$ we consider a broader class of polynomials than Littlewood polynomials (this is the class that Ryan adroitly identified in his section below).  These are polynomials  $P(z)=\sum_{j=0}^{n} P_j z^j$ where (i) the smallest $j$ such that $P_j \neq 0$ has
$|P_j|=1$ and (ii) every $|P_j| \leq 1$.  Let's call these ``chill polynomials.'' (I think for the convex programming based algorithm we actually need that every chill polynomial takes a not-too-small value, so that's what we better show here.)
Throughout this section fix $P(z)=\sum_{j=0}^{n} P_j z^j$ to be any chill polynomial.  We write $P(z)$ as $P(z)=z^t \cdot Q(z)$ for some $t \in [0,n]$, where $Q(z)=\sum_{j=0}^{n-t} Q_j z^j$ has $|Q_j| \leq 1$, $Q(0) \in \{\pm 1\}.$

\bigskip

In this section we want to prove that somewhere within the disk $D_{\rho}(1-\rho)$ the value of $|P|$ is ``not too small.''  The high level idea is to do this as follows:  We will define an arc $A_B$ of another (bigger) circle $B$ (to be defined later) such that $A_B$ lies in $D_\rho(1-\rho).$ We will show that $\|Q\|_{A_B}$ (the maximum of $|Q|$ over $A_B$) is ``not too small.'' The value of $|z^t| \geq |z^n|$ also cannot be ``too small'' anywhere on $A_B$ because every point $z$ on $A_B$ has $|z|$ ``very close'' to 1.  Hence at whatever point $z_\star \in A_B$ has $|Q(z_\star)|$  ``not too small'', it will similarly hold that $|P(z_\star)|$ is ``not too small.''

\bigskip

A good way to think about what follows is that $\rho$ and $n$ are fixed and given to us, and we will optimize a parameter $\tau$ (it's an angle) below given these fixed values of $\rho$ and $n$. (We'll see that as long as $\rho = \omega(1/\sqrt{n})$ we will have $\tau=o_n(1)$; we will only consider such $\rho$'s, so we should think of $\tau$ as being $o_n(1)$.)

We are going to be working with three circles in the complex plane.   The first is $U=C_1$ (the unit circle).  The second is $B=C_{1-\gamma}$ (the letter $B$ is for ``big''; $\gamma =\gamma(\tau) \ll \rho$ will be a small value, it is completely determined by $\tau$ and vice versa).  The third is $S=C_{\rho}(1-\rho)$ (the letter $S$ is for ``small'' since we think of $\rho$ as a small value).  $U$ and $S$ share a monogamous kiss at 1. $B$ and $U$ share the center 0 and don't touch anywhere.  Here's a risibly bad picture of the situation:

\begin{center}
\includegraphics[width=150mm, height=100mm]{threecircles}
\end{center}

\bigskip  The angle $\tau$ is defined as follows:  $B$ and $S$ intersect at two points which we write as
\[
z_\tau := (1-\rho) + \rho e^{i \tau}
\quad \text{and} \quad
z_{-\tau} := (1-\rho) + \rho e^{-i \tau},
\]
so these two points $z_\tau,z_{-\tau}$ define an angle of $2\tau$ from the center of $S$ (the point $1-\rho$).  Since $\tau=o_n(1)$ is very small, the region (interior of $S$) $\setminus$ (interior of $B$) is a little crescent-moon-shaped region; the crescent has $z_\tau$ and $z_{-\tau}$ as the two pointy endpoints.  (Observe that indeed if I tell you the value of $\tau$ then that  completely determines the value of $\gamma$ and vice versa as claimed earlier.)

Let's write $A_{S}$ for the arc of $S$ (the longer/rounder part of the crescent) between these two points $z_\tau,z_{-\tau}$, and let's write $A_{B}$ for the arc of $B$ between these two points (the shorter/straighter part of the crescent; this is the part we'll really care about).  Since $\tau$ is small, length$(A_B)$ and length$(A_S)$ are within a constant factor of each other (actually all we need for them to be within a constant factor of each other is that $\tau$ is bounded below $\pi$ by a constant;  in our setting $\tau$ is $o_n(1)$ and the two lengths are within a $1+o_n(1)$ factor of each other) hence length$(A_B)=\Theta(\tau \rho)$ since length$(A_S)$ is exactly $2\tau \rho.$  Since the radius of $B$ is $1-\gamma=\Theta(1)$, the arc $A_B$ is a $\Theta(\tau \rho)$ fraction of the circle $B$.

\bigskip

Starting with the easier part first, we give a lower bound on $|z|^n$ for any point $z \in A_B$:

\begin{claim} \label{claim:a}
Let $z \in A_B$, so $z = (1-\rho) + \rho' e^{i \theta}$ for some $0 \leq \rho' \leq \rho$ and some $|\theta| \leq \tau.$  Then $|z|^2 \stackrel{>}{\sim} e^{-\rho \tau^2},$ and hence $|z|^n \geq e^{-\rho \tau^2 n/2}.$
\end{claim}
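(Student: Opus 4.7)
The plan hinges on a single observation: since $z\in A_B$ and $A_B$ is by definition an arc of the circle $B=C_{1-\gamma}$, we automatically have $|z|=1-\gamma$.  This is a stronger constraint than the parameterization $z=(1-\rho)+\rho' e^{i\theta}$ with $\rho'\le\rho$ supplied in the claim statement (which only says $z$ lies in the closed disk $D_\rho(1-\rho)$); I will use $|z|=1-\gamma$ directly.  The whole problem therefore collapses to obtaining a good upper bound on $\gamma$ in terms of $\rho$ and $\tau$.

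To bound $\gamma$ I would exploit the fact that the arc endpoint $z_\tau=(1-\rho)+\rho e^{i\tau}$ lies on both $S$ and $B$, so $|z_\tau|=1-\gamma$.  Expanding the modulus directly gives
\[
    |z_\tau|^2 \;=\; (1-\rho)^2 + 2(1-\rho)\rho\cos\tau + \rho^2 \;=\; 1 - 2\rho(1-\rho)(1-\cos\tau).
\]
Equating this with $(1-\gamma)^2$ and applying the elementary bound $1-\cos\tau\le\tau^2/2$ yields $\gamma(2-\gamma)\le \rho(1-\rho)\tau^2\le \rho\tau^2$.  In the regime of interest $\gamma$ is small (indeed $\gamma\ll\rho$), so this simplifies to $\gamma\lesssim \rho\tau^2/2$.

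The two assertions of the claim now follow by routine algebra.  From $|z|=1-\gamma$ and $\gamma\lesssim \rho\tau^2/2$, we obtain $|z|^2=(1-\gamma)^2\ge 1-2\gamma \ge 1-\rho\tau^2 \ge e^{-\rho\tau^2(1+o(1))}$, using the elementary estimate $1-x\ge e^{-x(1+o(1))}$ valid for small $x>0$.  This establishes $|z|^2\gtrsim e^{-\rho\tau^2}$, and raising to the $n/2$ power gives $|z|^n\ge e^{-\rho\tau^2 n/2}$ (up to the same lower-order corrections).  The whole argument is trigonometric bookkeeping; there is no substantive obstacle, the only mildly subtle point being to notice that the $(\rho',\theta)$ parameterization is weaker than the actual constraint $z\in B$, which is what does the real work.
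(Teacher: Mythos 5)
Your proposal is correct and takes essentially the same approach as the paper: both arguments hinge on the single observation that because $z$ lies on the origin-centered circle $B$, its modulus equals $|z_\tau|$, which one then estimates by expanding $|(1-\rho)+\rho e^{i\tau}|$. The paper's (draft) proof merely cites an unwritten calculation for the estimate $|z_\tau|\approx e^{-\rho\tau^2/2}$; you actually carry out that calculation, which is the only difference.
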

\begin{proof}
Since $z$ and $z_\tau$ both lie on $B$, we have $|z|=|z_\tau| = |(1-\rho) + \rho e^{i \tau}| \approx e^{-\rho \tau^2 / 2}$ (by Anindya's handwritten calculations from a little while ago).
\end{proof}

Now we turn to the first part of the argument, showing that $|Q|$ can't be too small everywhere on $A$.  This is the part that uses the word ``harmonic.''

We recall some (hopefully correct) basic definitions and facts.  A function $f: \mathbb{C} \to \R$ is \emph{harmonic} if $g: \R^2 \to \R$ defined by $g(x,y) = f(x+iy)$ is harmonic.   See Wikipedia for the following:

\begin{fact}
Let $Q$ be a holomorphic function on $\mathbb{C}.$  Then $f: \mathbb{C} \to \R$ defined by $f(z)=\ln|Q(z)|$ is harmonic.\rnote{As Anindya astutely pointed out, this is only true if $Q$ is never zero, so we can't use this fact.  We need to use the
Mahler measure stuff as outlined in emails leading up to (and summarized in ) Ryan's 8/25 email.}

\end{fact}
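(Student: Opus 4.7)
The plan is to prove this by showing that $f(z)=\ln|Q(z)|$ is locally the real part of a holomorphic function, and then invoking the classical fact that the real part of a holomorphic function is harmonic. The argument must necessarily be restricted to points where $Q$ does not vanish, since $\ln|Q|$ is not even finite (let alone smooth) at zeros of~$Q$; I will flag this as the key caveat at the end.

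First, I would fix any $z_0 \in \mathbb{C}$ with $Q(z_0) \neq 0$ and construct a local holomorphic branch of $\log Q$ on a simply connected neighborhood $U$ of~$z_0$. On such a neighborhood $Q$ stays away from zero by continuity, so $Q'/Q$ is holomorphic on~$U$. Integrating $Q'/Q$ along paths in~$U$ (path-independent by simple-connectedness and Cauchy's theorem) yields a holomorphic antiderivative~$h$ on~$U$. Adjusting the constant of integration produces a holomorphic $g$ on~$U$ with $e^{g(z)} = Q(z)$ throughout~$U$.

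Second, I would observe that on $U$ we have $\ln|Q(z)| = \ln\bigl|e^{g(z)}\bigr| = \mathrm{Re}(g(z))$, so it suffices to verify the textbook fact that the real part $u$ of any holomorphic $g = u + iv$ is harmonic. This follows directly from the Cauchy--Riemann equations $u_x = v_y$, $u_y = -v_x$: differentiating gives $u_{xx} = v_{yx}$ and $u_{yy} = -v_{xy}$, so $u_{xx} + u_{yy} = 0$ by equality of mixed partials (both $u$ and $v$ are smooth, since $g$ is holomorphic and hence $C^\infty$). Thus $f$ satisfies Laplace's equation on~$U$, and since $z_0$ was an arbitrary nonvanishing point, $f$ is harmonic on $\{z : Q(z) \neq 0\}$.

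The main obstacle --- which the authors' marginal note already flags --- is that the unqualified statement is false: at any zero $z_0$ of order $m$ we have $\ln|Q(z)| \sim m\ln|z-z_0| \to -\infty$, and in the distributional sense $\Delta \ln|Q|$ picks up a Dirac mass of weight $2\pi m$ at $z_0$ (this is the content of the Poisson--Jensen formula). Consequently the proof sketched above only delivers harmonicity on the open set where $Q$ is nonvanishing, and any downstream application either has to establish nonvanishing of $Q$ on the region of interest, or else switch to a substitute tool such as the Mahler-measure identity used elsewhere in the paper (as in Fact~\ref{fact:mahler}), which handles zeros gracefully and is what the authors indicate they will ultimately use.
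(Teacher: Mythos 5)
Your analysis is correct and, notably, you have independently reached the same conclusion the authors themselves record in the marginal note attached to this Fact: the statement as written is false, because $\ln|Q|$ blows up to $-\infty$ at zeros of~$Q$ and in the distributional sense $\Delta \ln|Q|$ acquires point masses there. Your local argument --- construct a holomorphic branch $g$ of $\log Q$ on a simply connected neighborhood of any nonvanishing point via path integration of $Q'/Q$, identify $\ln|Q| = \mathrm{Re}(g)$, and invoke Cauchy--Riemann to get $\Delta\,\mathrm{Re}(g) = 0$ --- is the standard and correct proof that $\ln|Q|$ is harmonic on the open set $\{Q \neq 0\}$, which is all that can be salvaged.

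For comparison: the paper gives no proof of this Fact at all (it appears inside a discarded draft block with only a ``see Wikipedia'' attribution), and the authors explicitly abandon it precisely because of the zeros issue you raise. The replacement they adopt in the final argument is Fact~\ref{fact:mahler}, the Mahler-measure inequality $\GM_{w\in\calO}(|Q(w)|) \ge |Q(c)|$, which sidesteps the problem by working directly with the geometric mean via the factorization of $Q$ over its roots rather than with pointwise harmonicity of $\ln|Q|$. Your closing paragraph correctly identifies this as the right substitute tool; so your proposal is both a correct proof of the corrected statement and a correct diagnosis of why the uncorrected one cannot be used downstream.
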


Together with the Wikipedia entry for ``Harmonic function / the mean value property''
this gives us the following:

\begin{fact}
Let $Q(z)$ be a Littlewood polynomial of the form $Q(z) = \pm1 + \sum_{j=1}^n Q_j z^j$ where $Q_j \in \{-1,0,1\}$ ,and let
$f(z)=\ln|Q(z)|.$  Then for the circle $B=C_{1-\gamma}$  of radius $1-\gamma$ centered at 0 in the complex plane $\mathbb{C}$,
we have that the average of $f$ around $B$ equals 0.
\end{fact}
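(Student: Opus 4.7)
The plan is to invoke Jensen's formula from complex analysis, which is the standard tool for computing averages of $\ln|Q|$ around a circle when $Q$ may have zeros. As the margin note warns, one cannot simply apply the mean-value property for harmonic functions, because $\ln|Q|$ has logarithmic singularities precisely at the zeros of $Q$; Jensen's formula is exactly the right correction, and it is essentially the Mahler-measure viewpoint alluded to in the note.

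The argument I would give proceeds in two steps. First, I would state Jensen's formula in the form
$$\frac{1}{2\pi}\int_0^{2\pi}\ln\bigl|Q\bigl((1-\gamma)e^{i\theta}\bigr)\bigr|\,d\theta \;=\; \ln|Q(0)| \;+\; \sum_{\alpha:\,|\alpha|<1-\gamma}\ln\frac{1-\gamma}{|\alpha|},$$
where the sum is over the zeros $\alpha$ of $Q$ strictly inside $B$, counted with multiplicity. A self-contained proof reduces, via the factorization of $Q$, to the special case $Q(z)=z-\alpha$, where the integral can be computed directly by splitting into the cases $|\alpha|<1-\gamma$ and $|\alpha|>1-\gamma$; multiplicativity of $|Q|$ under products then handles the general polynomial. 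Second, I would plug in the Littlewood hypothesis $Q(0)=\pm 1$, which gives $\ln|Q(0)|=0$, so that the average of $f=\ln|Q|$ around $B$ reduces to the nonnegative sum $\sum_{|\alpha|<1-\gamma}\ln\frac{1-\gamma}{|\alpha|}$.

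The main obstacle I foresee is that the statement as worded asserts \emph{equality} to zero, whereas Jensen's formula delivers equality only when $Q$ has no zeros strictly inside $B$, and a Littlewood polynomial can perfectly well have zeros in the open unit disk. I suspect the intended claim in this draft is actually the inequality $\mathrm{avg}_B f \geq 0$, i.e.\ $\GM_{z \in B}|Q(z)| \geq |Q(0)| = 1$; this is exactly how the finalized version states it (compare Fact~\ref{fact:mahler}), and it is all that the subsequent ``$|Q|$ cannot be too small on the arc $A_B$'' argument requires. I would therefore prove this inequality via Jensen as above, and then pair it with the elementary triangle-inequality upper bound $|Q(z)|\leq 1+|z|+|z|^2+\cdots=1/(1-|z|)$ for $|z|<1$, applied on the complementary arc $B\setminus A_B$, to extract a lower bound on the geometric mean (and hence the maximum) of $|Q|$ along $A_B$ itself.
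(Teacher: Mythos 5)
Your diagnosis is exactly right. As written, the claimed equality is false whenever $Q$ has a zero strictly inside the circle $B$, and the one-line justification offered in the draft (the mean-value property of harmonic functions applied to $f=\ln|Q|$) breaks at precisely those zeros, where $\ln|Q|$ is singular rather than harmonic; the authors themselves flagged this in the margin note immediately preceding the Fact. Your repair --- Jensen's formula, i.e.\ $\mathrm{avg}_{z\in B}\ln|Q(z)| = \ln|Q(0)| + \sum_{|\alpha|<1-\gamma}\ln\tfrac{1-\gamma}{|\alpha|} \geq 0$, so that the correct statement is the inequality $\GM_{z\in B}|Q(z)|\geq |Q(0)|=1$ --- is exactly what the finalized paper does in Fact~\ref{fact:mahler}. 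That proof likewise reduces by factoring $Q$ into linear terms, uses multiplicativity of the geometric mean around a circle, and rests on the elementary computation $\GM_{w\in\partial D_1(0)}|w-\alpha|=\max\{|\alpha|,1\}$, which is just the single-factor case of Jensen. So your proposal is correct and takes essentially the same route as the paper's corrected version; there is no gap beyond the one you already identified in the statement itself.
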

\noindent
(This is because the mean value property for harmonic functions tells us that the average of $f$ around $B$ equals
$f(0)=\ln|(Q(0))| = \ln1 = 0.$)

\bigskip

Recalling from above that $A_B$ is a $\Theta(\tau \rho)$ fraction of the circle $B$, we have

\begin{eqnarray}
0 &= & \Avg_B f(z) = \Avg_B  \ln |Q(z)|\nonumber \\
&=& (1-\Theta(\tau \rho)) \Avg_{B \setminus A_B} \ln|Q(z)| + \Theta(\tau \rho) \Avg_{A_B} \ln|Q(z)|. \label{eq:a}
\end{eqnarray}
Now we observe that for any $z \in B \setminus A_B$, we have $|z| \stackrel{<}{\approx} e^{-\rho \tau^2 / 2} \approx
1 - \rho \tau^2 / 2$, so
\[
|Q(z)| = |\sum_{j=0}^n Q_j z^j| \leq \sum_{j=0}^n |Q_j|\cdot |z^j| \leq \sum_{j=0}^n |z|^j \leq {\frac 1 {1-|z|}}
\approx {\frac 2 {\rho \tau^2}}.
\]
This means that from (\ref{eq:a}) we get (being sloppy about constants)
\[
-C \cdot \ln\left({\frac 1{\rho \tau^2}}\right) \leq \tau \rho \cdot \Avg_{A_B} \ln|Q(z)|,
\]
so
\[
{\frac {-C \cdot \ln\left({\frac 1{\rho \tau^2}}\right)} {\tau \rho}} \leq \Avg_{A_B} \ln|Q(z)|,
\]
and hence there must be some $z_\star \in A_B$ for which
\[
\left(\rho \tau^2\right)^{C/(\tau \rho)}\leq |Q(z_\star)|.
\]
Combining this with Claim \ref{claim:a}, we get that $z_\star$ satisfies
\[
|P(z_\star)| \geq |z_\star|^t \cdot |Q(z_\star)| \geq e^{-\rho \tau^2 n/2} \cdot e^{-\ln(1/(\rho \tau^2)) \cdot C/(\tau \rho)}.
\]
Pretending the $\ln(1/(\rho \tau^2))$ is not there for simplicity for now\rnote{I hear a rumor that Ryan is going to get rid of this factor for us.} we optimize this by choosing $\tau$ so as to
equalize $\rho \tau^2 n$ and $1/(\tau \rho)$. This gives $\tau = {\frac 1 {\rho^{2/3} n^{1/3}}}$ and $|P(z_\star)| \geq e^{-n^{1/3}/\rho^{2/3}}.$  Observe that if $\rho = \omega(1/\sqrt{n})$ then $\tau=o_n(1)$ and our claims from earlier about $\tau$ indeed hold.

\subsection{Addition by Ryan just before leaving SpB}

\newcommand{\Exp}{\mathop{\mathrm{Exp}}}
Yeah, I think I can get rid of the log factor.  See if you dig this.  I'll assume WLOG that $\rho \leq 1/4$, as this only hurts. (Do I need this?  Not sure.)  Let me also invent a new notation: $\Exp(\cdot)$ means $\exp(O(\cdot))$.\\

As before, let $P(z)$ be any chill polynomial, i.e. any polynomial expressible as $z^dQ(z)$, where $Q$'s constant coefficient has magnitude~$1$ and all its other coefficients have magnitude at most~$1$.  Note that we don't assume any bound on the degree of~$Q$.  (Does this let us solve any slight generalization of the trace-reconstruction problem, by the way?)  As always, we get that $Q(|z|) \leq 1/\vareps$ whenever $|z| \leq 1-\vareps$.  We also easily get that $|Q(1/3)| \geq 1/2$.  Let $f(z) = \log_2|Q(z)|$, so $f$ is harmonic, $f(1/3) \geq -1$, and $f(z) \leq \log(1/\vareps)$ whenever $|z| \leq 1-\vareps$.

\begin{lemma} \label{lem}
    Let $0 \leq \tau \leq \pi/2$, let $s = 1-\rho + \rho e^{i \tau}$,  let $G$ be the circle with center $1/3$ passing through $s, \overline{s}$, and let $L$ be the minor arc of~$G$ between $s, \overline{s}$.  Then the average value of $|Q|$ along~$L$ is at least $\Exp(-1/(\rho \tau))$.
\end{lemma}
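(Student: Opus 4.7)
The plan is to recognize this lemma as a direct instance of Theorem~\ref{thm:BE-like}. The circle $G$ has center $1/3$ and radius $R := |s - 1/3|$, and the minor arc $L$ between $s$ and $\overline{s}$ is precisely $L = \{1/3 + R e^{it} : -\beta \leq t \leq \beta\}$, where $\beta := \arg(s - 1/3)$ is the half-angle subtended at $1/3$. Since the arithmetic mean of $|Q|$ along $L$ always dominates the geometric mean, it suffices to lower-bound $\GM_{w \in L}|Q(w)|$, and then Theorem~\ref{thm:BE-like} (with its $r := R$ and $\theta := \beta$) immediately gives $\GM_{w \in L}|Q(w)| \geq 9 / 18^{\pi/\beta}$ — provided the hypothesis $R \leq 2/3$ is met.

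To verify $R \leq 2/3$, expand directly:
\[
R^2 \;=\; (2/3 - \rho + \rho\cos\tau)^2 + \rho^2 \sin^2\tau \;=\; 4/9 + 2\rho(\rho - 2/3)(1 - \cos\tau),
\]
which is at most $4/9$ for any $\rho \leq 2/3$; since the interesting regime (and the standing assumption of the surrounding section) has $\rho \leq 1/4$, we are safe. So Theorem~\ref{thm:BE-like} applies, and the proof reduces to showing $\pi/\beta = O(1/(\rho\tau))$, i.e.\ $\beta = \Omega(\rho\tau)$.

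The elementary geometric estimate $\beta = \Omega(\rho\tau)$ is the only real work, and I would regard it as the main (though routine) obstacle. From the expression above,
\[
\tan \beta \;=\; \frac{\rho \sin \tau}{2/3 - \rho + \rho\cos\tau}.
\]
For $\tau \in [0,\pi/2]$ and $\rho \leq 1/4$ the denominator lies in a fixed positive interval (roughly $[5/12,\,2/3]$), while $\sin\tau \geq (2/\pi)\tau$, so $\tan\beta \geq c_1 \rho\tau$ for a universal $c_1 > 0$. Since $\rho\tau$ is bounded by a constant here, the argument of $\arctan$ stays in a range where $\arctan(x) \geq x/2$, hence $\beta \geq c_2 \rho\tau$ for a universal $c_2 > 0$. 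Substituting this into the conclusion of Theorem~\ref{thm:BE-like} yields $\GM_{w \in L}|Q(w)| \geq 9 \cdot 18^{-\pi/(c_2 \rho\tau)} = \Exp(-1/(\rho\tau))$, completing the proof. (If one preferred a self-contained argument not invoking Theorem~\ref{thm:BE-like} as a black box, one could instead mirror its proof step-for-step, applying Fact~\ref{fact:mahler} to the circle $G$ itself and using the coefficient bound $|Q(w)| \leq 1/(1-|w|)$ on the complementary major arc.)
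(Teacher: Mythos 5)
Your proof is correct, and it takes a genuinely different (and cleaner) route than the paper's own argument for this lemma. The paper's proof (which appears in a drafty, ultimately commented-out section) works directly from the harmonicity of $f = \log|Q|$: it lower-bounds $\mathop{\mathrm{avg}}_G f \ge f(1/3) \ge -1$, upper-bounds $\mathop{\mathrm{avg}}_{L'} f$ on the complementary major arc $L'$ via the coefficient bound $|Q(w)| \le 1/(1-|w|)$ and an integral estimate, and then rearranges to get $\mathop{\mathrm{avg}}_L f \ge -O(1)/|L|$. Crucially, the paper's version then \emph{asserts} that $|L| = \Theta(\rho\tau)$ without proof (``I guess it's $\Theta(\rho\tau)$''). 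You instead treat Theorem~\ref{thm:BE-like} --- which is the polished, self-contained form of exactly the argument the draft was sketching, proved via Fact~\ref{fact:mahler} --- as a black box, and then observe that the only remaining content is the elementary geometric estimate $\beta = \arg(s-1/3) = \Omega(\rho\tau)$, which you verify carefully (denominator of $\tan\beta$ bounded in $[5/12, 2/3]$ using $\rho \le 1/4$; $\sin\tau \ge (2/\pi)\tau$; $\arctan x \ge x/2$ for $x$ in the relevant bounded range). You also verify the hypothesis $R \le 2/3$ of Theorem~\ref{thm:BE-like}, which the paper's draft states but does not check. In short: what you do differently is modularize the argument around Theorem~\ref{thm:BE-like} and supply the geometric lemma that the paper hand-waves; what this buys is a shorter proof and a closed gap. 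The underlying engine (Mahler measure / harmonicity of $\log|Q|$ plus the $1/(1-|w|)$ bound on the major arc) is the same in both.
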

With this lemma in hand, we finish as before:  Note that $|s| = 1 - \Theta(\rho \tau^2)$ and hence $|z| \geq 1 - O(\rho \tau^2) \geq \Exp(-\rho \tau^2)$ for all points $z \in L$. Thus the average of $|P|$ along $L$ is at least $\Exp(-\rho \tau^2 d - 1/(\rho \tau))$.  Taking $\tau = \Theta(1/(d^{1/3} \rho^{2/3}))$ we get:
\begin{corollary}
    The average value of $|P(z)|$ along $z \in L \subset D_\rho(1-\rho)$ is at least $\Exp(-d^{1/3}/\rho^{1/3})$.
\end{corollary}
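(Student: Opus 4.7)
The final statement is the Corollary, and the plan is to obtain it by first proving Lemma~\ref{lem} and then combining it with a direct geometric lower bound on $|z|^d$ for $z \in L$, followed by a one-parameter optimization. The endpoints $s, \bar{s}$ of $L$ lie on $\partial D_\rho(1-\rho)$, and a direct expansion gives $|s|^2 = 1 - 2\rho(1-\rho)(1-\cos\tau) \geq 1 - O(\rho\tau^2)$. Since $L$ is the minor arc of $G$ (centered at $1/3$) between $s$ and $\bar s$, its rightmost point $1/3 + r_G$ is the point of $G$ farthest from the origin, and one checks that every $z \in L$ satisfies $|z| \geq |s|$; thus $|z|^d \geq \exp(-O(\rho\tau^2 d))$ uniformly on $L$. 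Combining with Lemma~\ref{lem}, the geometric mean of $|P(z)| = |z|^d \cdot |Q(z)|$ along $L$ is at least $\exp(-O(\rho\tau^2 d + 1/(\rho\tau)))$; balancing the two terms by setting $\tau = \Theta((\rho^2 d)^{-1/3})$ makes each equal to $\Theta((d/\rho)^{1/3})$, which is the stated bound.

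For Lemma~\ref{lem} I would mimic the strategy in the proof of Theorem~\ref{thm:BE-like}, but apply it on the circle $G$ instead of $\partial D_r(1/3)$. From the hypothesis that $Q$'s constant coefficient has modulus~$1$ and all other coefficients have modulus at most~$1$, the triangle inequality yields $|Q(1/3)| \geq 1 - 1/3 - 1/9 - \cdots = 1/2$ and $|Q(z)| \leq 1/(1-|z|)$ on $D_1(0)$. Applying Fact~\ref{fact:mahler} with $\calO = G$ gives $\GM_G |Q| \geq |Q(1/3)| \geq 1/2$. Writing $L'$ for the complementary (major) arc and $\alpha$ for the fraction of $G$ occupied by $L$, this splits as $\GM_L(|Q|)^{\alpha} \, \GM_{L'}(|Q|)^{1-\alpha} \geq 1/2$. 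To upper-bound $\GM_{L'}|Q|$ I would use exactly the same trick as in~\eqref{eqn:combo1}--\eqref{eqn:combo2}: every $z \in L'$ has $|z| \leq |s|$ (so $1/(1-|z|)$ is larger on $L$ than on $L'$, and one can upper-bound by the full-circle average), and enlarging $G$ to $\partial D_{2/3}(1/3)$ only increases $1/(1-|z|)$ on that circle. This enlargement is valid precisely because $r_G = |s-1/3| \leq 2/3$, an inequality that holds whenever $\rho \leq 2/3$ (a one-line algebraic check). The resulting universal bound $\GM_{\partial D_{2/3}(1/3)}(1/(1-|z|)) = O(1)$ is exactly the integral already evaluated in the proof of Theorem~\ref{thm:BE-like}.

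It remains to estimate $\alpha$. The angle $2\beta$ subtended by $L$ at $1/3$ satisfies $\tan\beta = \rho \sin\tau / (2/3 - \rho(1-\cos\tau))$, which for $\tau \leq \pi/2$ and $\rho \leq 1/2$ is $\Theta(\rho\tau)$, so $\alpha = \Theta(\rho\tau)$. Solving the split inequality for $\GM_L|Q|$ yields $\GM_L|Q| \geq \exp(-O(1/(\rho\tau)))$, which is the content of Lemma~\ref{lem}; plugging back as above finishes the Corollary. The main obstacle is really just the geometric bookkeeping: verifying that $L \subset D_\rho(1-\rho)$ (which holds because the minor arc lies on the side of $G$ closer to $1-\rho$ and the computation $(r_G \cos\phi - (2/3-\rho))^2 + (r_G\sin\phi)^2 \leq \rho^2$ for $|\phi| \leq \beta$ is monotone in $\cos\phi$), confirming $r_G \leq 2/3$ so the Mahler-measure comparison is valid, and pinning down $\beta = \Theta(\rho\tau)$. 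No new analytic input beyond Fact~\ref{fact:mahler} and the integral inside the proof of Theorem~\ref{thm:BE-like} is required.
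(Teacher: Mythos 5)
Your proposal is correct and follows essentially the same route as the paper: bound $|z|^d$ on $L$ by $\exp(-O(\rho\tau^2 d))$ via $|s| = 1 - \Theta(\rho\tau^2)$, combine with Lemma~\ref{lem} to get $\exp(-O(\rho\tau^2 d + 1/(\rho\tau)))$, and balance with $\tau = \Theta((\rho^2 d)^{-1/3})$. One small improvement: where the paper's sketch of Lemma~\ref{lem} glibly asserts $\log|Q|$ is harmonic (it is only subharmonic if $Q$ vanishes inside $G$), your use of Fact~\ref{fact:mahler} is the rigorous fix, and it is the same fix the authors incorporated into the polished Theorem~\ref{thm:BE-like}.
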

\begin{proof} (Lemma~\ref{lem}.)  Since $f$ is harmonic, the average value of~$f$ around~$G$ is at least $-1$.  Now $G$ is the locus of points $1/3 + r e^{i\sigma}$ for $\sigma \in (-\pi, \pi]$, where $r = |s - 1/3| \leq 2/3$.  The modulus of such a point is $1 - \Omega(\sigma^2)$; hence $Q$'s modulus at such a point is $O(1/\sigma^2)$ and $f$'s value at such a point is at most $\log(O(1)/\sigma)$.  Let $L'$ be the big complementary arc to $L$ in~$G$, given by $|\sigma| \geq \sigma_0$.  It's not going to be relevant what $\sigma_0$ is, as a function of $\rho$ and $\tau$, although I guess it's $\Theta(\rho \tau)$.  Anyway, the average of $f$ along~$L'$ is therefore
\[
    \Theta(1) \int_{\sigma_0}^\pi \log(O(1)/\sigma)d\sigma \leq O(1) \int_0^\pi \log(O(1)/\sigma ) d\sigma \leq O(1).
\]
It follows that the average of $f$ along $L$ is at least $-O(1)/|L| \geq -O(1)/(\rho \tau)$.  Thus
\[
    -O(1)/(\rho \tau) \leq \E_{z \sim L} [\log |Q(z)|] \leq \log \E_{z \sim L} [|Q(z)|],
\]
as claimed.
\end{proof}

}

}

\ignore{
\newpage
\violet{
\section{Finishing the proof of Theorem~\ref{thm:main-negative}:  Upper bound on $\vareps^\Lit_\delta(n)$} \label{sec:A}

Recall that our goal now is to show that there is a Littlewood polynomial $P_{\text{loser}}(z)$ such that $|P_{\text{loser}}(z)| \leq e^{-n^{1/3}/\rho^{1/3}}$ for all $z \in D_\rho(1-\rho)$.  (In this section we actualy need $P_{\text{loser}}$ to be Littlewood, it's not enough for it to just be chill.)
The key tools here are Corollary 4.5 and Lemma 4.6 of the Borwein/Erdelyi paper.  The former is evidently an easy consequence of the Hadamard Three Circles Theorem and the latter is proved in another B/E paper.\rnote{These are the main ingredients of Theorem 3.3 of the B/E paper; however right now it seems to me it may be slightly more convenient to just use these instead of using Theorem 3.3 per se.}  At a high level we will use these tools to assert the existence of a Littlewood polynomial $P_{\text{loser}}(z)$ which has the property that for every $z \in S$, $|P_{\text{loser}}(z)| \leq e^{-n^{1/3}/\rho^{1/3}}$ (the maximum modulus principle tells us that if $|P_{\text{loser}}(z)| \leq e^{-n^{1/3}/\rho^{1/3}}$ for every $z \in S$ then $|P_{\text{loser}}(z)| \leq e^{-n^{1/3}/\rho^{1/3}}$ for every $z \in D_\rho(1-\rho)$ which is our final goal).

In more detail, the polynomial $P_{\text{loser}}(z)$ will be
$z^n \cdot Q_{\text{loser}}(z)$ where $Q_{\text{loser}}(z)$ is a degree-$n$ Littlewood polynomial (so really $P_{\text{loser}}$ may have
 degree as high as $2n$ but really, what is the difference between friends).  To argue that $|P_{\text{loser}}(z)| \leq e^{-n^{1/3}/\rho^{1/3}}$ for
every $z \in S$, we will define a certain ellipse $\tilde{E}_a$ and an arc $A_{\tilde{E}_a}$ of $S$ that lies within that ellipse and contains
 the point 1.  Corollary 4.5 and Lemma 4.6 will together give us that there is a Littlewood polynomial $Q_{\text{loser}}(z)$ such that $
 |Q_{\text{loser}}(z)|$ is small everywhere on the ellipse $\tilde{E}_a$, hence (by the maximum modulus principle) small everywhere within the ellipse, hence small everywhere on $A_{\tilde{E}_a}$ (and hence so is $|P_{\text{loser}}(z)| = |z|^n |Q_{\text{loser}}(z)| \leq |Q_{\text{loser}}(z)|$, where these inequalities hold on all $z \in A_{\tilde{E}_a}$ since such $z$ have $|z| \leq 1$).   Because all points $z \in S \setminus A_{\tilde{E}_a}$ have $|z|$ bounded away from 1, all points $z \in  S \setminus
 A_{\tilde{E}_a}$ will have that $|P_{\text{loser}}(z)| = |z|^n \cdot |Q_{\text{loser}}(z)| < n \cdot |z|^n$ is small.  Thus all points on $S$ will have
 that $|P_{\text{loser}}(z)|$ is small as desired.

 \begin{center}
\includegraphics[width=150mm, height=100mm]{ellipse}
\end{center}

So let's get down to it.  As before think of $\rho,n$ as fixed where $\rho = \omega(1/\sqrt{n})$.  Let $\tilde{E}_a$ be as defined in Corollary 4.4 (the foci should be $1-8a$ and 1, not $1-a$ and 1 -- that's a typo -- and the major axis is $[1-14a,1+6a]$) where $a<\rho,a=o_n(1)$ is a parameter we will optimize later.

Lemma 4.6 tells us there is a Littlewood polynomial $Q_{\text{loser}}(z)$ of degree $d$  that has magnitude at most $e^{-c_5 \sqrt{d}}$ for every real $z \in [0,1]$, so certainly $|Q_{\text{loser}}(z)| \leq e^{-c_5 \sqrt{d}}$ for every $z \in [1-8a,1].$  We will take $d=\Theta(1/a^2$) (and will check/stipulate later that $a = \Omega(1/\sqrt{n})$ so that $d \leq n$), so we have
$|Q_{\text{loser}}(z) \leq e^{-c_5/a}$.  Corollary 4.5 tells us that we have
\[
\max_{z \in \tilde{E}_a} |Q_{\text{loser}}(z)| \leq \Theta(1/a) \cdot e^{13da/2} \cdot e^{-c_5 /a} = e^{-\Theta(1/a)}
\]
(for a suitable choice of the hidden constant in $d=\Theta(1/a^2)$,
so we get
\begin{equation}
\label{eq:first}
\max_{z \in A_{\tilde{E}_a}} |P_{\text{loser}}(z)| \leq
\max_{z \in A_{\tilde{E}_a}} |Q_{\text{loser}}(z)| \leq
\max_{z \in \tilde{E}_a} |Q_{\text{loser}}(z)| \leq
e^{-\Theta(1/a)},
\end{equation}
where the first inequality is because $|P_{\text{loser}}(z)| \leq |Q_{\text{loser}}(z)|$ for every $|z| \leq 1$ and the second is
the maximum modulus principle.

To bound $|P_{\text{loser}}(z)|$ for $z \in S \setminus A_{\tilde{E}_a}$, we observe that the arc $A_{\tilde{E}_a}$ has length $\Theta(a)$.  Hence the two endpoints of $A_{\tilde{E}_a}$ are at points
\[
z_{\tau'} := (1-\rho) + \rho e^{i \tau'}
\quad \text{and} \quad
z_{-\tau'} := (1-\rho) + \rho e^{-i \tau'}
\]
for some angle $\tau' = \Theta(a/\rho).$  We have $|z_{\tau'}| \approx e^{-\rho (\tau')^2/2}$, so any $z \in S \setminus A_{\tilde{E}_a}$
has $|z| \leq e^{-\rho (\tau')^2/2}$ and hence
\begin{equation}
\label{eq:second}
\max_{z \in S \setminus  A_{\tilde{E}_a}} |P_{\text{loser}}(z)| \leq n |z|^n \leq n e^{-\rho (\tau')^2n/2} =
n e^{-\Theta(a^2 n/\rho)}.
\end{equation}
Ignoring the ``$n$'' out front (it's late, and it is ignorable) and some constants, we minimize the max of these two upper bounds (\ref{eq:first})
and (\ref{eq:second}) by taking $\Theta(1/a)=a^2 n/\rho$, i.e. $a=\Theta(\rho^{1/3}/n^{1/3})$, and we get
\[
\max_{z \in S} |P_{\text{loser}}(z)| \leq e^{-\Theta(n^{1/3}/\rho^{1/3})}.
\]
Note that if $\rho = \omega(1/\sqrt{n})$ then $a = \omega(1/\sqrt{n})$ and we have $a<\rho$, $d \leq n$ as required above.

\ignore{
  We use the following variant of Theorem 3.3 of B/E (we call this Theorem 3.3$'$): \emph{``Let $0 < a < $(some constant).  Let $A$ be the subarc of $S$ with angular measure $a$ centered at the point 1.  Then there is a Littlewood polynomial $Q_{\text{loser}}(z)$ of degree $\Theta(1/a^2)$ such that
$|Q_{\text{loser}}(z)| \leq e^{-1/a}$ for all $z \in A.$''}  (This can be verified from the first proof of Theorem 3.3 given on page 11 of B/E.  The only thing that changes is the next to last sentence where now we must observe that since the ellipse $\tilde{E}_a$ with foci $1-8a$ and 1 (see Corollary 4.4 -- the ``$1-a$'' there is a typo for ``$1-8a$'') and major axis $[1-14a,1+6a]$ has constant aspect ratio, as claimed there the unit circle intersects $\tilde{E}_a$ in an arc of length $\Omega(a)$.)

}

}
}

\ignore{

\section{Other results}

In \cite{MoitraSaks13}, Moitra and Saks gave an $(n/\eps)^{2f(\eta) + O(1)}$ time algorithm for the ``population recovery problem with erasure probability $1-\eta$'', where $f(\eta) = \log(2/\eta)/\eta$. \red{(Recall that this problem is to reconstruct an unknown distribution over $\{0,1\}^n$ to within an additive error $\eps$ in the probability of each point, given access to independent draws from the distribution in which each coordinate of each draw is independently replaced with ``?'' with probability $1-\mu$; see \cite{MoitraSaks13} for details.)}   They showed (see Section~2 of their paper) that this result is a direct consequence of proving that $\sigma(\eta,\ell) \leq \ell^{f(\eta)}$, where $\sigma(\eta,\ell)$ is the ``minimum sensitivity of a $1/\ell$-local inverse'' (see their Section~1.2).  In turn, they established this upper bound on $\sigma(\eta,\ell)$ in two steps.  First they showed (see their Sections~3 and~4) that $\sigma(\eta,\ell)$ is at most
\begin{equation}    \label{eqn:LP}
    \max \braces*{P(0) - \frac{L(P)}{\ell} : P \text{ is a real-coefficient polynomial with} \max_{D_\eta(1-\eta)} |P| \leq 1};
\end{equation}
here $L(P)$ denotes the \emph{length} of~$P$, the sum of the magnitudes of its coefficients.\footnote{Actually, they showed a stronger bound in which one minimizes over $P$ where the length of $P(1-\eta + \eta z)$ (with respect to variable~$z$) is at most~$1$.  However they also showed that this length is always at least $\max_{D_\eta(1-\eta)} |P| \leq 1$.}  Then they showed that~\eqref{eqn:LP}  is upper-bounded by $\ell^{\log(2/\eta)/\eta}$ (see their Section~5).

Here we improve the upper bound on \eqref{eqn:LP} to~$\ell^{g(\eta)}$, where $g(\eta) = \pi/\sin^{-1}\frac{\eta}{1-\eta} - 1 \leq \pi/\eta$. Tracing through the analysis of \cite{MoitraSaks13}, this yields an  $(n/\eps)^{2g(\eta) + O(1)} = (n/\eps)^{O(1/\eta)}$ time algorithm for the population recovery problem, improving their exponent by a logarithmic factor.

To establish this bound, let us first simplify the form of~\eqref{eqn:LP}.  Certainly the maximum is nonnegative, which means that we can freely add $P(0) \geq L(P)/\ell$ as an additional constraint.  Next, it only hurts us to change the objective function to $P(0)$, rather than $P(0) - L(P)/\ell$.  Next, the optimum is certainly achieved when $\max_{D_\eta(1-\eta)} |P| = 1$, so we can change to this constraint.  In turn we can now replace the objective function $P(0)$ by $P(0)/\max_{D_\eta(1-\eta)} |P|$.  The constraint $\max_{D_\eta(1-\eta)} |P| = 1$ is now redundant, as the objective function and the constraint $P(0) \geq L(P)/\ell$ are homogeneous.  Finally, by this homogeneity we can add the constraint $P(0) = 1$.  To summarize, we have shown that
\[
    \eqref{eqn:LP} \leq \max \braces*{\frac{1}{\max_{D_\eta(1-\eta)} |P|} : P(0) = 1, L(P) \leq \ell} = \parens*{\min  \braces*{\max_{D_\eta(1-\eta)} |P| : P(0) = 1, L(P) \leq \ell} }^{-1}.
\]
Thus to obtain our improved population recovery result, it suffices to show that if $P$ is a polynomial with $P(0) = 1$ and $L(P) \leq \ell$, then $\max_{D_\eta(1-\eta)} |P| \geq \ell^{-g(\eta)}$.

This can be done using a simplified version of our Theorem~\ref{thm:BE-like}.  \red{XXXXXXtoo lazy to write it just now; but seriously, just pass an origin-centered circle thru $\bdry D_\eta(1-\eta)$ with the intersection of the two circles being at right-angles.  Then apply the Mahler Measure lemma.  Done.XXXXXXX}

\onote{The $g(\eta) = \pi/\eta$ bound is is tight up to a constant factor.  Do we want to bother putting in the argument?  I wrote it in a June 1 email}

}

\section{Conclusions}
A natural direction for future work is to go beyond mean-based algorithms.  For example, an efficient algorithm can estimate the covariances of all \emph{pairs} of trace bits.  If different sources strings lead to sufficiently different trace-covariances, one could potentially get a more efficient trace reconstruction algorithm.  Analyzing this strategy is  equivalent to analyzing a certain problem concerning the maxima of Littlewood-like polynomials on~$\C^2$; however we could not make any progress on this problem.  It would also be interesting to develop lower bound techniques that apply to a broader class of algorithms than just mean-based algorithms.

\ignore{
Another avenue for progress might be to abandon simple statistical algorithms altogether.  For example, the Bitwise Majority Alignment variant of Batu~et~al.~\cite{BKKM04} efficiently solves trace reconstruction when $\delta = 1/n^{.51}$; this does not contradict our lower bound because it is not a mean-based algorithm.
}

Finally, we mention that the authors have applied the techniques in this paper (specifically, the technique used in Section~\ref{sec:improvement}) to several aspects of the population recovery problem.  Details will appear in a forthcoming work.

\subsection*{Acknowledgments}
The authors would like to thanks the Simons Foundation for sponsoring the symposium on analysis of Boolean functions where the authors began work on this project.
A.~D.~would like to thank Aravindan Vijayaraghavan for useful discussions about this problem.

\bibliography{allrefs}{}

\newcommand{\etalchar}[1]{$^{#1}$}
\begin{thebibliography}{HMPW08}

\bibitem[ADHR12]{ADHR12}
Alexandr Andoni, Constantinos Daskalakis, Avinatan Hassidim, and Sebastien
  Roch.
\newblock Global alignment of molecular sequences via ancestral state
  reconstruction.
\newblock {\em Stochastic Processes and their Applications},
  122(12):3852--3874, 2012.

\bibitem[BE97]{BE97}
Peter Borwein and Tam\'{a}s Erd\'{e}lyi.
\newblock Littlewood-type polynomials on subarcs of the unit circle.
\newblock {\em Indiana University Mathematics Journal}, 46(4):1323--1346, 1997.

\bibitem[BEK99]{BEK99}
Peter Borwein, Tam\'{a}s Erd\'{e}lyi, and G\'{e}za K\'{o}s.
\newblock Littlewood-type problems on $[0,1]$.
\newblock {\em Proceedings of the London Mathematical Society}, 3(79):22--46,
  1999.

\bibitem[BKKM04]{BKKM04}
Tu\u{g}kan Batu, Sampath Kannan, Sanjeev Khanna, and Andrew McGregor.
\newblock Reconstructing strings from random traces.
\newblock In {\em Proceedings of the 15th Annual {ACM-SIAM} Symposium on
  Discrete Algorithms}, pages 910--918, 2004.

\bibitem[CK97]{CK97}
Christian Choffrut and Juhani Karhum\"{a}ki.
\newblock Combinatorics of words.
\newblock In {\em Handbook of Formal Languages, Volume I}, pages 329--438.
  Springer, 1997.

\bibitem[DS03]{DS03}
Miroslav Dud{\'{\i}}k and Leonard Schulman.
\newblock Reconstruction from subsequences.
\newblock {\em Journal of Combinatorial Theory, Series {A}}, 103(2):337--348,
  2003.

\bibitem[HMPW08]{HMPW08}
Thomas Holenstein, Michael Mitzenmacher, Rina Panigrahy, and Udi Wieder.
\newblock Trace reconstruction with constant deletion probability and related
  results.
\newblock In {\em Proceedings of the 19th Annual {ACM-SIAM} Symposium on
  Discrete Algorithms}, pages 389--398, 2008.

\bibitem[Jan14]{Janson:TB}
Svante Janson.
\newblock Tail bounds for sums of geometric and exponential variables, 2014.
\newblock \url{http://www2.math.uu.se/~svante/papers/sjN14.pdf}.

\bibitem[Kal73]{Kalashnik73}
V.~V. Kalashnik.
\newblock Reconstruction of a word from its fragments.
\newblock {\em Computational Mathematics and Computer Science (Vychislitel'naya
  matematika i vychislitel'naya tekhnika), Kharkov}, 4:56--57, 1973.

\bibitem[KM05]{KM05}
Sampath Kannan and Andrew McGregor.
\newblock More on reconstructing strings from random traces: Insertions and
  deletions.
\newblock In {\em IEEE International Symposium on Information Theory}, pages
  297--301, 2005.

\bibitem[KR97]{KR97}
Ilia Krasikov and Yehuda Roditty.
\newblock On a reconstruction problem for sequences,.
\newblock {\em Journal of Combinatorial Theory, Series {A}}, 77(2):344--348,
  1997.

\bibitem[Lev01a]{Lev01b}
Vladimir Levenshtein.
\newblock Efficient reconstruction of sequences.
\newblock {\em IEEE Transactions on Information Theory}, 47(1):2--22, 2001.

\bibitem[Lev01b]{Lev01a}
Vladimir Levenshtein.
\newblock Efficient reconstruction of sequences from their subsequences or
  supersequences.
\newblock {\em Journal of Combinatorial Theory Series A}, 93(2):310--332, 2001.

\bibitem[Mit09]{Mit09}
Michael Mitzenmacher.
\newblock A survey of results for deletion channels and related synchronization
  channels.
\newblock {\em Probability Surveys}, 6:1--33, 2009.

\bibitem[MMS{\etalchar{+}}91]{MMSSS91}
Bennet Manvel, Aaron Meyerowitz, Allen Schwenk, Ken Smith, and Paul Stockmeyer.
\newblock Reconstruction of sequences.
\newblock {\em Discrete Mathematics}, 94(3):209--219, 1991.

\bibitem[Mos]{Mossel16}
Elchanan Mossel.
\newblock {Personal communication, October 2016}.

\bibitem[Mos13]{Mossel:open}
Elchanan Mossel.
\newblock {MSRI} open problem session, 2013.
\newblock
  \url{https://www.msri.org/c/document_library/get_file?uuid=4a885484-bcdd-4238-a3da-21c05713034c&groupId=14404}.

\bibitem[MPV14]{MPV14}
Andrew McGregor, Eric Price, and Sofya Vorotnikova.
\newblock Trace reconstruction revisited.
\newblock In {\em Proceedings of the 22nd Annual European Symposium on
  Algorithms}, pages 689--700, 2014.

\bibitem[Per]{NP16}
Yuval Peres.
\newblock {Personal communication, October 2016}.

\bibitem[Sco97]{Scott97}
Alexander Scott.
\newblock Reconstructing sequences.
\newblock {\em Discrete Mathematics}, 175(1):231--238, 1997.

\bibitem[Smy08]{Smyth08}
Chris Smyth.
\newblock The {M}ahler measure of algebraic numbers: A survey.
\newblock In {\em Number Theory and Polynomials}, pages 322--349. London
  Mathematical Society Lecture Note Series 352, 2008.

\bibitem[VS08]{VS08}
Krishnamurthy Viswanathan and Ram Swaminathan.
\newblock Improved string reconstruction over insertion-deletion channels.
\newblock In {\em Proceedings of the 19th Annual {ACM-SIAM} Symposium on
  Discrete Algorithms}, pages 399--408, 2008.

\end{thebibliography}
\bibliographystyle{alpha}

\appendix

\section{Results on channels that allow insertions, deletions and flips} \label{app:general}

\subsection{Defining the general channel} \label{sec:channel-definition}
We now describe the most general channel $\channel$ that we analyze, which we subsequently refer to as ``the general channel''.  As stated earlier, this channel allows for three different types of corruptions: deletions with probability~$\delta$,  insertions with probability~$\sigma$, and bit-flips with probability~$\gamma/2$.  We comment that for mean-based algorithms, the presence of bit-flips makes hardly any difference; thus the reader may focus just on the combination of deletions and insertions.

Our definition of this general channel is essentially the same as that of Kannan and McGregor~\cite{KM05}.  More precisely, for parameters $\delta, \sigma, \gamma \in [0,1)$, we define how the channel acts on a single source bit~$b \in \{-1,1\}$:
\begin{enumerate}
	\item First, the channel performs ``insertions''; i.e., it repeatedly does the operation ``with probability~$\sigma$, transmit a uniformly random bit; with probability $1-\sigma$, stop''.
    \item Having stopped, the channel ``deletes'' (completes transmission without sending $b$ or $-b$) with probability~$\delta$.
    \item Otherwise (with probability~$1-\delta$), the channel transmits one more bit:  namely,~$b$ with probability $1-\gamma/2$, or $-b$ with probability $\gamma/2$.
\end{enumerate}
As usual, the channel operates on an entire source string $x \in \{-1,1\}^n$ by operating on its individual bits independently, concatenating the results.  That is,
\[
	\channel(x) = \channel(x_0)\channel(x_1) \cdots \channel(x_{n-1}) \in \{-1,1\}^*.
\]
Of course, if we set $\sigma = \gamma = 0$, we get the deletion channel $\Del_\delta$ that was analyzed in the main body of the paper. 

An alternative description of the channel's operation on a single bit~$x_i$ is as follows:
\begin{equation} 
\calC (x_i) = \begin{cases} \bw \ &\textrm{with probability }\delta, \\
(\bw ,   \ba) \ &\textrm{with probability } (1-\delta) \cdot \gamma, \\
(\bw ,   x_i) \ &\textrm{with probability } (1-\delta) \cdot (1-\gamma),  \end{cases} \label{eq:cw}
\end{equation}
where $\ba \in \{-1,1\}$ is a uniformly random bit, and where $\bw \in \{-1,1\}^{\bG}$ is a uniformly random string of~$\bG$ bits, with $\bG$ in turn being a Geometric random variable of parameter~$1-\sigma$.\footnote{ Here we use the convention that Geometric random variables take values $0, 1, 2, \dots$ (equal to the number of ``failures''); i.e., $\Pr[\bG = t] = \sigma^t(1-\sigma)$ for each $t \geq 0$.}  From this description one can see that in a received word $\by \leftarrow \channel(x)$, each received bit either ``comes from a properly transmitted source bit~$x_i$'', or else is uniformly random.  (The probability each $x_i$ comes through is $(1-\delta)(1-\gamma)$.)  As a consequence, we have that Proposition~\ref{prop:its-linear} continues to hold for~$\channel$: for every~$j \in \N$, the mean value $\E_{\by \leftarrow \channel(x)}[\by_j]$ is a \mbox{(real-)linear} function of~$x$.

Note that when the insertion probability $\sigma$ is positive, the received word $\by \leftarrow \channel(x)$ does not have an a priori bounded length.  This is a minor annoyance can be handled in several different ways; we choose one way in the next section.

\ignore{
Thus $\delta$ corresponds to the deletion probability, $\sigma$ to the insertion probability (where each inserted bit is uniform random), and ${\frac {\gamma} 2}$ to the flip probability.  \ignore{\rnote{This is the motivation for (\ref{eq:cw}) and for having the Geometric parameter be $1-\sigma$:  we want it to be the case that setting $\sigma$ and $\gamma$ to 0, we recover the $\Del_\delta$ channel.}}  

We adopt the following notation: for a nonempty string $z \in \{-1,1\}^{\ast},$ let $\mathsf{Del}(z)$ denote
the string obtained by deleting the leftmost position of $z$. As an aid to understanding the general channel, we give an alternate definition  which results in the same distribution $\calC (x)$  is as follows: Start with a \red{nonempty} string $y = x$
and repeat the following process:
\begin{enumerate}

\item Sample independent Bernoulli random variables $\bB_1, \bB_2, \bB_3 \in \{0,1\}$ with $\Pr[\bB_1=0]=\sigma$, $\Pr[\bB_2=0] = \delta$ and $\Pr[\bB_3=0] = (2-\gamma)/2$.

\item If $\bB_1=0$, then transmit a random bit and go to Step~1.

\item If \ignore{$\bB_1=1$ and }$\bB_2=0$, then set $y \leftarrow \mathsf{Del}(y)$ and go to Step~\red{5}.

\item If \ignore{$\bB_2=1$ and }$\bB_3=0$, then transmit $y_1$, \red{else transmit $-y_1.$}  Set $y \leftarrow \mathsf{Del}(y)$ and go to Step~5.

\item \red{If $y$ is the empty string then halt, else go to Step~1.}

\end{enumerate}
}



\ignore{
As in Section~\ref{sec:prelim} we write $\by \leftarrow \channel(x)$ to denote that $\by = (\by_0, \by_1,  \dots, \by_{\boldn - 1})$ is a random received string obtained by passing $x$ through the general channel $\channel.$  For the general channel, if $\sigma >0$ then the length $\boldn$ of the received string may be arbitrarily large with nonzero probability. This is a minor annoyance that can be handled in several different ways; we handle it as follows.}

\subsection{Mean traces for the general channel}

\ignore{
  \begin{assumption}\label{ass:channel}
       \ignore{\item \label{item:N} } There is a \emph{trace length bound}~$N$ associated with\ignore{\onote{We don't \emph{really} need the bound $N \leq n^{O(1)}$ but it makes some things easier to state.  If you're truly sad that this prevents us from handling subpolynomial insertion channel parameters\dots then I feel sorry for you :)}} the general channel, with $n \leq N \leq \poly(n)$, such that ${\Pr[\boldn > N] < 2^{-n}}$.\footnote{The choice of $2^{-n}$ is somewhat arbitrary here.}  \ignore{ When $\channel$ is a deletion channel, we take $N = n$ (and then $\Pr[\boldn > N] = 0$).}
\end{assumption}
Assumption~\ref{ass:channel} holds as long as $\red{\sigma}$ is at most $1 - n^{-\Omega(1)}$. (The alternative regime of $\red{\sigma} = 1-n^{-o(1)}$ is quite extreme, as it means that a super-polynomial number of random bits are typically inserted between every two consecutive source bits; thus this assumption loses little generality.)
\ignore{Assumption~\ref{item:N} is mainly for the handling of insertion channels, which do not have an a priori upper bound on~$\boldn$.}
}

We revisit some of our definitions and observations about mean traces from Section~\ref{sec:mean}, 
in our new context of the general channel. We begin with~\eqref{eq:meantrace}, the definition of the mean trace.  Since the length of a received word may now be arbitrarily large, the mean trace is now an infinite vector.  We deal with this by truncating it at what we call the ``effective trace length bound~$N$''.

\begin{definition} \label{def:tracelengthbound}
For the general channel $\channel$ with insertion probability $0 \leq \sigma < 1$, we define the \emph{effective trace length bound} $N=N(\sigma)$ to be {$N = \left \lceil
10 \cdot {\frac {n +  \ln(1/(1-\sigma))}{1-\sigma}}\right \rceil \leq \poly(n, \frac{1}{1-\sigma})$.}
\end{definition}

\begin{definition} \label{def:deletion-mean-trace}
    For the general channel~$\channel$ and a source string $x \in \{-1,1\}^n$, we define the \emph{idealized mean trace} to be the infinite sequence
    \[
        \mu^{\ideal}_\channel(x) = \E_{\by \leftarrow \channel(x)}[(\by, 0, 0, 0, \dots) ] \in [-1,+1]^\N.
    \]
    We define just the \emph{mean trace} to be its truncation to length~$N$:
    \[
        \mu_\channel(x) = (\mu^{\ideal}_\channel(x)_0, \mu^{\ideal}_\channel(x)_1, \dots, \mu^{\ideal}_\channel(x)_{N-1}) \in [-1,+1]^N.
    \]
\end{definition}

Recalling (\ref{eq:cw}), we see that the length $\boldn$ of a received word is stochastically dominated by $(\bG_1 +1) + \cdots + (\bG_n + 1)$, where the $\bG_i$'s are i.i.d.\ random variables distributed as $\mathrm{Geometric}(1-\sigma)$.  We upper bound this using Janson's bound on the sum of independent Geometric random variables (Theorem~2.1 of \cite{Janson:TB}), noting that his Geometric random variables count the number of ``trials'', which aligns precisely with our $(\bG_i + 1)$'s.  His bound 
gives that
$
\Pr[\boldn \geq N + j] \leq \exp(-(N+j)(1-\sigma)/2)
$ for any $j \geq 0$, and hence we have the following: 
\ignore{
}
for any $x \in [-1,1]^n$,
\begin{align} 
\| \mu_\channel(x) - \mu^{\ideal}_\channel(x)\|_1
&=
\sum_{\ell=N}^\infty |\mu^{\ideal}_\channel(x)|_\ell
\leq
\sum_{\ell=N}^\infty \Pr[\boldn \geq \ell]
=
\sum_{j=0}^\infty \Pr[\boldn \geq N+j]\nonumber \\
&= {\exp(-N(1-\sigma)/2) \cdot  {\frac 1 {1 - \exp(-(1-\sigma)/2)}}
< {\frac {4\exp(-N(1-\sigma)/2)} {1-\sigma}}
}\nonumber \\
&\leq 4 \exp(-n), \quad \quad \quad \quad \text{by our choice of $N$}.
\label{eq:negligible}
\end{align}

\pparagraph{The mean-based trace reconstruction model for the general channel.}
Definition~\ref{def:meanbased}  has a natural analogue for the general channel: an algorithm in the mean-based general-channel model specifies a cost parameter $T \in \N$ and is given an estimate $\widehat{\mu}_\channel(x) \in [-1,1]^N$ of the mean trace satisfying
$        \|\wh{\mu}_{\channel}(x) - \mu_{\channel}(x)\|_1 \leq 1/T.$
It is clear that an algorithm in the mean-based general-channel trace reconstruction model with cost $T_1$ and postprocessing time $T_2$ may be converted into a normal trace reconstruction algorithm using $\poly(N,T_1)=\poly(n,{\frac 1 {1-\sigma},T_1})$ samples and $\poly(n,{\frac 1 {1-\sigma},T_1}) + T_2$ time.  Note that since we will be studying algorithms with cost $T \ll 2^n$, by~(\ref{eq:negligible}) there is no real difference between getting an estimate of~$\mu_{\channel}(x)$ or of $\mu^{\ideal}_{\channel}(x)$.

\pparagraph{The complexity of mean-based trace reconstruction for the general channel.} Regarding the complexity of mean-based trace reconstruction, for the general channel we define $\gap_{\channel}(n)$ and $\gap^{\chill}_{\channel}(n)$ in the obvious way, replacing each occurrence of the length-$n$ vector $\mu_{\Del_\delta}(\cdot)$ in Definition~\ref{def:gap} with the length-$N$ vector $\mu_{\channel}(\cdot).$  \red{As in Section~\ref{sec:complexity}, to show that trace reconstruction can be performed under the general channel in time $\poly(N,M)=\poly(n,{\frac 1 {1-\sigma}},M)$ it suffices to show that $\gap^{\chill}_{\channel}(n) \geq 1/M$}.\footnote{Again, to carry out the linear-programming algorithm, we can either assume that the channel parameters $\delta$, $\sigma$, $\gamma$ are known to the algorithm, or else they should estimated; we omit the details here.}

\pparagraph{Reduction to complex analysis for the general channel.} For $x \in \{-1,1\}^n$ the \emph{general-channel polynomial} is defined entirely analogously to Definition~\ref{def:P}:
    \[
            P_{\channel,x}(z) = \sum_{j < N} \mu_{\channel}(x)_j \cdot z^j;
    \]
note that this is a polynomial of degree less than $N$.  This definition extends to $x \in [-1,+1]^n$ using the linearity of $\mu_{\channel}$.  Similarly, we may define the \emph{idealized general-channel ``polynomial''} by
    \[
        P^{\ideal}_{\channel,x}(z) = \sum_{j \in \N} \mu^{\ideal}_\channel(x)_j \cdot z^j;
    \]
this will actually be a rational function of~$z$.

Entirely analogous to Proposition~\ref{prop:cx}, we get that for every $b \in [-1,1]^n,$
\[
\max_{z \in \bdry D_1(0)} \abs*{P_{\channel,b}(z)} \leq \|\mu_\channel(b)\|_1 \leq \sqrt{N} \max_{z \in \bdry D_1(0)} \abs*{P_{\channel,b}(z)}.
\]

Similar to Section \ref{sec:reduc}, a factor of $\sqrt{N} = \poly(n, {\frac 1 {1-\sigma}})$ is negligible compared to the bounds we will prove, so it suffices to analyze  $\max_{z \in \bdry D_1(0)} \abs*{P_{\channel,b}(z)}$ rather than $\|\mu_\channel(b)\|_1$ in the definitions of $\gap_\channel(n)$ and $\gap^{\chill}_\channel(n)$.
\red{Moreover, since by (\ref{eq:negligible}) we have that $|P^{\ideal}_{\channel,b}(z) - P_{\channel,b}(z)| \leq 2^{-n}$ for all $b  \in [-1,1]^n$ and all $z \in \partial D_1(0)$, it suffices to analyze  $\max_{z \in \bdry D_1(0)} \abs*{P^{\ideal}_{\channel,b}(z)}$; we do this in the next subsection.}

\subsection{Channel polynomial for general channels} \label{sec:general-channels}

We now compute the ideal channel polynomial for the general channel defined in Section~\ref{sec:channel-definition}, using the same technique as in Section~\ref{sec:channel-polynomial-delete} and recalling the discussion around the alternative channel description~\eqref{eq:cw}.  As usual, let $\rho = 1-\delta$.  Let $\bJ_i$ be the random variable whose value is $\bot$ if~$x_i$ is either deleted (probability~$\delta$) or is replaced by a random bit (probability~$(1-\delta)\cdot\gamma$), or else is the position~$j$ such that coordinate~$x_i$ of the source string ends up in coordinate $j$ in the received string $\by$.  As before we let $\wt{\bJ}_i$ denote the random variable $\bJ_i$ conditioned on not being~$\bot$.  Since $\Pr[\bJ_i \neq \bot] = (1-\delta) \cdot (1-\gamma) $, a derivation identical to that of~\eqref{eqn:P-formula} yields
%
\begin{equation}
	P^{\ideal}_{\channel,x}(z)
= (1-\delta) (1-\gamma) \sum_{i < n}  x_i  \cdot \E[z^{\wt{\bJ}_i}]. \label{eq:pideal}
\end{equation}

To compute $\E[z^{\wt{\bJ}_i}]$, it is straightforward to see that each coordinate $x_{i'}$ with $i' < i$ independently generates a random number of received positions distributed as $\bG +\bB$, where  $\bG \sim \mathrm{Geometric}(\red{1-\sigma})$ and independently $\bB \sim \mathrm{Bernoulli}(\rho)$. Further, conditioned on $x_i$ not being deleted, $x_i$ generates a number of received positions distributed as $\bG+1$, \red{where the final ``$+1$'' is for $x_i$ (or $-x_i$) itself.}
 Thus $\wt{\bJ}_i$ is distributed as
 \[
 	\bG_0 + \cdots + \bG_{i} + \bB_0 + \cdots + \bB_{i-1},
 \]
 where the $\bG_k$'s are independent copies of $\bG$ and the $\bB_k$'s are independent copies of $\bB$. We therefore obtain
 \[
 	\E[z^{\wt{\bJ}_i}] = \E[z^{\bG}]^{i+1} \cdot \E[z^{\bB}]^{i} = \parens*{\E[z^{\bG}] \cdot \E[z^{\bB}]}^{i} \cdot \E[z^{\bG}].
 \]
Let $F_G(z)$ denote $\E[z^{\bG}]$ and let $F_B(z)$ denote $\E[z^{\bB}]$.  It is easy to calculate that
$F_G(z) = \red{\frac{1-\sigma}{1-\sigma z}}$, and we saw earlier that $F_B(z) = (1-\rho) + \rho z = \delta + \rho z$.  For brevity, let us write
 \[
 	 w = F_G(z)F_B(z) = \frac{(1-\sigma) \cdot (\delta + \rho z)}{1- \sigma z}, \ignore{= {\frac {(1-\sigma)(1-\rho) + (1-\sigma)\rho z}{1 - \sigma z}}
     }
 \]
\ignore{
\rnote{The expression for $w$ earlier was $\frac{\sigma \cdot ((1-\rho) + \rho z)}{1-(1-\sigma)z}$, before the $\sigma \to 1-\sigma$ switch.}
\ignore{Then, note that

\rnote{Before the $k \to 1-\sigma$ switch what was written was  \[
\gray{F_G(z) = \frac{\rho + (1-\sigma)(1-\rho)}{(1-\sigma) w + \sigma \rho}.}
\]  After our swap of $1-\sigma$ for $\sigma$, this would be
\[
F_G(z) = \frac{\rho + \sigma(1-\rho)}{\sigma w + (1-\sigma) \rho}
\]
but that is not what I got when I solved for $F_G(z)$ in terms of $w$, instead I got what is in blue --- the reciprocal of this.
}
\blue{
\[
F_G(z)=\frac{\sigma w + (1-\sigma) \rho}{\rho + \sigma(1-\rho)}.
\]}}
}
which is a M\"obius transformation of~$z$. Thus~$w$ ranges over a complex circle as $z$ ranges over $\partial D_1(0)$. More specifically, as $z$ ranges over $\partial D_1(0)$ we have that $w$ ranges over $\partial D_r(1-r)$, where 
 \[
r = {\frac {\rho + \delta \sigma}
{1+\sigma}}.
\]
Plugging this back into (\ref{eq:pideal}) using $\E[z^{\tilde{\bJ}_i}]=F_G(z) \cdot w^i,$ we obtain
 \[
 P^{\ideal}_{\channel,x}(z) =
(1-\delta) \cdot (1-\gamma) \cdot F_G(z)\ignore{\frac{\sigma w + (1-\sigma) \rho}{\rho + \sigma(1-\rho)}} \cdot \sum_{i<n}  x_i  \cdot w^i =
(1-\gamma)\cdot (1-\delta) \cdot {\frac {1-\sigma}{1-\sigma z}} \cdot \sum_{i<n}  x_i  \cdot w^i.
 \]


We use the bound $\abs*{{\frac {1-\sigma}{1-\sigma z}}} \geq {\frac {1-\sigma} 2} $ for $z \in \partial D_1(0)$. Now by the analysis of $\kappa^{\chill}_{\bounded}(r,d)$ given in Section~\ref{sec:channel-polynomial-delete} we get the following algorithmic result for general-channel trace reconstruction, which is our most general positive result:

\medskip
\noindent {\bf Theorem~\ref{thm:main-positive-general}, restated.} 
\emph{Let $\channel$ be the general channel described in Section~\ref{sec:channel-definition} with  deletion probability $\delta = 1-\rho$, insertion probability $\sigma$, and bit-flip probability $\gamma/2$.  Define
\[
r \coloneqq {\frac {\rho + \delta \sigma}{1+\sigma}}.
\]
Then there is an algorithm for $\channel$-channel trace reconstruction using samples and running time bounded by
\[
	\poly(\tfrac{1}{1-\delta}, \tfrac{1}{1-\sigma},\tfrac{1}{1-\gamma}) \cdot \begin{cases}
    	\exp(O(n/r)^{1/3}) & \text{if $C/n^{1/2} \leq r \leq 1/2$,} \\
        \exp(O((1-r) n)^{1/3}) & \text{if $O(\log^3 n)/n \leq 1-r \leq 1/2$.}
    \end{cases}
\]
}

\medskip

Let us make some observations about this result.  First, our Theorem~\ref{thm:main-positive} for the deletion channel is the special case of Theorem~\ref{thm:main-positive-general} obtained by setting $\sigma = \gamma = 0$.  Next, for fixed $\delta$,
\begin{align*}
	\text{if $\delta \leq 1/2$, }& \quad \text{$r$ ranges from $1-\delta$ down to $1/2$ as $\sigma$ ranges from $0$ up to $1$;} \\
    \text{if $\delta \geq 1/2$, }& \quad \text{$r$ ranges from $1-\delta$ \emph{up to} $1/2$ as $\sigma$ ranges from $0$ up to $1$.}
\end{align*}
The second statement is rather peculiar: it implies that when the deletion rate is high, the ability to perform trace reconstruction actually \emph{improves}, the more insertions there are.  Indeed, when we have  deletions only, our ability to do trace reconstruction in time $\exp(O(n^{1/3}))$ is limited to retention probability $\rho \geq \Omega(1)$.  But as soon as the insertion rate $\sigma$ satisfies $\sigma \geq \Omega(1)$, we can do trace reconstruction in time $\exp(O(n^{1/3}))$ as long as the  retention rate $\rho = 1-\delta$ satisfies $\rho \geq \exp(-O(n^{1/3}))$.

\ignore{

\section{Boring technical details} \label{app:boring}

. removing the assumption that the algorithm ``knows'' $\rho$.  (Also handle this issue for the more general channel I guess.)

. the LP recovery algorithm requires one to be able to compute (or approximate) $\mu_\channel(x')$ for a given $x' \in [-1,+1]^n$.  Obviously, one could do this with high probability just by simulating the channel oneself.  But one can also just do deterministically, by calculating.  Well, again, there's the annoyance about the insertion-channel not being bounded.
}
\ignore{

\section{Heavy hitters problem}
Our techniques can be used to resolve yet another basic problem is noisy unsupervised problem, namely the \emph{heavy hitters problem}. In the heavy hitters problem, there is an unknown distribution $\pi$ and the learner has access to samples from $T_{\rho}(\pi)$ where a random sample $y$ from $T_{\rho}(\pi)$ is defined as
\begin{itemize}
\item Sample $x \leftarrow \pi$.
\item Sample $y$ by replacing each bit of $x$ with an unbiased random bit with probability $1-\rho$.
\end{itemize}
In other words, $T_{\rho}(\pi)$ is obtained by applying the Bonami-Beckner operator at noise rate $\rho$ to $\pi$. Given an error parameter $\epsilon$, the task of the algorithm is to output the set $S_{\epsilon}  = \{x : \pi(x) > \epsilon\}$ and for every $x \in S_{\epsilon}$, an estimate $\tilde{\pi}(x)$ such that $|\pi(x) - \tilde{\pi}(x)| \le \epsilon$.
Recently, there has been a lot of work on several variants of this problem. In particular, in work on the \emph{noisy population recovery problem}~\cite{WY12, LZ15, DST16}, it has been shown that assuming that $|\mathsf{supp}(\pi)|=k$, the sample complexity of the above problem is
$k^{O_{\rho}(1)} \cdot \poly(1/\epsilon)$. In fact, under a further mild assumption, this sample complexity can be made algorithmic. However, without the bounded support assumption, there is no non-trivial upper bound on the sample complexity of this problem.

On the other hand, Batman \emph{et~al.}~\cite{BIMP13} showed that in time $\poly(n, (1/\epsilon)^{\log \log (1/\epsilon)})$, it is possible to put a set $\tilde{S}_\epsilon$ of size $\poly (1/\epsilon)^{\log \log (1/\epsilon)}$ such that $S_\epsilon \subseteq \tilde{S}_\epsilon$. Thus, in the current state of the art, it is possible to find (a mild) superset of the \emph{heavy-hitters} i.e. $S_{\epsilon}$. On the other hand, it is not known how to find (even approximately) the mass of the heavy-hitters.
Here, we resolve this question.
\begin{theorem}~\label{thm:hh-upper}
There is an algorithm such that given samples from $T_{\rho} \pi$, it runs in time $2^{n^{1/3}/\rho^{1/3}} \cdot \poly(1/\epsilon)$ and for all $x \in S_{\epsilon}$, outputs $\tilde{\pi}(x)$ such that $|\tilde{\pi}(x) - \pi(x) | \le \epsilon$.
\end{theorem}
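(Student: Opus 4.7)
The plan is to reduce Theorem~\ref{thm:hh-upper} to a statement about bounded, low-Fourier-degree estimators of the single-point indicator, which is then supplied by the polynomial machinery of this paper.

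First, apply the algorithm of Batman et al.~\cite{BIMP13} in time $\poly(n)\cdot(1/\epsilon)^{O(\log\log(1/\epsilon))}$ to produce a candidate set $\widetilde{S}_\epsilon \supseteq S_\epsilon$ of size $(1/\epsilon)^{O(\log\log(1/\epsilon))}$. We then estimate $\pi(x^*)$ for each $x^* \in \widetilde{S}_\epsilon$ independently; any spurious candidate has true mass below $\epsilon$, which is automatically within the stated additive tolerance.

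Second, for the per-candidate estimation step, use the exact Fourier-inversion identity $\pi(x^*) = 2^{-n}\sum_S \chi_S(x^*)\,\rho^{-|S|}\,\widehat{T_\rho\pi}(S)$. The naive unbiased Monte Carlo estimator $2^{-n}\prod_i(1 + x^*_i Y_i/\rho)$ has $L_\infty$-norm $((1+1/\rho)/2)^n$ and so is useless; I would substitute a function $\phi_{x^*}: \{-1,1\}^n \to \mathbb{C}$ whose Fourier expansion is supported on levels $|S| \leq d$ with $d = O((n/\rho)^{1/3})$, satisfying $\E_{Y\sim T_\rho(x)}[\phi_{x^*}(Y)] = \mathbf{1}[x=x^*] \pm \epsilon\cdot 2^{-n}$ for every $x \in \{-1,1\}^n$, and with $\|\phi_{x^*}\|_\infty \leq \exp(O((n/\rho)^{1/3}))$. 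Given such a $\phi_{x^*}$, a direct Hoeffding calculation on $\widehat{\pi}(x^*) := \frac{1}{N}\sum_{t=1}^N \phi_{x^*}(Y^{(t)})$ yields the claimed sample complexity after union-bounding over $\widetilde{S}_\epsilon$.

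The existence of the required $\phi_{x^*}$ should follow by LP-duality from the trade-off established in Theorem~\ref{thm:Lit-chill-lower} (and its geometric-mean refinement in Section~\ref{sec:improvement}). Writing the problem of finding a low-Fourier-degree, uniformly bounded estimator as a linear program and dualizing, the dual variables assemble into a nonzero polynomial on $D_\rho(1-\rho)$ with coefficients structured like a Littlewood polynomial; the bound $\kappa_{\bounded}^{\chill}(\rho, d) \geq \exp(-O((d/\rho)^{1/3}))$, applied with $d \asymp (n/\rho)^{1/3}$, certifies that the primal is feasible at the claimed precision. The main obstacle will be making this dualization \emph{effective}: the algorithm needs an explicit construction of $\phi_{x^*}$ within the stated runtime, and one also needs a truly uniform $L_\infty$ control of $\E_{Y\sim T_\rho(x)}[\phi_{x^*}(Y)] - \mathbf{1}[x=x^*]$ across all $x \in \{-1,1\}^n$ (not merely across $\widetilde{S}_\epsilon$), since the ``leakage'' $\sum_{x\notin\widetilde{S}_\epsilon}\pi(x)\,\E_{Y\sim T_\rho(x)}[\phi_{x^*}(Y)]$ could otherwise overwhelm the estimate even when each individual summand is small.
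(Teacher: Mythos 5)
The paper does not actually contain a proof of Theorem~\ref{thm:hh-upper}: it sits inside an \verb|\ignore{...}| block (the entire ``Heavy hitters problem'' section is commented out), and the Conclusions section explicitly defers all population-recovery applications of the technique to ``a forthcoming work,'' pointing to Section~\ref{sec:improvement} as the relevant ingredient. So there is no in-paper proof against which to compare your write-up; the best I can do is assess your sketch on its own terms and against the paper's signalled intentions.

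Your high-level plan --- per-point mass estimation via a bounded, low-Fourier-degree ``local inverse'' $\phi_{x^*}$ of $T_\rho$, with candidate enumeration handled separately --- is the natural one and is consistent with the Moitra--Saks local-inverse framework that the paper already invokes in its (also commented-out) remarks on lossy population recovery. However, there are three substantive issues. \textbf{(1)} Your accuracy target $\E_{Y \sim T_\rho(x)}[\phi_{x^*}(Y)] = \mathbf{1}[x=x^*] \pm \epsilon \cdot 2^{-n}$ is both far stronger than needed and, with $\|\phi_{x^*}\|_\infty \le \exp(O((n/\rho)^{1/3}))$, unachievable: summing the pointwise error against $\pi$ and using $\sum_x \pi(x) = 1$ shows that a uniform $\pm \epsilon$ guarantee already yields $|\widehat{\pi}(x^*) - \pi(x^*)| \le \epsilon$, so the $2^{-n}$ factor should simply be dropped. \textbf{(2)} The appeal to $\kappa_{\bounded}^{\chill}(\rho,d)$ and the disk $D_\rho(1-\rho)$ is a mismatch of channels: Theorem~\ref{thm:Lit-chill-lower} arises from the deletion channel's M\"obius map $z \mapsto 1-\rho+\rho z$, whereas here the noise operator $T_\rho$ acts on Fourier levels by $\rho^{|S|}$, and the one-dimensional dual polynomial problem for a local inverse of $T_\rho$ lives over a different region (a Chebyshev/interval-type question, not a Littlewood-on-a-disk question). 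What actually transfers is the Mahler-measure / geometric-mean argument of Section~\ref{sec:improvement} (Fact~\ref{fact:mahler} and Theorem~\ref{thm:BE-like}), which the conclusion cites; your sketch treats Theorem~\ref{thm:Lit-chill-lower} as a black-box plug-in, and that step does not go through as written. \textbf{(3)} The BIMP13 subroutine costs $(1/\epsilon)^{\Theta(\log\log(1/\epsilon))}$, which is not $\poly(1/\epsilon)$, so strictly speaking it overshoots the claimed running time; this may be an artifact of the informal statement, but it is worth flagging. You correctly identify the remaining gap about effectivity of the dual construction, which is genuine and would need to be closed.
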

\begin{theorem}~\label{thm:hh-lower}
There are two distributions $\pi$ and $\mu$ on $\{0,1\}^n$ such that
$|\pi(0) - \mu(0)| =\Omega(1) $ but $\Vert T_{\rho}(\pi) - T_{\rho} (\mu) \Vert_1 \le 2^{-n^{1/3}/\rho^{1/3}}$.
\end{theorem}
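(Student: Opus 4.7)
The plan is to lift the Littlewood polynomial construction of Theorem~\ref{thm:Lit-upper} to the noise-channel setting. First, apply Theorem~\ref{thm:Lit-upper} to obtain a nonzero Littlewood polynomial $P(w) = \sum_{j=0}^{n-1} a_j w^j$ with $a_j \in \{-1,0,1\}$ and $\max_{w \in D_\rho(1-\rho)} |P(w)| \leq 2^{-\Omega((n/\rho)^{1/3})}$; after a preliminary normalization (multiplying by a power of $w$ and/or negating) we may assume $a_0 = 1$. Second, realize the coefficients $(a_j)$ as encoding a signed measure $\nu = \pi - \mu$ on $\{0,1\}^n$ whose image under $T_\rho$ has small $L^1$ norm. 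A natural first attempt is the product-Bernoulli construction $\nu = \sum_j a_j\,\pi_{\lambda_j}$, for which $(T_\rho\nu)(y)$ depends only on $|y|$ and has generating function $F(z) = \sum_j a_j\bigl((1-\lambda_j') + z\lambda_j'\bigr)^n$ with $\lambda_j' = (1-\rho)/2 + \rho\lambda_j$. The norm $\|T_\rho\nu\|_1$ equals the $\ell^1$ norm of the coefficients of $F$, which (by a Cauchy--Schwarz/Parseval argument analogous to Proposition~\ref{prop:cx}) is at most $\sqrt{n+1}\cdot\max_{|z|=1}|F(z)|$.

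The crux is to choose the parameters $\lambda_j$ (and possibly pass to a richer test family) so that, via a complex substitution analogous to $w = (1-\rho) + \rho z$ from the deletion-channel analysis, one obtains $|F(z)| \leq \max_{w\in D_\rho(1-\rho)}|P(w)|$ for $|z|=1$; this would yield $\|T_\rho(\pi-\mu)\|_1 \leq 2^{-\Omega((n/\rho)^{1/3})}$ as desired. To guarantee $|\pi(0^n)-\mu(0^n)| = \Omega(1)$, observe that $\nu(0^n) = \sum_j a_j(1-\lambda_j)^n$ receives a contribution of $a_0 = 1$ from $j=0$, while the remaining terms are suppressed by the Littlewood structure---in particular $|P(1)| \leq \max_{D_\rho(1-\rho)}|P|$ because $1 \in \partial D_\rho(1-\rho)$. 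Finally, I split $\nu = \nu^+ - \nu^-$ and add a common baseline distribution to produce legitimate probability distributions $\pi$ and $\mu$, with the $\Omega(1)$ gap at $0^n$ preserved after normalization.

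The main obstacle I anticipate is a parameter mismatch between the disk $D_\rho(1-\rho)$ in Theorem~\ref{thm:Lit-upper} and the circles $\partial D_u(1-u)$ (for $u = (1-\rho)/2 + \rho\lambda \in [(1-\rho)/2,(1+\rho)/2]$) that arise directly from product-Bernoulli distributions under $T_\rho$: these are not in general subsets of $D_\rho(1-\rho)$, so the naive product-distribution construction does not plug in directly. A more promising route is to use a non-product test family, for example signed combinations of point masses on the ``prefix'' strings $x^{(j)} = 1^j 0^{n-j}$, which under $T_\rho$ produce a multiplicative expression $f(y) = \sum_j a_j\,\xi^{\,j-2r_j(y)}$ (with $\xi = (1-\rho)/(1+\rho) \in D_\rho(1-\rho)$ and $r_j(y) = y_1 + \cdots + y_j$) that satisfies $f(0^n) = P(\xi)$, already bounded by the target. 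The technical heart is then to extend this pointwise bound to $\mathbb{E}_{y\sim \pi_q}|f(y)| \leq O(\max_{D_\rho(1-\rho)}|P|)$, which should follow from a generating-function argument that identifies the martingale values $\Gamma_j(y) = \xi^{j - 2 r_j(y)}$ with evaluations of $P$ at points of the closed disk $D_\rho(1-\rho)$ and applies the Maximum-Modulus Principle, mirroring the role Theorem~\ref{thm:Lit-upper} plays in the proof of Theorem~\ref{thm:main-negative}.
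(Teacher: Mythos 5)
The paper does not actually prove this statement: the entire ``Heavy hitters problem'' section, including Theorems~\ref{thm:hh-upper} and~\ref{thm:hh-lower}, is commented out inside an \texttt{ignore} block, and the conclusions note only that the population-recovery applications of the Littlewood-polynomial technique ``will appear in a forthcoming work.''  So there is no in-paper proof to compare against, and your proposal must stand on its own.  You correctly identify that the naive product-Bernoulli test family fails because the circles $\bdry D_{\lambda'}(1-\lambda')$ with $\lambda' \in [(1-\rho)/2,(1+\rho)/2]$ are not contained in $D_\rho(1-\rho)$.  But the prefix-string construction you pivot to has two gaps that look fatal as written.

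\emph{Normalization.}  Your signed measure $\nu = \sum_j a_j\delta_{x^{(j)}}$ has $\|\nu\|_1$ equal to the number of nonzero coefficients of the Littlewood polynomial, which for the Borwein--Erd\'elyi--K\'os construction underlying Theorem~\ref{thm:Lit-upper} is $\Theta((n/\rho)^{2/3})$.  Any way of writing $\pi - \mu = \nu/Z$ with $\pi,\mu$ genuine probability distributions forces $Z \geq \|\nu^+\|_1$, so $|\pi(0^n)-\mu(0^n)| = |a_0|/Z = O((\rho/n)^{2/3})$, not $\Omega(1)$.  You cannot sidestep this by choosing $a_0$ large and the other $|a_j|$ small with $\sum_j|a_j| = O(|a_0|)$, because then $|P(1)| \geq |a_0| - \sum_{j\geq 1}|a_j|$ is not exponentially small while $1 \in D_\rho(1-\rho)$.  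Constructing a signed measure whose $\ell^1$ mass is $O(1)$, whose value at $0^n$ is $\Omega(1)$, and whose image under $T_\rho$ is exponentially small requires a different arrangement of mass than signed unit weights on $\Theta((n/\rho)^{2/3})$ points, and your proposal does not supply one.
\emph{The expectation bound.}  Your target estimate $\E_{\by}|f(\by)| \leq O\bigl(\max_{D_\rho(1-\rho)}|P|\bigr)$ is asserted but not proved, and the Maximum-Modulus route you sketch does not apply: for a general $y$ the values $\Gamma_j(y) = \xi^{\,j-2r_j(y)}$ are not of the form $w^j$ for any fixed $w$, so $f(y) = \sum_j a_j\Gamma_j(y)$ is not an evaluation of $P$ at a point of the disk.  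The pointwise identity $f(0^n) = P(\xi)$ is the one place this works.  Moreover the multiplicative martingale $\Gamma_j$ is unbounded and has second moment
\[
\E[\Gamma_j^2] \;=\; \Bigl(p\xi^2 + q\xi^{-2}\Bigr)^{j} \;=\; \Bigl(\tfrac{1+3\rho^2}{1-\rho^2}\Bigr)^{\!j} \;\longrightarrow\; \infty,
\]
so a na\"ive second-moment or max-modulus argument cannot deliver the exponentially small bound you need.  Making either half of the argument rigorous would require a genuinely new idea (a stopping-time truncation of the martingale, or a different test family that both stays inside the disk and concentrates its mass at $0^n$).
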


}

\end{document}